\documentclass[11pt,a4paper]{article}

\usepackage{amsmath,amssymb,amsthm}
\usepackage{amsfonts}
\usepackage{mathtools}
\usepackage{multirow}
\usepackage{placeins}
\usepackage{array}
\usepackage{tikz}
\usepackage[margin=2.5cm]{geometry}
\usepackage[colorlinks=true,linkcolor=blue,citecolor=red,urlcolor=blue]{hyperref}
\allowdisplaybreaks

\newtheorem{theorem}{Theorem}[section]
\newtheorem{corollary}[theorem]{Corollary}
\newtheorem{lemma}[theorem]{Lemma}
\newtheorem{proposition}[theorem]{Proposition}
\theoremstyle{definition}
\newtheorem{definition}[theorem]{Definition}
\newtheorem{remark}[theorem]{Remark}
\newtheorem{example}[theorem]{Example}

\newcommand{\Z}{\mathbb{Z}}
\newcommand{\C}{{\mathcal C}}
\newcommand{\cH}{{\mathcal H}}
\newcommand{\cB}{{\mathcal B}}
\newcommand{\cZ}{{\mathcal Z}}
\newcommand{\cA}{{\mathcal A}}
\newcommand{\cP}{{\mathcal P}}
\newcommand{\zero}{{\mathbf{0}}}
\newcommand{\one}{{\mathbf{1}}}
\newcommand{\uu}{\mathbf{u}}
\newcommand{\vv}{\mathbf{v}}
\newcommand{\ww}{\mathbf{w}}
\newcommand{\xx}{\mathbf{x}}
\newcommand{\yy}{\mathbf{y}}
\newcommand{\zz}{\mathbf{z}}
\newcommand{\cc}{\mathbf{c}}
\newcommand{\ba}{\mathbf{a}}
\newcommand{\wt}{\mathrm{wt}}
\newcommand{\rank}{\operatorname{rank}}
\newcommand{\kernel}{\operatorname{ker}}
\newcommand{\ord}{\operatorname{o}}
\newcommand{\K}{\operatorname{K}}
\newcommand{\hH}{\widehat{H}}
\newcommand{\hcH}{\widehat{\mathcal H}}
\newcommand{\hA}{\widehat{A}}
\newcommand{\hw}{\widehat{\mathbf{w}}}
\newcommand{\hz}{\widehat{\mathbf{z}}}
\newcommand{\hu}{\widehat{\mathbf{u}}}
\newcommand{\hT}{\widehat{T}}
\newcommand{\hcZ}{\widehat{\mathcal Z}}

\begin{document}

\title{\textbf{Full classification of $\mathbb{Z}_{p^s}$-linear generalized\\
Hadamard codes}\thanks{This work has been granted by the Juan de la Cierva 2024 grant (JDC2024-053082-I), funded by Ministerio de Ciencia, Innovación y Universidades (MICIU/AEI/10.13039/501100011033) and co-funded by the European Social Fund Plus (FSE+).}}

\author{
Dipak K. Bhunia\\
\small Departament de Matem\`atiques,\\ \small Universitat Polit\`ecnica de Catalunya,\\
\small Barcelona, Spain
}
\date{}

\maketitle

\begin{abstract}
The $\mathbb{Z}_{p^s}$-additive codes are subgroups of $\mathbb{Z}_{p^s}^n$. A $\mathbb{Z}_{p^s}$-linear generalized Hadamard (GH) code is a GH code over $\mathbb{Z}_p$ which is the Gray map image of a $\mathbb{Z}_{p^s}$-additive code.  A recursive construction gives such codes $H_p^{t_1,\dots,t_s}$, of type $(n;t_1,\dots,t_s)$ and length $p^t$, and a partial classification of them is known. In this paper, we complete the classification of these codes for
every prime $p$.  The main tool is the derived code, obtained by puncturing
the code outside a block of coordinates determined by its kernel.  For a
nonlinear code $H_p^{t_1,\dots,t_s}$ with $t_1\geq2$, the derived code is a
$\mathbb{Z}_{p^{s-1}}$-linear GH code of the same family, although the mechanism for
odd $p$ differs from the binary one.  As a consequence, for $t\geq3$,
there are $\mathcal P(t+1)-\mathcal P(t)$ nonequivalent such codes of length $p^t$ if $p$
is odd, and $\mathcal P(t+1)-\mathcal P(t)-1$ if $p=2$, where $\mathcal P(n)$ is the number of
partitions of $n$.  In particular, this completes the classification of
the $\mathbb{Z}_{2^s}$-linear Hadamard codes.  We also prove a conjecture on the codes
with $t_1\geq2$ and $t_s=0$.
\end{abstract}

\medskip
\noindent\textbf{Keywords:} Generalized Hadamard code,
$\mathbb{Z}_{p^s}$-linear code, $\mathbb{Z}_{2^s}$-linear Hadamard code, Gray map, kernel,
derived code, equivalence, classification, partition.

\smallskip
\noindent\textbf{2020 Mathematics Subject Classification:} 94B25, 94B60.

\section{Introduction}\label{sec:intro}

Let $\Z_{p^s}$ be the ring of integers modulo $p^s$ with $p$ prime and
$s\geq1$.  The set of $n$-tuples over $\Z_{p^s}$ is denoted by $\Z_{p^s}^n$,
and its elements are also called vectors.  The order of a vector
$\uu\in\Z_{p^s}^n$, denoted by $\ord(\uu)$, is the smallest positive integer
$k$ such that $k\uu=\zero$.  The vectors $\zero$ and $\one$ are the all-zero
and the all-one vectors, the length being always clear from the context.  A
code over $\Z_p$ of length $N$ is a nonempty subset of $\Z_p^N$, and it is
linear if it is a subspace of $\Z_p^N$.  Similarly, a nonempty subset of
$\Z_{p^s}^n$ is a $\Z_{p^s}$-additive code if it is a subgroup of
$\Z_{p^s}^n$.  Note that, when $s=1$, a $\Z_{p^s}$-additive code is a linear
code over $\Z_p$.

The Hamming weight of a vector $\uu\in\Z_p^N$, denoted by $\wt_H(\uu)$, is
the number of its nonzero coordinates.  The Hamming distance of two vectors
$\uu,\vv\in\Z_p^N$ is $d_H(\uu,\vv)=\wt_H(\vv-\uu)$, and the minimum distance
of a code $C$ over $\Z_p$ is $d(C)=\min\{d_H(\uu,\vv):\uu,\vv\in C,\
\uu\neq\vv\}$.  Let $S_N$ be the symmetric group of permutations on
$\{1,\dots,N\}$, and let $\Z_p^*=\Z_p\setminus\{0\}$.  For $\pi\in S_N$ and
$\ba=(a_1,\dots,a_N)\in(\Z_p^*)^N$, we define the monomial map
$\mu_{\pi,\ba}$ of $\Z_p^N$ by $\mu_{\pi,\ba}(\xx)_{\pi(i)}=a_ix_i$ for
$1\leq i\leq N$.  That is, the entry of $\xx$ at the position $i$ is
multiplied by $a_i$ and moved to the position $\pi(i)$.  When $\ba=\one$, we
simply write $\pi$ for $\mu_{\pi,\one}$.  Two codes $C_1$ and $C_2$ over
$\Z_p$ of length $N$ are said to be equivalent if there are a vector
$\yy\in\Z_p^N$ and a monomial map $\mu$ such that
$$
C_2=\{\yy+\mu(\xx):\ \xx\in C_1\}.
$$
They are said to be permutation equivalent if moreover $\yy=\zero$ and $\mu$ is
a permutation.  For $p=2$, every monomial map is a permutation, and we obtain
the usual notion of equivalence of binary codes.  In \cite{HadamardZps}, two
codes are called equivalent when they differ by a monomial map, and in
\cite{EquivZ2s,ZpsEquivalence} when they differ by a translation and a
permutation.  Both notions are particular cases of the one above.  Therefore,
the nonequivalence results of this paper hold for each of them, while all the
equivalences that we establish are permutation equivalences.

Two structural properties of codes over $\Z_p$ are the rank and the dimension
of the kernel.  The rank of a code $C$ over $\Z_p$, denoted by $\rank(C)$, is
the dimension of its linear span $\langle C\rangle$.  Following
\cite{BGH83,pKernel}, the kernel of $C$ is
$$
\K(C)=\{\xx\in\Z_p^N:\ \xx+C=C\},
$$
and its dimension is denoted by $\kernel(C)$.  If $\zero\in C$, then $\K(C)$
is a linear subcode of $C$.  Note also that, if $C$ is linear, then
$\K(C)=C=\langle C\rangle$.  By Lemma \ref{lem:invariance}, two equivalent
codes containing $\zero$ have the same rank and the same dimension of the
kernel.  Hence, these parameters can be used to distinguish between
nonequivalent codes.

A generalized Hadamard (GH) matrix $H(p,\lambda)=(h_{ij})$ of order
$p\lambda$ over $\Z_p$ is a $p\lambda\times p\lambda$ matrix with entries
from $\Z_p$ such that, for every $i$ and $j$ with $1\leq i<j\leq p\lambda$,
the multiset $\{h_{ik}-h_{jk}:1\leq k\leq p\lambda\}$ contains every element
of $\Z_p$ exactly $\lambda$ times \cite{jungnickel1979}.  These matrices are
also known as Butson Hadamard matrices \cite{Butson}, and a GH matrix
$H(2,\lambda)$ over $\Z_2$ corresponds to an ordinary Hadamard matrix of
order $2\lambda$ \cite{Key}.  Two GH matrices are equivalent if one of them
can be obtained from the other by permuting rows and columns and by adding
the same element of $\Z_p$ to all the entries of a row or of a column.
Hence, every GH matrix is equivalent to a normalized one, that is, to one
whose first row and first column contain only zeros.  If $H$ is a normalized
GH matrix and $F_H$ is the set of its rows, then the code
$$
C_H=\bigcup_{\alpha\in\Z_p}(F_H+\alpha\one)
$$
is called a generalized Hadamard (GH) code over $\Z_p$
\cite{dougherty2015ranks}.  A GH code over $\Z_p$ of length $N$ has $pN$
codewords and minimum distance $N(p-1)/p$.  For $p=2$, GH codes are the
usual binary Hadamard codes.

Linear codes over $\Z_{p^s}$ were studied by Blake \cite{Blake} and Shankar
\cite{Shankar}.  Interest in their images under Gray maps increased after it
was established in \cite{Sole} that the Kerdock and Preparata codes can be
represented as Gray images of linear codes over $\Z_4$.  The Gray map from
$\Z_4$ to $\Z_2^2$ given in \cite{Sole} has several generalizations, which
go from $\Z_{2^s}$ to $\Z_2^{2^{s-1}}$
\cite{Carlet,Codes2k,dougherty,Krotov:2007,Nechaev} and, more generally, from
$\Z_{p^s}$ to $\Z_p^{p^{s-1}}$ \cite{GrayIsometry,ShiKrotov2019}.  As in
\cite{KernelZ2s,HadamardZps,ZpsEquivalence}, we use Carlet's Gray map
\cite{Carlet} and its generalization given in \cite{GrayIsometry}, which is
recalled in Section \ref{sec:prelim} and denoted by $\phi_s$.  We write
$\Phi_s:\Z_{p^s}^n\rightarrow\Z_p^{np^{s-1}}$ for its component-wise
extension.  Let $\C$ be a $\Z_{p^s}$-additive code of length $n$.  Its image
$C=\Phi_s(\C)$ is called a $\Z_{p^s}$-linear code, and it has length
$p^{s-1}n$.  As a subgroup of $\Z_{p^s}^n$, the code $\C$ is isomorphic to
$$
\Z_{p^s}^{t_1}\times\Z_{p^{s-1}}^{t_2}\times\cdots\times\Z_p^{t_s}
$$
for some nonnegative integers $t_1,\dots,t_s$, and we say that $\C$, or
equivalently $C$, is of type $(n;t_1,\dots,t_s)$.  Note that
$|\C|=p^{st_1}p^{(s-1)t_2}\cdots p^{t_s}$, and that $\C$ has a generator
matrix with $t_1+\cdots+t_s$ rows, exactly $t_i$ of which have order
$p^{s-i+1}$ for $1\leq i\leq s$.  If $\Phi_s(\C)$ is a GH code, then $\C$ is
called a $\Z_{p^s}$-additive GH code, and $\Phi_s(\C)$ is called a
$\Z_{p^s}$-linear GH code.  For $p=2$, these codes are called
$\Z_{2^s}$-linear Hadamard codes.

A recursive construction of $\Z_{p^s}$-additive GH codes is given in
\cite{KernelZ2s} for $p=2$ and in \cite{HadamardZps} for any prime $p$.
For nonnegative integers $t_1,\dots,t_s$ with $t_1\geq1$, it provides a
$\Z_{p^s}$-additive GH code $\cH_p^{t_1,\dots,t_s}$ of type
$(n;t_1,\dots,t_s)$.  Its image $H_p^{t_1,\dots,t_s}=
\Phi_s(\cH_p^{t_1,\dots,t_s})$ is a $\Z_{p^s}$-linear GH code of length
$p^t$, where
\begin{equation}\label{eq:length}
t+1=\sum_{i=1}^{s}(s-i+1)t_i,
\end{equation}
and we call $(t_1,\dots,t_s)$ the parameters of this code.  For $p=2$, every
$\Z_{2^s}$-linear Hadamard code is equivalent to one of these codes
\cite{Krotov:2007,KernelZ2s}.  As in \cite{HadamardZps,ZpsEquivalence}, by the
$\Z_{p^s}$-linear GH codes of length $p^t$ we mean the codes
$H_p^{t_1,\dots,t_s}$ of that length, with $s\geq1$.  Let $\cA_{t,s,p}$ be the
number of nonequivalent such codes for a fixed $s\geq2$, and let $\cA_{t,p}$ be
the number of nonequivalent such codes when $s\geq2$ is allowed to vary.  Let
also $X_{t,s,p}$ be the number of nonnegative integer solutions
$(t_1,\dots,t_s)$ of \eqref{eq:length} with $t_1\geq1$, and let
$\widetilde X_{t,s,p}$ be the number of those solutions with $t_1\geq2$ for
which $H_p^{t_1,\dots,t_s}$ is nonlinear.

The $\Z_4$-linear Hadamard codes of length $2^t$ are classified by either the
rank or the dimension of the kernel, and there are exactly
$\lfloor(t-1)/2\rfloor$ nonequivalent such codes for any $t\geq3$
\cite{Kro:2001:Z4_Had_Perf,PRV06}.  For $p=2$ and $s\geq3$, the linearity
and the kernel of the codes $H_p^{t_1,\dots,t_s}$ are determined in
\cite{KernelZ2s}.  For odd $p$ and $s\geq2$, they are determined in
\cite{HadamardZps}.  For $p=2$, the rank is computed in \cite{PRV06} for
$s=2$ and in \cite{fernandez2019mathbb} for $s=3$.  In the other cases, the
rank has been computed with \textsc{Magma} \cite{Magma} for small lengths
\cite{KernelZ2s,HadamardZps,ZpsEquivalence}.  In
\cite{KernelZ2s,HadamardZps}, it is also shown that
$\cA_{t,s,2}\leq X_{t,s,2}-1$ for $2\leq s\leq t-1$, and that
$\cA_{t,s,p}\leq X_{t,s,p}$ for odd $p$.  The dimension of the kernel shows
that these bounds are tight for all $t\geq3$ and $2\leq s\leq t-1$ outside
the range given by $t\geq8$ and $3\leq s\leq t-5$.  In that range, the
dimension of the kernel does not classify the codes in general.  It is
conjectured in \cite[Section 7]{HadamardZps} that these bounds are always
tight, and that the rank classifies the codes.  In \cite{EquivZ2s} for
$p=2$, and in \cite{ZpsEquivalence} for any prime $p$, it is shown that some
of these codes with different values of $s$ are permutation equivalent.  This
gives the upper bound
\begin{equation}\label{eq:old-bound}
\cA_{t,p}\leq 1+\sum_{s=2}^{\lfloor(t+1)/2\rfloor}\widetilde X_{t,s,p}
\end{equation}
of \cite{EquivZ2s} and \cite[Theorem 10]{ZpsEquivalence}.  By the
computations of the rank mentioned above, this bound is tight for
$3\leq t\leq11$ if $p=2$, for $3\leq t\leq10$ if $p=3$, and for
$3\leq t\leq8$ if $p=5$.  It is also conjectured in
\cite[Section 3]{ZpsEquivalence} that a code $H_p^{t_1,\dots,t_s}$ with
$t_1\geq2$ and $t_s=0$ is not equivalent to any other code
$H_p^{t'_1,\dots,t'_{s'}}$ of the same length.
Finally, it is pointed out in \cite[Section 5]{ZpsEquivalence} that computing
the rank of the codes with $t_1\geq2$ and $2\leq s\leq\lfloor(t+1)/2\rfloor$
would be enough to obtain the full classification.

Our approach is based on an invariant computed on a part of the code, and it
does not need the rank except in one binary case.  Let $C$ be a code over
$\Z_p$ containing $\zero$.  We say that two coordinates of $C$ lie in the
same kernel block when the vectors of $\K(C)$ which vanish at one of them are
exactly those which vanish at the other one.  For $p=2$, this means that
every vector of $\K(C)$ takes the same value at both coordinates, which is
the definition given in \cite{Z2Z4Z8Profile}.  For odd $p$, we use the vanishing
condition because it is compatible with monomial maps.  Puncturing $C$
outside a set of coordinates means deleting all the coordinates outside this
set.  When the codes obtained by puncturing $C$ outside each of its kernel
blocks are pairwise equivalent, any of them is called the derived code of
$C$.  It is an equivalence invariant of $C$.  When the derived code has the
same property, it can be derived again.  For $p=2$, it is shown in
\cite{OthersProfile} that the kernel blocks of a nonlinear code
$H_2^{t_1,\dots,t_s}$ with $s\geq2$ and $t_1\geq2$ are interchangeable, and
that the code punctured outside any of them is permutation equivalent to
$H_2^{t_1,\dots,t_{s-1}}$.

In this paper, we obtain the full classification of the $\Z_{p^s}$-linear GH
codes for every prime $p$.  We treat all primes together, and we point out
where the binary case and the case of odd primes differ.  We show that the
derived code of a nonlinear code $H_p^{t_1,\dots,t_s}$ with $s\geq2$ and
$t_1\geq2$ is a $\Z_{p^{s-1}}$-linear GH code of the same family.  Since this
descent can be iterated, the classification follows by induction on the length.
However, the mechanism depends on the prime.  For $p=2$, the kernel of such a
code contains a vector which splits each group of $2^{s-1}$ coordinates coming
from one coordinate of the $\Z_{2^s}$-additive code into two halves
\cite{KernelZ2s}, and the derived code is $H_2^{t_1,\dots,t_{s-1}}$.  The
induction then closes in every case except one, in which a $\Z_4$-linear
Hadamard code is compared with a $\Z_8$-linear Hadamard code whose first
parameter is $2$.  In that case, both derived codes are linear, and the known
formulas for the rank decide.  For odd $p$, every vector of the kernel is
constant on each such group \cite{HadamardZps}, and each kernel block is a
union of these groups.  As a consequence, the derived code of
$H_p^{t_1,\dots,t_s}$ is $H_p^{t_1,\dots,t_{s-2},t_{s-1}+1}$.  Compared with
$H_p^{t_1,\dots,t_{s-1}}$, its $\Z_{p^{s-1}}$-additive code has one more
generator of order $p$.  Moreover, for odd $p$, all the codes
$H_p^{t_1,\dots,t_s}$ with $s\geq2$ and $t_1\geq2$ are nonlinear
\cite{HadamardZps}.  Hence, the exceptional case does not occur, and no rank is
needed.  In fact, for odd $p$, the parameters of such a code can be recovered
from the dimensions of the kernels of its successive derived codes.  Figure
\ref{fig:blocks} shows the kernel blocks of a small code for $p=3$ and for
$p=2$.

The main result is Theorem \ref{thm:main}, which states that two codes
$H_p^{t_1,\dots,t_s}$ and $H_p^{t'_1,\dots,t'_{s'}}$ of the same length, with
$s,s'\geq2$ and $t_1,t'_1\geq2$, are equivalent if and only if their
parameters are equal.  We deduce that the bound \eqref{eq:old-bound} is tight
for every prime $p$ and every $t\geq3$.  Moreover, we obtain
$$
\cA_{t,p}=\begin{cases}
\cP(t+1)-\cP(t) & \text{if $p$ is odd},\\
\cP(t+1)-\cP(t)-1 & \text{if } p=2,
\end{cases}
$$
where $\cP(n)$ denotes the number of partitions of $n$, that is, the number
of ways of writing $n$ as a sum of positive integers without regard to order.
In particular, we obtain the full classification of the $\Z_{2^s}$-linear
Hadamard codes.  The parameters $(t_1,\dots,t_s)$ of the codes of length $p^t$
with $t_1\geq2$ correspond to the $\cP(t+1)-\cP(t)$ partitions of $t+1$ with
all parts at least $2$.  The partition of the derived code is obtained from
that of the code by deleting the largest part and, if $p$ is odd, by adding
$1$ to the new largest part.  Figure \ref{fig:young} shows two examples.  Each
equivalence class of the codes of length $p^t$ contains exactly one of these
codes, except that, for $p=2$, the two codes given by $(t+1)$ and $(t-1,2)$
are equivalent.  Figure \ref{fig:classes} shows the classes for $t=6$.  We
also show that $\cA_{t,s,p}=X_{t,s,p}$ if $p$ is odd, and that
$\cA_{t,s,2}=X_{t,s,2}-1$ for $2\leq s\leq t-1$.  This proves the equality
conjectured in \cite[Section 7]{HadamardZps}, although not through the rank.
Finally, we prove the conjecture of \cite[Section 3]{ZpsEquivalence} on the
codes with $t_1\geq2$ and $t_s=0$ for every prime $p$.  For $p=2$, a single
linear code has to be excluded.

The paper is organized as follows.  In Section \ref{sec:prelim}, we recall the
Gray map, the construction of the codes $H_p^{t_1,\dots,t_s}$, and the known
results on their linearity, kernel, rank and equivalences.  In Section
\ref{sec:reduced}, we reduce the classification to the codes with $s\geq2$
and $t_1\geq2$.  In Section \ref{sec:derived}, we introduce the kernel blocks
and the derived code of a code over $\Z_p$.  In Section \ref{sec:blocks}, we
determine the kernel blocks of the codes with $s\geq2$ and $t_1\geq2$, and we
show that they are interchangeable.  In Section \ref{sec:descent}, we
identify the derived code of these codes.  In Section
\ref{sec:classification}, we establish the classification.  In Section
\ref{sec:counting}, we describe the codes and their derived codes in terms of
partitions, and we determine the exact number of nonequivalent codes.
Finally, in Section \ref{sec:conclusions}, we give some conclusions and
further research on this topic.

\section{Preliminaries}\label{sec:prelim}

In this section, we recall the background on the codes
$H_p^{t_1,\dots,t_s}$ that we will use.  From now on, $p$ is a fixed prime.
We write $\cH^{t_1,\dots,t_s}$ and $H^{t_1,\dots,t_s}$ instead of
$\cH_p^{t_1,\dots,t_s}$ and $H_p^{t_1,\dots,t_s}$ when no confusion can
arise.

For $u\in\Z_{p^s}$, we write $[u_0,u_1,\dots,u_{s-1}]_p$ for its $p$-ary
expansion, that is,
$$
u=\sum_{i=0}^{s-1}u_ip^i,
$$
where $u_i\in\{0,\dots,p-1\}$.  We call $u_i$ the digit of index $i$ of
$u$, and we regard it as an element of $\Z_p$ whenever it appears in a
computation modulo $p$.  The Gray map
$\phi_s:\Z_{p^s}\rightarrow\Z_p^{p^{s-1}}$ considered in
\cite{KernelZ2s,HadamardZps,ZpsEquivalence} is
$$
\phi_s(u)=(u_{s-1},u_{s-1},\dots,u_{s-1})+(u_0,\dots,u_{s-2})Y_{s-1},
$$
where $Y_{s-1}$ is a matrix of size $(s-1)\times p^{s-1}$ whose columns are
all the vectors of $\Z_p^{s-1}$ \cite{GrayIsometry}.  For $p=2$, it is
Carlet's Gray map \cite{Carlet}.  We index the $p^{s-1}$ coordinates of
$\phi_s(u)$ by the columns of $Y_{s-1}$, that is, by the vectors
$h=(h_0,\dots,h_{s-2})\in\Z_p^{s-1}$.  With this convention, the definition
of $\phi_s$ becomes
\begin{equation}\label{eq:gray}
\phi_s(u)_h=u_{s-1}+\sum_{i=0}^{s-2}h_iu_i,
\end{equation}
which is the form we will always use.  For $s=1$, the vector $h$ is empty,
and $\phi_1$ is the identity map of $\Z_p$.  As in Section \ref{sec:intro}, we
denote by $\Phi_s$ the component-wise extension of $\phi_s$.

We will use the following property \cite[Lemma 1]{HadamardZps}.  Let
$u\in\Z_{p^s}$ and $\lambda\in\Z_p$.  Then,
\begin{equation}\label{eq:top}
\phi_s\bigl(u+\lambda p^{s-1}\bigr)=\phi_s(u)+\lambda\one.
\end{equation}
In particular, $\phi_s(\lambda p^{s-1})=\lambda\one$.

The next lemma compares $\phi_s$ with $\phi_{s-1}$.  It is the main
computation behind the descent of Section \ref{sec:descent}.  For
$W'\in\Z_{p^{s-1}}$, we compute $pW'\in\Z_{p^s}$ from any integer
representative of $W'$.  This element is well defined, since two such
representatives differ by a multiple of $p^{s-1}$.

\begin{lemma}\label{lem:shift}
Let $s\geq2$, let $\varepsilon\in\{0,\dots,p-1\}$ and $W'\in\Z_{p^{s-1}}$,
and let $W=\varepsilon+pW'\in\Z_{p^s}$.  Then, for every
$h=(h_0,h_1,\dots,h_{s-2})\in\Z_p^{s-1}$, we have
$$
\phi_s(W)_h=\phi_{s-1}(W')_{h'}+\varepsilon h_0, \quad \text{where }\ h'=(h_1,\dots,h_{s-2}).
$$
\end{lemma}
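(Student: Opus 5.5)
The plan is to compute the $p$-ary expansion of $W$ directly from that of $W'$ and then compare the two sides using the formula \eqref{eq:gray}. First fix the representative: let $W'=[w'_0,\dots,w'_{s-2}]_p$ with digits in $\{0,\dots,p-1\}$, and take this representative $\sum_{j=0}^{s-2}w'_jp^j\in\{0,\dots,p^{s-1}-1\}$ to compute $pW'$. This is allowed by the remark preceding the lemma. Then
$$
\varepsilon+pW'=\varepsilon+\sum_{j=0}^{s-2}w'_jp^{j+1}
$$
is an integer in $\{0,\dots,p^s-1\}$. Since $0\leq\varepsilon\leq p-1$, this expression is already the $p$-ary expansion of $W$, so no reduction modulo $p^s$ and no carries occur. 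Hence the digits of $W$ are $W_0=\varepsilon$ and $W_i=w'_{i-1}$ for $1\leq i\leq s-1$.

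Next, substitute these digits into \eqref{eq:gray}:
$$
\phi_s(W)_h=W_{s-1}+\sum_{i=0}^{s-2}h_iW_i=w'_{s-2}+h_0\varepsilon+\sum_{i=1}^{s-2}h_iw'_{i-1}.
$$
Reindexing with $h'=(h'_0,\dots,h'_{s-3})=(h_1,\dots,h_{s-2})$, that is $h'_j=h_{j+1}$, the remaining sum becomes $\sum_{j=0}^{s-3}h'_jw'_j$. Together with $w'_{s-2}$, this is exactly $\phi_{s-1}(W')_{h'}$ by \eqref{eq:gray} applied with $s-1$ in place of $s$. The identity then holds in $\Z_p$, where digits are read modulo $p$.

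The only step needing care is the degenerate case $s=2$. There $h'$ is empty and $\phi_1$ is the identity, so the sum over $h'$ is empty and $\phi_1(W')=w'_0$, which agrees with the general computation. There is no real obstacle beyond checking the absence of carries, which is immediate from $\varepsilon<p$.
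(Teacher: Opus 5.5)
Your proposal is correct and follows the paper's proof essentially step for step. You read off the digits $W_0=\varepsilon$ and $W_i=W'_{i-1}$, with no carry because $\varepsilon\leq p-1$, then substitute into \eqref{eq:gray} and reindex $h'_j=h_{j+1}$, and you handle the degenerate case $s=2$ the same way the paper does.
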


\begin{proof}
If $s=2$, then $h'$ is empty, and $\phi_1(W')_{h'}$ means $\phi_1(W')=W'$.
Let $W'_0,\dots,W'_{s-2}$ be the digits of $W'$.  Since
$0\leq\varepsilon\leq p-1$, the digits of $W$ are $W_0=\varepsilon$ and
$W_\nu=W'_{\nu-1}$ for $1\leq\nu\leq s-1$.  Therefore, by \eqref{eq:gray} for
the alphabets $\Z_{p^s}$ and $\Z_{p^{s-1}}$, we obtain
$$
\phi_s(W)_h=W_{s-1}+h_0W_0+\sum_{\nu=1}^{s-2}h_\nu W_\nu
=W'_{s-2}+\sum_{k=0}^{s-3}h_{k+1}W'_k+\varepsilon h_0
=\phi_{s-1}(W')_{h'}+\varepsilon h_0,
$$
which proves the statement.
\end{proof}

We now recall the construction of \cite{KernelZ2s,HadamardZps}.  For $s\geq1$
and $1\leq i\leq s$, let
\begin{equation}\label{eq:Ti}
T_i=\bigl\{k\cdot p^{i-1}:\ 0\leq k\leq p^{s-i+1}-1\bigr\}=p^{i-1}\Z_{p^s}.
\end{equation}
Note that $T_1=\Z_{p^s}$, and that $T_i$ is the ideal of $\Z_{p^s}$ formed by
the elements of order at most $p^{s-i+1}$.  Let $t_1,\dots,t_s$ be
nonnegative integers with $t_1\geq1$, and put $m=t_1+\cdots+t_s$.  Let
$A^{t_1,\dots,t_s}$ be the matrix with $m$ rows whose columns are exactly,
each of them once, all the vectors $\zz^T$ with
\begin{equation}\label{eq:columns}
\zz\in\{1\}\times T_1^{t_1-1}\times T_2^{t_2}\times\cdots\times T_s^{t_s}.
\end{equation}
The order of the columns only affects the order of the coordinates of the
resulting codes, and we fix one order.  We denote the rows of
$A^{t_1,\dots,t_s}$ by $\ww_1,\dots,\ww_m$.  The level of the row $\ww_i$ is
the index $j_i$ such that
$$
t_1+\cdots+t_{j_i-1}<i\leq t_1+\cdots+t_{j_i}.
$$
Thus, the first $t_1$ rows have level $1$, the next $t_2$ rows have level
$2$, and so on.  Since the first entry of every column is $1$, we have
$\ww_1=\one$, whose order is $p^s$.  For $i\geq2$, the entries of $\ww_i$ run
over $T_{j_i}$, which contains $p^{j_i-1}$.  Hence,
$\ord(\ww_i)=p^{s-j_i+1}$ for $1\leq i\leq m$, and $A^{t_1,\dots,t_s}$ has
exactly $t_j$ rows of order $p^{s-j+1}$ for $1\leq j\leq s$.

Let $\cH^{t_1,\dots,t_s}$ be the $\Z_{p^s}$-additive code generated by the
rows of $A^{t_1,\dots,t_s}$, and let
$H^{t_1,\dots,t_s}=\Phi_s(\cH^{t_1,\dots,t_s})$.  The code
$\cH^{t_1,\dots,t_s}$ is of type $(n;t_1,\dots,t_s)$
\cite{KernelZ2s,HadamardZps}.  Its length is the number of columns, that is,
$$
n=\prod_{i=2}^{m}|T_{j_i}|=\prod_{i=2}^{m}p^{s-j_i+1}.
$$
Since there are $t_j$ rows of level $j$, and since $j_1=1$, we obtain
$n=p^{t+1-s}$, where $t$ is given by \eqref{eq:length}.  Both
$\cH^{t_1,\dots,t_s}$ and $H^{t_1,\dots,t_s}$ contain $\zero$.

\begin{theorem}[\cite{KernelZ2s,HadamardZps}]\label{thm:GH}
Let $s\geq1$, and let $t_1,\dots,t_s$ be nonnegative integers with
$t_1\geq1$.  Then, $H^{t_1,\dots,t_s}$ is a GH code over $\Z_p$ of length
$p^t$, with $t$ given by \eqref{eq:length}.
\end{theorem}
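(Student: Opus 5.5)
The plan is to reduce the GH property to a character-sum computation on the $\Z_{p^s}$-additive code $\cH=\cH^{t_1,\dots,t_s}$. Since $\cH$ contains $\zero$ and is closed under adding $\lambda p^{s-1}\one$, and since $\phi_s(u+\lambda p^{s-1})=\phi_s(u)+\lambda\one$ by \eqref{eq:top}, the image $H=\Phi_s(\cH)$ is a union of cosets $F+\alpha\one$. Here $F=\Phi_s(\cH_0)$, and $\cH_0$ is a transversal of $\langle p^{s-1}\one\rangle$ in $\cH$. The length of $H$ is $p^{s-1}n=p^t$, and $|\cH|=p^{st_1+(s-1)t_2+\cdots+t_s}=p^{t+1}$, so $H$ has the right number $p\cdot p^t$ of codewords, provided $\Phi_s$ is injective (it is, since $\phi_s$ is injective). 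It therefore suffices to show the following: for any two distinct codewords $\uu,\vv\in\cH$ with $\vv-\uu\notin\langle p^{s-1}\one\rangle$, the difference $\Phi_s(\vv)-\Phi_s(\uu)$ takes every value of $\Z_p$ exactly $p^{t-1}$ times. This is the defining property of the normalized GH matrix formed by the rows of $F$ (a zero row and a zero column come from $\zero$ and from the coordinate where all generators except $\ww_1$ vanish, up to translation).

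The first difficulty is that $\phi_s$ is not additive, so $\Phi_s(\vv)-\Phi_s(\uu)\neq\Phi_s(\vv-\uu)$ in general. I would instead use the known distance-invariance of Carlet's generalized Gray map \cite{GrayIsometry}: the multiset $\{\phi_s(b)_h-\phi_s(a)_h\}_h$ depends only on $b-a$. More precisely, the composition of $\phi_s(\cdot+a)-\phi_s(a)$ with a coordinate permutation depends only on $b-a$. Granting this, it is enough to show that for every nonzero $\ww\in\cH\setminus\langle p^{s-1}\one\rangle$, the vector $\Phi_s(\ww)$ is \emph{balanced}, that is, it takes each value of $\Z_p$ exactly $p^{t-1}$ times.

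For balance, I would write $\ww=\sum_i\lambda_i\ww_i$. The coordinate of $\ww$ indexed by the column $\zz$ in \eqref{eq:columns} is the linear form $\ell(\zz)=\lambda_1+\sum_{i\geq2}\lambda_i z_i$, with $z_i$ ranging over $T_{j_i}$. As $\zz$ ranges over the product set, this is a sum of independent uniform variables on the subgroups $\lambda_iT_{j_i}$. So $\ell$ is uniformly distributed on a coset $\lambda_1+G$ of the subgroup $G=\sum_{i\geq2}\lambda_iT_{j_i}$ of $\Z_{p^s}$, each element hit equally often. Then I split into cases.

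\textbf{Case $G\neq\{0\}$.} Then $G=p^k\Z_{p^s}$ with $k\leq s-1$, and the multiset of $\phi_s$-images of a full coset of $p^k\Z_{p^s}$, with $k\le s-1$, is balanced over the $p^{s-1}$ coordinates. I would check this directly from \eqref{eq:gray}: varying the digit $u_{s-1}$ (which is free, since $p^{s-1}\in G$) shifts every coordinate uniformly through $\Z_p$.

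\textbf{Case $G=\{0\}$.} Then $\ww=\lambda_1\one$ with $\lambda_1\notin p^{s-1}\Z_{p^s}$. Some digit $u_i$ with $i\leq s-2$ is nonzero, so $h\mapsto u_{s-1}+\sum h_iu_i$ is a nonconstant affine function on $\Z_p^{s-1}$, hence balanced.

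The main obstacle is the step replacing the difference $\Phi_s(\vv)-\Phi_s(\uu)$ by $\Phi_s(\vv-\uu)$. If the invariance lemma of \cite{GrayIsometry} is not available in the form needed, I would instead prove the balance of $\phi_s(a+x)-\phi_s(a)$ directly as $x$ runs over the coset structure. I would do this by digit-by-digit carry analysis, using Lemma \ref{lem:shift} inductively on $s$ to peel off the lowest digit.
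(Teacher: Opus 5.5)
Your argument is correct. The paper does not prove this theorem itself; it quotes it from \cite{KernelZ2s,HadamardZps}. So your proposal gives a self-contained proof rather than reproducing one from the text. The following parts are all sound: the reduction to the balance of $\Phi_s(\vv)-\Phi_s(\uu)$, the equidistribution of $\ww(\zz)$ over a coset of $G=\sum_{i\geq2}\lambda_iT_{j_i}$, and your two cases.

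The step you flag as the main obstacle is in fact immediate. It needs neither \cite{GrayIsometry} nor a carry analysis through Lemma \ref{lem:shift}. By \eqref{eq:gray},
\[
\phi_s(b)_h-\phi_s(a)_h=(b_{s-1}-a_{s-1})+\sum_{i=0}^{s-2}h_i(b_i-a_i),
\]
which is an affine function of $h\in\Z_p^{s-1}$. Its linear part vanishes if and only if $a$ and $b$ have the same digits of index at most $s-2$, that is, if and only if $b-a\in p^{s-1}\Z_{p^s}$. In that case $b-a=(b_{s-1}-a_{s-1})p^{s-1}$, and the difference is exactly $\phi_s(b-a)$. Otherwise it is a nonconstant affine function, so it takes every value $p^{s-2}$ times, as $\phi_s(b-a)$ does. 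This is precisely the fiberwise multiset statement you need.

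Gray isometry results are usually stated only for the Hamming distance, which for odd $p$ is weaker than your multiset version. It would still suffice here, because $H$ is closed under adding $\alpha\one$. Apply the distance bound to $\Phi_s(\uu)$ and $\Phi_s(\vv)-\alpha\one$ for each $\alpha$. This shows that the difference takes each value at most $p^{t-1}$ times, hence exactly $p^{t-1}$ times.

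Two minor points should be made precise.
\begin{itemize}
\item For the normalization, choose the transversal $\cH_0$ so that all its images vanish at one fixed coordinate. This is possible because adding $\alpha p^{s-1}\one$ adds $\alpha$ to every coordinate.
\item You can check $|\cH|=p^{t+1}$ directly. Evaluate a vanishing combination $\sum x_i\ww_i$ at $\zz=(1,0,\dots,0)$, and at the columns obtained from it by putting $p^{j_i-1}$ in position $i$. This gives $x_1=0$ and $x_i\ww_i=\zero$ for every $i$.
\end{itemize}
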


The case $s=1$ is degenerate but useful.  In that case, $T_1=\Z_p$ and
$\phi_1$ is the identity.  Hence, $H^{t_1}$ is the linear code over $\Z_p$ of
length $p^{t_1-1}$ generated by the matrix whose columns are all the vectors
of $\{1\}\times\Z_p^{t_1-1}$.  It is the first order generalized Reed--Muller
code, and it has dimension $t_1$ \cite{Key}.  For $p=2$, it is the first order
Reed--Muller code, and every binary linear Hadamard code of length
$2^{t_1-1}$ is permutation equivalent to it \cite{WMcwill}.

Every codeword of $\cH^{t_1,\dots,t_s}$ is a combination
$\uu=\sum_{i=1}^{m}x_i\ww_i$ with $x_i\in\Z_{p^s}$.  Let $\cZ$ be the set of
columns \eqref{eq:columns}, which we identify with the set of coordinates of
$\cH^{t_1,\dots,t_s}$, and let $\uu(\zz)$ be the entry of $\uu$ at
$\zz\in\cZ$.  Then,
\begin{equation}\label{eq:entry}
\uu(\zz)=\sum_{i=1}^{m}x_iz_i.
\end{equation}
Each $\zz\in\cZ$ produces $p^{s-1}$ coordinates of $H^{t_1,\dots,t_s}$, which
we index by the vectors $h\in\Z_p^{s-1}$ as above.  We write $(\zz,h)$ for
the corresponding coordinate, and we call
$$
F_\zz=\{(\zz,h):\ h\in\Z_p^{s-1}\}
$$
the fiber of $\zz$.  By \eqref{eq:gray}, the codeword $\Phi_s(\uu)$ takes at
$(\zz,h)$ the value
\begin{equation}\label{eq:value}
\Phi_s(\uu)_{(\zz,h)}=\phi_s\bigl(\uu(\zz)\bigr)_h.
\end{equation}

\begin{example}\label{ex:construction}
Let $s=2$ and $(t_1,t_2)=(2,1)$.  Then, $m=3$, the rows $\ww_1$ and $\ww_2$
have level $1$, and the row $\ww_3$ has level $2$.  The columns of $A^{2,1}$
are the $p^3$ vectors $(1,a,pb)^T$ with $a\in\Z_{p^2}$ and
$b\in\{0,\dots,p-1\}$.  Hence, $\ord(\ww_1)=\ord(\ww_2)=p^2$ and
$\ord(\ww_3)=p$, and $H^{2,1}$ has length $p^3\cdot p=p^4$.  This agrees with
\eqref{eq:length}, since $2\cdot2+1=5=t+1$.  The codeword
$\uu=x_1\ww_1+x_2\ww_2+x_3\ww_3$ of $\cH^{2,1}$ takes the value
$W=x_1+x_2a+px_3b$ at the column $\zz=(1,a,pb)$.  By \eqref{eq:value}, the
codeword $\Phi_2(\uu)$ takes the value $W_1+hW_0$ at the coordinate
$(\zz,h)$, for $h\in\Z_p$.  For instance, let $p=3$ and $\uu=\ww_2$.  At the
column $\zz=(1,5,0)$, we have $W=5=[2,1]_3$.  Hence, the three coordinates of
the fiber $F_\zz$ take the values $1$, $0$ and $2$, for $h=0$, $1$ and $2$,
respectively.
\end{example}

The linearity of the codes $H^{t_1,\dots,t_s}$ is the first place where the
binary case and the case of odd primes differ.

\begin{theorem}[\cite{Kro:2001:Z4_Had_Perf,PRV06,KernelZ2s,HadamardZps}]
\label{thm:linearity}
Let $s\geq2$.  If $p$ is odd, then $H^{t_1,\dots,t_s}$ is linear if and only
if its parameters are $(1,0,\dots,0,t_s)$.  If $p=2$ and $s=2$, then
$H^{t_1,t_2}$ is linear if and only if $t_1\leq2$.  If $p=2$ and $s\geq3$,
then $H^{t_1,\dots,t_s}$ is linear if and only if its parameters are
$(1,0,\dots,0,t_s)$ or $(1,0,\dots,0,1,t_s)$.
\end{theorem}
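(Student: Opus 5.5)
The plan is to prove both directions directly from the coordinate description \eqref{eq:entry}--\eqref{eq:value}. The basic fact is that $H=H^{t_1,\dots,t_s}$ is a code of size $p^{t+1}$ containing $\zero$. Hence $H$ is linear if and only if $\langle H\rangle=H$, if and only if $\rank(H)=t+1$, if and only if $\Phi_s(\uu)+\Phi_s(\vv)\in H$ for all $\uu,\vv\in\cH^{t_1,\dots,t_s}$.

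\emph{Sufficiency.} Take parameters $(1,0,\dots,0,t_s)$. Every codeword is $\uu=x_1\one+p^{s-1}\yy$, where $\yy$ ranges over a $\Z_p$-linear set. By \eqref{eq:top}, $\Phi_s(\uu)=\Phi_s(x_1\one)+\bar\yy$, with $\bar\yy$ the expansion of $\yy$ to constant vectors on fibers. As $x_1$ runs over $\Z_{p^s}$, its digits $(x_{1,0},\dots,x_{1,s-1})$ run over all of $\Z_p^s$. By \eqref{eq:gray}, $\Phi_s(x_1\one)$ is the $\Z_p$-linear combination $x_{1,s-1}\one+\sum_i x_{1,i}\,\mathbf{r}_i$, where $\mathbf{r}_i$ is the $i$-th row of $Y_{s-1}$ repeated on every fiber. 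Hence $H$ is the sum of two linear spaces, so it is linear.

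For $p=2$ and the parameters $(1,0,\dots,0,1,t_s)$, the extra generator lies in $2^{s-2}\Z_{2^s}^n$, so its multiples only involve the digits of index $s-2$ and $s-1$. For such elements, $\phi_s(u)_h=u_{s-1}+h_{s-2}u_{s-2}$. Moreover, the carry from digit $s-2$ into digit $s-1$ is $u_{s-2}v_{s-2}$. In $\Z_2$ we have $h_{s-2}u_{s-2}v_{s-2}\cdot 2=0$ and $h_{s-2}^2=h_{s-2}$, so the correction term equals $u_{s-2}v_{s-2}\one$, which lies in $H$. Combined with the previous case, this gives closure under addition. The case $p=2$, $s=2$, $t_1\le 2$ is the classical $\Z_4$ statement. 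For $t_1=2$ the same carry computation applies, with $\ww_2$ of order $4$ and $\ww_2*\ww_2$ in the span.

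\emph{Necessity.} This is the main obstacle. I would exhibit two explicit codewords whose Gray images sum to a vector outside $H$, using Lemma~\ref{lem:shift}-type digit computations. If the parameters are not of the linear form, there is a generator $\ww_i$, $i\ge2$, of level $j<s$ (or $j<s-1$ when $p=2$), or $t_1\ge3$ when $p=2$ and $s=2$. In that case I take $\uu=a\ww_i$ and $\vv=b\ww_i$ (or $\vv=\ww_1$), with small multipliers chosen so that the digit of index $s-2$ carries into index $s-1$ on some coordinates but not on others. Writing $\Phi_s(\uu)+\Phi_s(\vv)-\Phi_s(\uu+\vv)$, the difference is a carry vector. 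For odd $p$, this vector is nonconstant on fibers and is of the form $h_{s-2}\cdot(\text{carry})$ or similar. For $p=2$, it is $(\uu*\vv)$-type, essentially $2^{s-1}$ times a componentwise product.

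Membership of $\Phi_s(\uu)+\Phi_s(\vv)$ in $H$ would then force this carry vector, after subtracting $\Phi_s(\uu+\vv)$, to be a difference of codewords of a GH code. I would refute this with a distance argument: distinct codewords of the GH code $H$ are at distance $N(p-1)/p$ or $N$, so it suffices to show that the carry vector has some other nonzero weight. Alternatively, I would check directly, via \eqref{eq:entry}, that it is not the Gray image of any $\Z_{p^s}$-combination of the rows. The delicate point is the choice of coordinates $\zz$ (entries of row $i$ in $T_j$) that makes the carry pattern nonuniform. This is where odd $p$ (carries arise even from a single level-$(s-1)$ row, together with $h_{s-2}$ multipliers) differs from $p=2$ (level $s-1$ is harmless, since $h^2=h$).
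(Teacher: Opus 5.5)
The paper gives no proof of this theorem; it quotes it from the literature, so your argument has to stand on its own. Your sufficiency part is essentially fine. One caveat: in the binary case $(1,0,\dots,0,1,t_s)$, the correction you compute for elements with entries in $2^{s-2}\Z_{2^s}$ is constant on each fiber $F_\zz$ with a $\zz$-dependent value, not a multiple of $\one$. What makes it harmless is that it equals $\Phi_s$ of a multiple of $2\ww_2\in\cH$, so that \eqref{eq:top} absorbs it; lying in $H$ alone would not give closure.

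\textbf{The necessity direction has a genuine gap.} First, your case split is incomplete for $p=2$ and $s\geq3$. The parameters $(1,0,\dots,0,t_{s-1},t_s)$ with $t_{s-1}\geq2$, for instance $H^{1,2,0}$, have no row $\ww_i$, $i\geq2$, of level $<s-1$, and $s\neq2$. So none of your cases applies, although these codes are nonlinear. Your remark that level $s-1$ is harmless for $p=2$ holds only when there is a single such row.

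Second, your witness recipe (multiples of one row $\ww_i$, possibly with $\ww_1$) provably fails in every binary case where the nonlinearity comes from two rows of level $s-1$. This includes $s=2$, $t_1\geq3$. The subcode generated by $\ww_1$ and one such $\ww_i$ is a coordinate repetition of $\cH^{2,0}$ if $s=2$, or of $\cH^{1,0,\dots,0,1,0}$ if $s\geq3$. Their Gray images are linear by your own sufficiency part. Hence every sum $\Phi_s(\uu)+\Phi_s(\vv)$ with $\uu,\vv$ in that subcode stays in $H$. Beyond this, no witness is actually specified and no weight is computed, so the ``only if'' direction is a plan rather than a proof.

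\textbf{How to close the gap.} Your weight criterion is sound: nonzero differences of codewords of a GH code of length $N$ have weight $N(p-1)/p$ or $N$. So if $\Phi_s(\uu)+\Phi_s(\vv)\in H$, its difference with $\Phi_s(\uu+\vv)\in H$ must have one of these weights. The criterion works once the witnesses have entries in $p^{s-2}\Z_{p^s}$ or $2^{s-3}\Z_{2^s}$, where the carries are explicit.
\begin{itemize}
\item For odd $p$ and a row $\ww_i$, $i\geq2$, of level $j\leq s-1$, take $\uu=p^{s-1-j}\ww_i$ and $\vv=p^{s-2}\one$. Then $\Phi_s(\uu)+\Phi_s(\vv)-\Phi_s(\uu+\vv)$ equals $-1$ on the fibers with $\chi_i(\zz)=p-1$ and $0$ elsewhere. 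Its weight is $N/p$.
\item For $p=2$ and a row $\ww_i$ of level $j\leq s-2$, take $\uu=2^{s-2-j}\ww_i$ and $\vv=2^{s-3}\one$. The difference takes the value $\zeta_{i,0}(\zeta_{i,1}+h_{s-2})$ at $(\zz,h)$, where $\zeta_{i,0},\zeta_{i,1}$ are the two lowest binary digits of $\zeta_i$. Its weight is $N/4$.
\item For $p=2$ and two distinct rows $\ww_i,\ww_{i'}$ of level $s-1$, take $\uu=\ww_i$ and $\vv=\ww_{i'}$. The difference equals $\chi_i(\zz)\chi_{i'}(\zz)$ on $F_\zz$, again of weight $N/4$.
\end{itemize}
Neither $N/p$ (odd $p$) nor $N/4$ ($p=2$) lies in $\{N(p-1)/p,N\}$, so none of these sums is in $H$. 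For $p=2$, the parameters outside the two linear families are exactly those with a row $\ww_i$, $i\geq2$, of level at most $s-2$, or with at least two such rows of level $s-1$. So these witnesses cover every case.
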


For instance, the code $H^{2,1}$ of Example \ref{ex:construction} is linear
if $p=2$, and nonlinear if $p$ is odd.  For $p=2$ and $s=3$, the linear codes
are $H^{1,0,t_3}$ and $H^{1,1,t_3}$.  The kernel of $H^{t_1,\dots,t_s}$ is
determined in \cite{Kro:2001:Z4_Had_Perf,PRV06} for $p=2$ and $s=2$, in
\cite{KernelZ2s} for $p=2$ and $s\geq3$, and in \cite{HadamardZps} for odd
$p$.  We only need the case $t_1\geq2$.

\begin{theorem}[\cite{Kro:2001:Z4_Had_Perf,PRV06,KernelZ2s,HadamardZps}]
\label{thm:kernel}
Let $s\geq2$, and let $H=H^{t_1,\dots,t_s}$ be a nonlinear code with
$t_1\geq2$.  If $p$ is odd, then the $m$ vectors $\Phi_s(p^{s-j_i}\ww_i)$,
with $1\leq i\leq m$, form a basis of $\K(H)$.  If $p=2$, then a basis of
$\K(H)$ is formed by these $m$ vectors together with
$\Phi_s((2^{s-1}-1)\one)$.  In particular, $\kernel(H)=m$ if $p$ is odd, and
$\kernel(H)=m+1$ if $p=2$.
\end{theorem}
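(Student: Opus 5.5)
This theorem is cited from \cite{Kro:2001:Z4_Had_Perf,PRV06,KernelZ2s,HadamardZps}, so the task is to reconstruct its proof. The plan has two parts. First, show that the listed vectors lie in $\K(H)$ and are linearly independent. Second, show that nothing else lies in the kernel, either by a dimension count or by an explicit obstruction.

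\emph{Membership in the kernel.} For a row $\ww_i$ of level $j_i$, the vector $p^{s-j_i}\ww_i$ has entries in $p^{s-1}\Z_{p^s}$, because the entries of $\ww_i$ lie in $T_{j_i}=p^{j_i-1}\Z_{p^s}$. By \eqref{eq:top}, for every $\uu\in\cH$ we get
$$\Phi_s(\uu+p^{s-j_i}\ww_i)=\Phi_s(\uu)+\Phi_s(p^{s-j_i}\ww_i).$$
The vector $\Phi_s(p^{s-j_i}\ww_i)$ is constant on each fiber $F_\zz$, and adding it to $H$ gives $H$ back. For $p=2$, I will check the extra vector $\Phi_s((2^{s-1}-1)\one)$ separately. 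The key identity is that, with carries in the binary expansion, $\phi_s(u+(2^{s-1}-1))$ equals $\phi_s(u)+\phi_s(2^{s-1}-1)$ up to a translation by an element of $\cH$. Concretely, for each $\uu$ I will write
$$\Phi_s(\uu)+\Phi_s((2^{s-1}-1)\one)=\Phi_s(\uu')$$
for an explicit $\uu'\in\cH$, using $\phi_s(-u)=-\phi_s(u)$ and the fact that $(2^{s-1}-1)\one$ is a multiple of $\ww_1=\one$. This mirrors the argument in \cite{KernelZ2s}.

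\emph{Independence.} The vectors $p^{s-j_i}\ww_i$ generate a $\Z_p$-space of dimension $m$ inside $p^{s-1}\cH$, since the $\ww_i$ form a basis of the type $(n;t_1,\dots,t_s)$. The map $\Phi_s$ restricted to $p^{s-1}\Z_{p^s}^n$ is linear and injective by \eqref{eq:top}. For $p=2$, the extra vector is not constant on fibers, so it is independent of the others.

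\emph{Upper bound, the main obstacle.} I would show that any $\xx\in\K(H)$ is of the form $\Phi_s(\vv)$ with $\vv\in\cH$, since $\K(H)\subseteq H$. Then I would characterise the $\vv\in\cH$ for which $\Phi_s(\vv)+\Phi_s(\uu)\in\Phi_s(\cH)$ for all $\uu$. Write $\vv=\sum x_i\ww_i$. For odd $p$, the Gray map fails to be additive on digits of index less than $s-1$ whenever a carry occurs. Pairing $\vv$ with $\uu=\ww_2$ (which exists since $t_1\geq2$) at suitable columns $\zz$ will produce a carry pattern that cannot be absorbed by any codeword. I would verify this on a single fiber using the quadratic carry term $\lfloor (a+b)/p\rfloor$, computing with Lemma \ref{lem:shift}. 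This forces each $x_i$ to be divisible by $p^{s-j_i}$, modulo the $\one$-direction. In the binary case, the $x_1\in\{0,2^{s-1}-1,\dots\}$ freedom yields exactly one extra generator. Nonlinearity of $H$ is what ensures that no further vectors appear. This carry analysis is the hardest step, and I would first try it for $s=2$ and then induct on $s$ via Lemma \ref{lem:shift}. The dimension statements then follow.
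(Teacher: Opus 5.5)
Your membership and independence arguments are essentially right, with one repair. For $p=2$ the identity $\phi_s(-u)=-\phi_s(u)$ is false: for $s=2$, $\phi_2(3)=(1,0)$ while $-\phi_2(1)=(0,1)$. What works is \eqref{eq:gray}: adding $\phi_s(2^{s-1}-1)_h=h_0+\cdots+h_{s-2}$ complements the digits of index $0,\dots,s-2$ and keeps the top digit. This gives $\phi_s(u)+\phi_s(2^{s-1}-1)=\phi_s(2^{s-1}-1-u)$, so $\uu'=(2^{s-1}-1)\one-\uu$.

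The genuine gap is the upper bound, which is the whole content of the theorem. The paper does not prove it; it imports it from \cite{KernelZ2s,HadamardZps}. Your sketch would not work as described. The reformulation you need is that $\phi_s(u)_h$ is $\Z_p$-linear in the digit vector of $u$. Hence $\Phi_s(\uu)+\Phi_s(\vv)=\Phi_s(\uu\oplus\vv)$, with $\oplus$ digit-wise addition modulo $p$. So $\Phi_s(\vv)\in\K(H)$ if and only if the carry vector $\uu+\vv-\uu\oplus\vv$ lies in $\cH$ for every $\uu\in\cH$.

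In particular, nothing can be detected on a single fiber. Since $z_1=1$, the entries $\uu(\zz)$ at a fixed column run over all of $\Z_{p^s}$. So $H$ punctured to $F_\zz$ is $\phi_s(\Z_{p^s})$, which is closed under addition, and every carry pattern there is absorbed. The obstruction is global in $\zz$: you must show that the carry vector, as a function of the column, is not of the form $\zz\mapsto\sum_i y_iz_i$ on $\cZ$. One way is to restrict to the columns $(1,a,0,\dots,0)$, on which every codeword of $\cH$ is the affine function $y_1+y_2a$ of $a\in\Z_{p^s}$.

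Moreover, the fixed witness $\uu=\ww_2$ does not suffice.
\begin{itemize}
\item Odd $p$: take $s=2$ and $\vv=(p-1)(\ww_1+\ww_2)$, which has order $p^2$. At every column, with $a=z_2$, the digit of index $0$ of $\vv(\zz)=(p-1)(1+a)$ is $p-1-a_0$. Adding $a_0$ never produces a carry, so $\Phi_2(\vv)+\Phi_2(\ww_2)=\Phi_2(\vv+\ww_2)\in H$. Here $\ww_1$ is a witness instead: the carry vector is $p$ times the indicator of $a_0=0$, which is not affine in $a$.
\item $p=2$: $\Phi_s(\ww_2)+\Phi_s(\ww_2)=\zero$, so $\ww_2$ can never show that $\Phi_s(\ww_2)\notin\K(H)$. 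A different witness is needed, e.g.\ $\ww_1$ if $s\geq3$, or a third row of level $1$ if $s=2$.
\end{itemize}
The binary case $s=2$ is exactly where the hypotheses enter. For $p=2$, $s=2$, $t_1=2$ the code is linear and $\kernel(H)=t+1>m+1$. So ``nonlinearity ensures that no further vectors appear'' must be replaced by an argument that actually uses $t_1\geq3$ there. Also, for odd $p$ there is no exception in the $\one$-direction: $x_1$ must be divisible by $p^{s-1}$ as well.

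What is missing is the actual case analysis. For every $\vv=\sum_i x_i\ww_i$ outside the claimed preimage of $\K(H)$, you need a witness $\uu$ chosen depending on $\vv$, $p$, $s$ and $t_1$. You also need columns on which the carry vector is shown not to belong to $\cH$. That is the substance of the cited proofs, and your proposal does not supply it.
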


Note that $p^{s-j_i}=\ord(\ww_i)/p$.  For odd $p$, the kernel $\K(H)$ is the
Gray map image of the subcode of $\cH^{t_1,\dots,t_s}$ formed by the vectors
of order at most $p$ \cite[Theorem 4]{HadamardZps}.  For instance, the kernel
of the code $H^{2,1}$ of Example \ref{ex:construction}
has dimension $3$ for odd $p$, and it is spanned by
$\Phi_2(p\ww_1)=\one$, $\Phi_2(p\ww_2)$ and $\Phi_2(\ww_3)$.  For $p=2$, the
additional vector $\Phi_s((2^{s-1}-1)\one)$ is the image of
$(1+2+\cdots+2^{s-2})\one$.  The corresponding vector
$\Phi_s((1+p+\cdots+p^{s-2})\one)$ does not belong to the kernel when $p$ is
odd \cite{HadamardZps}.  This difference is the origin of the different
behavior of the kernel blocks in Section \ref{sec:blocks}.

The rank will only be needed for $p=2$ and $s\in\{2,3\}$.  For $s=2$, the
rank of the $\Z_4$-linear Hadamard code $H_2^{t_1,t_2}$ is computed in
\cite{PRV06}.  It is
\begin{equation}\label{eq:rank4}
\rank\bigl(H_2^{t_1,t_2}\bigr)=2t_1+t_2+\binom{t_1-1}{2},
\end{equation}
where $t_1\geq1$ and $t_2\geq0$.  For $s=3$, the rank of the $\Z_8$-linear
Hadamard code $H_2^{t_1,t_2,t_3}$ is computed in \cite{fernandez2019mathbb}.
It is
\begin{equation}\label{eq:rank8}
\rank\bigl(H_2^{t_1,t_2,t_3}\bigr)
=\frac{t_1^{4}-2t_1^{3}+35t_1^{2}+14t_1}{24}
+\frac{t_2}{2}\bigl(t_1^{2}+t_1+t_2+1\bigr)+t_3+1,
\end{equation}
where $t_1\geq1$ and $t_2,t_3\geq0$.

Codes $H^{t_1,\dots,t_s}$ with different values of $s$ can be permutation
equivalent, as the following result shows.

\begin{theorem}[\cite{EquivZ2s,ZpsEquivalence}]\label{thm:equiv}
Let $s\geq2$ and $t_s\geq1$.  Then, the $\Z_{p^s}$-linear GH code
$H^{t_1,\dots,t_s}$ is permutation equivalent to the $\Z_{p^{s+1}}$-linear GH
code
$$
H^{\,1,\,t_1-1,\,t_2,\dots,t_{s-1},\,t_s-1}.
$$
\end{theorem}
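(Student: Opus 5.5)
The plan is to prove the equivalence directly. I will construct an explicit bijection between the coordinates of the two codes and show that it carries each codeword of one code onto a codeword of the other. The main tool is Lemma \ref{lem:shift}, used to pass from $\phi_{s+1}$ down to $\phi_s$, together with \eqref{eq:top}. Write $\cH=\cH^{t_1,\dots,t_s}$ over $\Z_{p^s}$ and $\cH'=\cH^{1,t_1-1,t_2,\dots,t_{s-1},t_s-1}$ over $\Z_{p^{s+1}}$. A quick check with \eqref{eq:length} shows that both codes have length $p^t$. In $\cH'$ the row of old level $i\geq 1$ has new level $i+1$, with weight $(s+1)-(i+1)+1=s-i+1$. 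The new first row contributes $s+1$, and the top level loses one row. So the total is unchanged.

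First I describe the columns. A column of $A'=A^{1,t_1-1,\dots,t_s-1}$ has the form $\zz'=(1,p\zz_2,\dots,p\zz_{m-1})$, where $(1,\zz_2,\dots,\zz_{m-1})$ runs over $\{1\}\times T_1^{t_1-1}\times T_2^{t_2}\times\cdots\times T_s^{t_s-1}$ computed in $\Z_{p^s}$. This holds because $p^{j}\Z_{p^{s+1}}=p\cdot(p^{j-1}\Z_{p^s})$ under the identification $p\Z_{p^{s+1}}\cong\Z_{p^s}$. A coordinate of $H'$ is a pair $(\zz',h)$ with $h=(h_0,h')\in\Z_p\times\Z_p^{s-1}$. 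I send it to the coordinate $(\zz,h')$ of $H$, where
\[
\zz=(1,\zz_2,\dots,\zz_{m-1},p^{s-1}h_0).
\]
Here the last entry lies in $T_s$, and it fills the missing level-$s$ row of $\cH$. This map is clearly a bijection between the coordinate sets.

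Next I transfer codewords. Take a codeword $\uu'=\sum_i x_i\ww'_i$ of $\cH'$ and write $x_1=\varepsilon+py$ with $\varepsilon\in\{0,\dots,p-1\}$. Since all entries of $\ww'_i$ for $i\geq2$ are multiples of $p$, only $x_i \bmod p^s$ matters for those rows. The value at $\zz'$ is therefore $W=\varepsilon+pW'$, where $W'=y+\sum_{i\geq2}x_i\zz_i\in\Z_{p^s}$. Lemma \ref{lem:shift} gives
\[
\phi_{s+1}(W)_h=\phi_s(W')_{h'}+\varepsilon h_0.
\]
By \eqref{eq:top}, this equals $\phi_s\bigl(W'+\varepsilon h_0p^{s-1}\bigr)_{h'}$. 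The argument $W'+\varepsilon h_0p^{s-1}$ is exactly the value at $\zz$ of the codeword of $\cH$ with coefficients $(y,x_2,\dots,x_{m-1},\varepsilon)$, each taken modulo $p^s$. Hence the coordinate permutation maps $\Phi_{s+1}(\uu')$ onto $\Phi_s(\uu)$ for this $\uu\in\cH$.

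Consequently the permuted image of $H'$ is contained in $H$. Both are GH codes of length $p^t$ by Theorem \ref{thm:GH}, so both have $p^{t+1}$ codewords, and the inclusion is an equality. The step I expect to need the most care is the bookkeeping of the columns: the identification $p^{j}\Z_{p^{s+1}}\leftrightarrow p^{j-1}\Z_{p^s}$, and the verification that the carry from $x_1$ splits correctly as $\varepsilon+py$. This is exactly where Lemma \ref{lem:shift} requires $0\leq\varepsilon\leq p-1$.
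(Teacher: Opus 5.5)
Your proof is correct. This paper does not prove Theorem~\ref{thm:equiv} itself; it quotes it from earlier work. Your argument is nonetheless exactly the mechanism the paper uses in Lemma~\ref{lem:linear}, which is the $s=1$ analogue of this statement, and in the odd-$p$ case of Theorem~\ref{thm:descent}. In both places one splits $x_1=\varepsilon+py$, applies Lemma~\ref{lem:shift}, and uses \eqref{eq:top} to turn the Gray coordinate $h_0$ into the entry $p^{s-1}h_0$ of an extra order-$p$ generator with coefficient $\varepsilon$. You close with a cardinality count rather than checking surjectivity directly as Lemma~\ref{lem:linear} does. That is equally valid, and the direct check would also be immediate here, since $\ww_m$ has order $p$.
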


The parameters of the second code in Theorem \ref{thm:equiv} are $1$, then
$t_1-1$, then $t_2,\dots,t_{s-1}$, and finally $t_s-1$.  For $s=2$, the
middle segment is empty, and this code is $H^{1,\,t_1-1,\,t_2-1}$.  In
\cite[Theorem 7]{ZpsEquivalence}, the statement is given more generally, with
$\ell$ steps instead of one, and we only need the case $\ell=1$.

\section{Reduced codes}\label{sec:reduced}

In this section, we show that the classification can be restricted to the
codes $H^{t_1,\dots,t_s}$ with $s\geq2$ and $t_1\geq2$.  Read in the opposite
direction, Theorem \ref{thm:equiv} shows that every $\Z_{p^s}$-linear GH code
with $s\geq3$ and first parameter $1$ is permutation equivalent to a
$\Z_{p^{s-1}}$-linear GH code.  This suggests the following definition.

\begin{definition}\label{def:reduced}
A code $H^{t_1,\dots,t_s}$ with $s\geq2$ is called \emph{reduced} if
$t_1\geq2$.
\end{definition}

By Theorem \ref{thm:linearity}, every reduced code is nonlinear if $p$ is odd.
If $p=2$, a reduced code is linear if and only if $s=2$ and $t_1=2$.  This
explains the exceptional case in Section \ref{sec:classification}.

\begin{lemma}\label{lem:lower}
Let $s\geq3$.  Then, the $\Z_{p^s}$-linear GH code $H^{1,t_2,\dots,t_s}$ is
permutation equivalent to the $\Z_{p^{s-1}}$-linear GH code
$$
H^{\,t_2+1,\,t_3,\dots,t_{s-1},\,t_s+1}.
$$
\end{lemma}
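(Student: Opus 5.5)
The plan is to apply Theorem \ref{thm:equiv} directly, with $s$ replaced by $s-1$, to the code $H^{t'_1,\dots,t'_{s-1}}$ whose parameters are
$$
t'_1=t_2+1,\quad t'_i=t_{i+1}\ (2\leq i\leq s-2),\quad t'_{s-1}=t_s+1.
$$
Since $s\geq3$, we have $s-1\geq2$, which is the range in which Theorem \ref{thm:equiv} applies.

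First I check the hypotheses. The new last parameter is $t'_{s-1}=t_s+1\geq1$, so the theorem can be used. The new first parameter is $t'_1=t_2+1\geq1$, so $H^{t'_1,\dots,t'_{s-1}}$ is a well-defined code of the family. As a consistency check, both codes have the same length. For the code $H^{t'_1,\dots,t'_{s-1}}$, the sum in \eqref{eq:length} is
$$
(s-1)(t_2+1)+\sum_{i=3}^{s-1}(s-i+1)t_i+(t_s+1).
$$
For $H^{1,t_2,\dots,t_s}$ it is
$$
s+(s-1)t_2+\sum_{i=3}^{s}(s-i+1)t_i .
$$
These two expressions are equal.

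Next, Theorem \ref{thm:equiv} says that $H^{t'_1,\dots,t'_{s-1}}$ is permutation equivalent to the $\Z_{p^s}$-linear GH code
$$
H^{1,\,t'_1-1,\,t'_2,\dots,t'_{s-2},\,t'_{s-1}-1}.
$$
Substituting the definitions of the $t'_i$, the parameters of this code are $1$, then $t_2$, then $t_3,\dots,t_{s-1}$, and finally $t_s$. So this code is exactly $H^{1,t_2,\dots,t_s}$. Permutation equivalence is symmetric, so the lemma follows.

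The only point that needs care is the degenerate case $s=3$. There the middle segment $t'_2,\dots,t'_{s-2}$ is empty, and $H^{t'_1,t'_2}=H^{t_2+1,t_3+1}$. Theorem \ref{thm:equiv} with $s=2$ then gives $H^{1,t_2,t_3}$, as described in the remark after that theorem, so the index bookkeeping agrees with the general case. No other obstacle is expected; the proof is a reindexing of Theorem \ref{thm:equiv}.
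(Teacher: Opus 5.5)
Your proposal is correct and is the paper's proof: it sets $u_1=t_2+1$, $u_i=t_{i+1}$ for $2\leq i\leq s-2$ and $u_{s-1}=t_s+1$, applies Theorem~\ref{thm:equiv} to $H^{u_1,\dots,u_{s-1}}$ (valid since $s-1\geq2$ and $u_{s-1}\geq1$), and recognizes the resulting code as $H^{1,t_2,\dots,t_s}$. Your length check is correct but not needed, since equal lengths already follow from the permutation equivalence.
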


\begin{proof}
Put $u_1=t_2+1$, put $u_i=t_{i+1}$ for $2\leq i\leq s-2$, and put
$u_{s-1}=t_s+1$.  These are nonnegative integers with $u_1\geq1$ and
$u_{s-1}\geq1$.  Since $s-1\geq2$, Theorem \ref{thm:equiv} applies to the
code $H^{u_1,\dots,u_{s-1}}$.  Hence, this code is permutation equivalent to
$$
H^{\,1,\,u_1-1,\,u_2,\dots,u_{s-2},\,u_{s-1}-1}
=H^{\,1,\,t_2,\,t_3,\dots,t_{s-1},\,t_s},
$$
by the definition of the $u_i$.
\end{proof}

The parameters of the second code in Lemma \ref{lem:lower} are $t_2+1$, then
$t_3,\dots,t_{s-1}$, and finally $t_s+1$.  For $s=3$, this code is
$H^{\,t_2+1,\,t_3+1}$.  Lemma \ref{lem:lower} can be applied repeatedly.  To
describe the result, we put
$$
R(1,u_2,\dots,u_r)=(u_2+1,u_3,\dots,u_{r-1},u_r+1)
$$
for parameters $(1,u_2,\dots,u_r)$ with $r\geq3$, the middle segment being
empty for $r=3$.  By Lemma \ref{lem:lower}, the code $H^{1,u_2,\dots,u_r}$ is
permutation equivalent to the code $H^{R(1,u_2,\dots,u_r)}$.  Moreover, the
map $R$ is injective.  Indeed, if $(v_1,\dots,v_{r-1})=R(1,u_2,\dots,u_r)$,
then $u_2=v_1-1$, $u_i=v_{i-1}$ for $3\leq i\leq r-1$, and $u_r=v_{r-1}-1$.

\begin{lemma}\label{lem:R}
Let $s\geq2$, and let $H=H^{t_1,\dots,t_s}$ be a nonlinear code.  Then, there
is a smallest integer $k\geq0$ such that $R^k(t_1,\dots,t_s)$ is defined and
its first entry is at least $2$.  Moreover, $k\leq s-2$, and
$R^k(t_1,\dots,t_s)$ are the parameters of a nonlinear reduced
$\Z_{p^{s-k}}$-linear GH code, which is permutation equivalent to $H$ and
whose last parameter is $t_s+k$.
\end{lemma}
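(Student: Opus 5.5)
We argue by iterating Lemma \ref{lem:lower} and tracking the linearity condition of Theorem \ref{thm:linearity}. Put $P_0=(t_1,\dots,t_s)$. If $t_1\geq2$, then $k=0$ works: $H$ itself is nonlinear and reduced, and its last parameter is $t_s$. Otherwise $t_1=1$, and we define $P_{j+1}=R(P_j)$ as long as $P_j$ has first entry $1$ and length at least $3$. Each application of $R$ shortens the parameter vector by one. By Lemma \ref{lem:lower}, $H^{P_j}$ is permutation equivalent to $H^{P_{j+1}}$, which is a $\Z_{p^{s-j-1}}$-linear GH code. Equivalent codes containing $\zero$ have the same kernel by Lemma \ref{lem:invariance}, and a code containing $\zero$ is linear exactly when it equals its kernel. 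Hence linearity is preserved along the chain, and every $H^{P_j}$ is nonlinear.

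Next we show that the process stops at a vector of length at least $2$ with first entry at least $2$. Suppose $P_j$ has length $r\geq2$ and first entry $1$. If $r=2$, then $P_j=(1,u_2)$. For odd $p$ this is linear by Theorem \ref{thm:linearity}. For $p=2$ it is also linear, since $t_1=1\leq2$. This contradicts nonlinearity, so the process cannot reach a vector of length $2$ with first entry $1$. Consequently, at some step $k$ with $s-k\geq2$, that is $k\leq s-2$, the first entry of $P_k$ is at least $2$. This $k$ is the smallest index at which $R^k(t_1,\dots,t_s)$ is defined and has first entry at least $2$. Indeed, for $j<k$ the first entry of $P_j$ is $1$, and $R^k$ is defined precisely because each earlier $P_j$ had first entry $1$ and length $s-j\geq3$. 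Then $H^{P_k}$ is a reduced $\Z_{p^{s-k}}$-linear GH code, since $s-k\geq2$, and it is nonlinear and permutation equivalent to $H$.

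For the last parameter, note that by definition $R(1,u_2,\dots,u_r)$ has last entry $u_r+1$. So each application of $R$ increases the last entry by $1$, and the last entry of $P_k$ is $t_s+k$. One subtlety needs checking: when $r=3$ the middle segment is empty, and $R(1,u_2,u_3)=(u_2+1,u_3+1)$. The last entry is still $u_3+1$, and the first entry is $u_2+1$, so the formula is consistent.

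The main obstacle is the termination argument, namely excluding the situation where the chain reaches length $2$ with first entry $1$. This is exactly where nonlinearity and Theorem \ref{thm:linearity} (in the form of the case $s=2$) are needed, together with the fact that linearity is invariant under permutation equivalence.
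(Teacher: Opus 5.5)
Your proof is correct and follows essentially the same route as the paper: iterate Lemma \ref{lem:lower}, note that permutation equivalence preserves linearity, and use Theorem \ref{thm:linearity} to rule out a nonlinear code $H^{1,u_2}$, which forces the chain to stop at a reduced code with $k\leq s-2$ and last parameter $t_s+k$. One small imprecision: equivalent codes do not literally have the same kernel but kernels related by $\mu$, and Lemma \ref{lem:invariance} already gives the preservation of linearity directly.
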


\begin{proof}
If $t_1\geq2$, then $k=0$, and there is nothing to prove.  Assume that
$t_1=1$.  By Theorem \ref{thm:linearity}, the code $H^{1,t_2}$ is linear for
every prime $p$.  Hence, $s\geq3$, and $R(t_1,\dots,t_s)$ is defined.  By
Lemma \ref{lem:lower}, the code $H^{R(t_1,\dots,t_s)}$ is permutation
equivalent to $H$.  In particular, it is nonlinear, since a permutation of
the coordinates maps linear codes to linear codes.  Its last parameter is
$t_s+1$.  If its first parameter equals $1$, we repeat the argument with this
code.  Each step replaces $s$ by $s-1$ and increases the last parameter by
one.  Moreover, the argument shows that every nonlinear code
$H^{u_1,\dots,u_r}$ with $u_1=1$ has $r\geq3$.  Hence, the process stops
after $k\leq s-2$ steps, at a nonlinear code with $s-k\geq2$ parameters whose
first parameter is at least $2$.  This code is reduced, it is permutation
equivalent to $H$, and its last parameter is $t_s+k$.
\end{proof}

For instance, $R(1,0,1,2)=(1,1,3)$ and $R(1,1,3)=(2,4)$.  Hence, for odd $p$,
the nonlinear $\Z_{p^4}$-linear GH code $H^{1,0,1,2}$ is permutation
equivalent to the reduced $\Z_{p^2}$-linear GH code $H^{2,4}$, and $k=2$.
For $p=2$, both codes are linear by Theorem \ref{thm:linearity}.  The linear
codes are treated in Proposition \ref{prop:reduction} below, by using the
next lemma.

\begin{lemma}\label{lem:linear}
Let $t\geq1$.  Then, the $\Z_{p^2}$-linear GH code $H^{1,t-1}$ is permutation
equivalent to the linear code $H^{t+1}$.
\end{lemma}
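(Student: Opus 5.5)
We prove the statement directly from the definitions, by exhibiting an explicit permutation. Both codes have length $p^t$: $H^{1,t-1}$ comes from $\cH^{1,t-1}$ of length $n=p^{t-1}$ with fibers of size $p$, and $H^{t+1}$ is the linear code of length $p^t$ generated by the columns $\{1\}\times\Z_p^{t}$. The coordinates of $\cH^{1,t-1}$ are the columns $\zz=(1,pb_1,\dots,pb_{t-1})$ with $b_j\in\{0,\dots,p-1\}$. We send the coordinate $(\zz,h)$ of $H^{1,t-1}$, with $h\in\Z_p$, to the coordinate $(1,h,b_1,\dots,b_{t-1})$ of $H^{t+1}$. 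This is a bijection, and it is the permutation we use.

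Next we compute the codewords of $H^{1,t-1}$. A codeword of $\cH^{1,t-1}$ is $\uu=x_1\ww_1+\sum_{j\ge2}x_j\ww_j$. Write $x_1=\varepsilon+pW'_1$ with $\varepsilon\in\{0,\dots,p-1\}$. Since the rows $\ww_j$ with $j\geq2$ have entries in $p\Z_{p^2}$, only $x_j \bmod p$ matters for them. The entry at $\zz$ is then
$$
\uu(\zz)=\varepsilon+p\Bigl(W'_1+\sum_{j\ge2}x_jb_{j-1}\Bigr).
$$
By Lemma \ref{lem:shift} with $s=2$, where $\phi_1$ is the identity, the value at $(\zz,h)$ is
$$
\phi_2(\uu(\zz))_h=W'_1+\sum_{j\ge2}x_jb_{j-1}+\varepsilon h \pmod p.
$$
This is exactly the value at the coordinate $(1,h,b_1,\dots,b_{t-1})$ of the codeword
$$
W'_1\,\mathbf r_1+\varepsilon\,\mathbf r_2+\sum_{j\ge2}x_j\,\mathbf r_{j+1}
$$
of $H^{t+1}$, where $\mathbf r_1,\dots,\mathbf r_{t+1}$ are the rows of its generator matrix. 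Hence the permutation maps $H^{1,t-1}$ into $H^{t+1}$.

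For equality, we count. As $\uu$ ranges over $\cH^{1,t-1}$, the pair $(\varepsilon,W'_1)$ ranges over $\Z_p^2$ and the coefficients $x_j \bmod p$, for $j\geq2$, range over $\Z_p^{t-1}$. So the image is all of $H^{t+1}$, and both codes have $p^{t+1}$ codewords. The map is therefore a permutation equivalence.

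The only delicate point is the digit computation: we must check that the carries from $x_1$ and from $\sum_j x_jpb_{j-1}$ only affect the top digit. This holds because the low digit of $\uu(\zz)$ is exactly $\varepsilon$, which is precisely the hypothesis of Lemma \ref{lem:shift}.
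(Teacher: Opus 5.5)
Your proposal is correct and is essentially the paper's proof: you use the same coordinate bijection (with $h$ placed in the second position of the target column rather than the last), the same decomposition $x_1=\varepsilon+pq$, and the same independence argument for surjectivity. Citing Lemma \ref{lem:shift} with $s=2$ instead of reading off the two $p$-ary digits of $\uu(\zz)$ directly is only a repackaging of the same computation.
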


\begin{proof}
All the rows of $A^{1,t-1}$ except the first one have level $2$.  Hence, the
columns of $A^{1,t-1}$ are the vectors $\zz=(1,pc_2,\dots,pc_t)$ with
$c_2,\dots,c_t\in\{0,\dots,p-1\}$.  Let $\uu=\sum_{i=1}^{t}x_i\ww_i$ with
$x_i\in\Z_{p^2}$, and write $x_1=\varepsilon+pq$ with
$\varepsilon,q\in\{0,\dots,p-1\}$.  Then,
$$
\uu(\zz)=\varepsilon+p\,\Bigl(q+\sum_{i=2}^{t}x_ic_i\Bigr).
$$
Therefore, the digits of $\uu(\zz)$ are $\varepsilon$ and
$q+\sum_{i=2}^{t}x_ic_i$ modulo $p$, and \eqref{eq:gray} gives
$$
\Phi_2(\uu)_{(\zz,h)}=q+\sum_{i=2}^{t}x_ic_i+\varepsilon h,
$$
computed modulo $p$, for every $h\in\Z_p$.  This is the value at the column
$(1,c_2,\dots,c_t,h)$ of the codeword of $\cH^{t+1}$ whose coefficients are
$q,x_2,\dots,x_t,\varepsilon$ modulo $p$.  The map
$(\zz,h)\mapsto(1,c_2,\dots,c_t,h)$ is a bijection from the set of
coordinates of $H^{1,t-1}$ onto the set of columns of $A^{t+1}$, which are
all the vectors of $\{1\}\times\Z_p^{t}$.  Moreover, as the $x_i$ run over
$\Z_{p^2}$, the elements $q$, $x_2,\dots,x_t$ modulo $p$, and $\varepsilon$
run independently over $\Z_p$.  Therefore, this bijection carries
$H^{1,t-1}$ onto $H^{t+1}$.
\end{proof}

\begin{proposition}\label{prop:reduction}
Let $H=H^{t_1,\dots,t_s}$ be a code of length $p^t$ with $s\geq2$.  If $H$ is
nonlinear, then it is permutation equivalent to a nonlinear reduced code of
length $p^t$.  If $H$ is linear, then it is permutation equivalent to
$H^{1,t-1}$ and to the linear code $H^{t+1}$.
\end{proposition}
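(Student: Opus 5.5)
The nonlinear case follows from Lemma \ref{lem:R}. If $H=H^{t_1,\dots,t_s}$ is nonlinear with $s\geq2$, that lemma gives $k\leq s-2$ such that $H$ is permutation equivalent to the nonlinear reduced code $H^{R^k(t_1,\dots,t_s)}$ with $s-k\geq2$ parameters. The length is preserved because permutation equivalence preserves length; alternatively, $R$ preserves the weighted sum in \eqref{eq:length}, as the identity $\sum_i (s-i+1)t_i=\sum_i(s-i)u_i$ for $u=R(t)$ checks directly. So nothing else is needed here.

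For the linear case, I will go through the list of Theorem \ref{thm:linearity} and reduce each linear code to $H^{1,t-1}$. Lemma \ref{lem:linear} then gives the equivalence with $H^{t+1}$.

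\emph{Parameters $(1,0,\dots,0,t_s)$ with $s\geq2$.} For $s=2$ the code is already $H^{1,t_2}$, and \eqref{eq:length} gives $t_2=t-1$. For $s\geq3$, apply Lemma \ref{lem:lower} repeatedly. One step sends $(1,0,\dots,0,t_s)$ to $(1,0,\dots,0,t_s+1)$ with one fewer entry; for $s=3$ it sends $(1,0,t_3)$ to $(1,t_3+1)$. Each step is a permutation equivalence, so after $s-2$ steps we reach $H^{1,t_s+s-2}$. By length preservation, its second parameter must be $t-1$.

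\emph{Parameters $(1,0,\dots,0,1,t_s)$ with $p=2$, $s\geq3$.} For $s=3$ this is $(1,1,t_3)$, and Lemma \ref{lem:lower} gives $H^{2,t_3+1}$. For $s\geq4$, Lemma \ref{lem:lower} first gives $(1,0,\dots,0,1,t_s+1)$ with one fewer entry, and iterating reaches $H^{2,t_s+s-2}$.

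\emph{Linear $\Z_4$-codes with $t_1=2$, $p=2$.} Together with the previous case, it remains to treat $H^{2,t_2}$ with $p=2$. Here \eqref{eq:length} gives $t_2=t-3$. This is the main obstacle: Lemma \ref{lem:lower} does not apply, since $s=2$. I would use Theorem \ref{thm:equiv} with $s=2$, valid when $t_2\geq1$, which gives $H^{2,t_2}\sim H^{1,1,t_2-1}$; but that only goes upward, and Lemma \ref{lem:lower} undoes it. So instead I would argue directly. $H^{2,t_2}$ is linear binary, and it is a Hadamard code of length $2^t$ by Theorem \ref{thm:GH}. Every binary linear Hadamard code of length $2^{t}$ is permutation equivalent to the first order Reed--Muller code $H^{t+1}$ \cite{WMcwill}. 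Combined with Lemma \ref{lem:linear}, this shows it is permutation equivalent to $H^{1,t-1}$ as well. The remaining subcase $t_2=0$ gives $H^{2}$ with $s=1$, which is excluded since $s\geq2$ is required.

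In fact, the same Reed--Muller uniqueness argument handles every linear binary case at once, while odd $p$ only has the family $(1,0,\dots,0,t_s)$, treated above. Finally, all equivalences used are permutation equivalences, and they preserve length, which gives the stated claim about $H^{1,t-1}$ and $H^{t+1}$.
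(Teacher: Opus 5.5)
Your proof is correct and follows the paper's argument. You use Lemma~\ref{lem:R} for the nonlinear case, iterate Lemma~\ref{lem:lower} down to $H^{1,t-1}$ and apply Lemma~\ref{lem:linear} for odd $p$, and use uniqueness of binary linear Hadamard codes up to permutation \cite{WMcwill} for $p=2$, which, as you note yourself, makes the detour through $H^{2,t-3}$ unnecessary. One small slip: $H^{2,0}$ is not $H^{2}$ but a genuine $\Z_4$-linear code of length $8$ with $s=2$, so it is not excluded; your Reed--Muller argument covers it anyway, so nothing breaks.
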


\begin{proof}
The first assertion follows from Lemma \ref{lem:R}, since permutation
equivalent codes have the same length.  Now, assume that $H$ is linear.  By
Lemma \ref{lem:linear}, it is enough to show that $H$ is permutation equivalent
to $H^{t+1}$.  If $p=2$, then $H$ and $H^{t+1}$ are binary linear Hadamard
codes of length $2^t$, so they are permutation equivalent \cite{WMcwill}.  If
$p$ is odd, then the parameters of $H$ are $(1,0,\dots,0,t_s)$ by Theorem
\ref{thm:linearity}, and $t_s=t+1-s$ by \eqref{eq:length}.  If $s\geq3$, then
$R(1,0,\dots,0,t_s)=(1,0,\dots,0,t_s+1)$ has $s-1$ entries.  Applying Lemma
\ref{lem:lower} $s-2$ times, we obtain that $H$ is permutation equivalent to
$H^{1,t_s+s-2}=H^{1,t-1}$.  This also holds for $s=2$, since then
$H=H^{1,t-1}$.  By Lemma \ref{lem:linear}, the code $H$ is permutation
equivalent to $H^{t+1}$.
\end{proof}

\section{Kernel blocks and the derived code}\label{sec:derived}

In this section, we extend the kernel blocks of \cite{Z2Z4Z8Profile} to codes over
$\Z_p$, and we introduce the derived code.  The definitions and their basic
properties are given for an arbitrary code over $\Z_p$ containing the zero
vector.  In Sections \ref{sec:blocks} and \ref{sec:descent}, we determine the
kernel blocks and the derived code of the nonlinear reduced codes.  For
$p=2$, these results are also given in \cite{OthersProfile}.  We include the
binary case, since the same proofs cover every prime.

For $\xx\in\Z_p^N$ and $B\subseteq\{1,\dots,N\}$, we write $\xx|_B$ for the
vector obtained from $\xx$ by deleting the coordinates whose index lies
outside $B$.  For $D\subseteq\Z_p^N$, we put
$$
D|_B=\{\xx|_B:\ \xx\in D\},
$$
and we call it the code $D$ punctured outside $B$.  The entries of $\xx|_B$
are indexed by the elements of $B$.  Two codes whose coordinates are indexed
by different sets of the same size are compared through a bijection between
these sets.  Another bijection only changes the result by a permutation of
the coordinates.  Hence, equivalence and permutation equivalence make sense
for such codes.  Since deleting coordinates is a linear map, every
$D\subseteq\Z_p^N$ satisfies
\begin{equation}\label{eq:span-puncture}
\langle D\rangle\big|_B=\bigl\langle D|_B\bigr\rangle.
\end{equation}

\begin{definition}\label{def:blocks}
Let $C\subseteq\Z_p^N$ be a code with $\zero\in C$, and put $K=\K(C)$.  For
$1\leq i\leq N$, let
$$
K_i=\{\xx\in K:\ x_i=0\}.
$$
For $1\leq i,i'\leq N$, we write $i\sim_K i'$ if and only if $K_i=K_{i'}$.
The classes of the equivalence relation $\sim_K$ on the set of coordinates
are called the \emph{kernel blocks} of $C$, and the set formed by them is
denoted by $\cB(C)$.
\end{definition}

For $p=2$, if $f,g:K\rightarrow\Z_2$ are linear forms with
$\ker(f)=\ker(g)$, then $f=g$.  Applying this to the linear forms
$\xx\mapsto x_i$ and $\xx\mapsto x_{i'}$, we see that Definition
\ref{def:blocks} coincides with the definition given in \cite{Z2Z4Z8Profile}.
There, two coordinates are in the same kernel block when every vector of $K$
takes the same value at both of them.  For odd $p$, we use the condition
$K_i=K_{i'}$ because it is compatible with monomial maps by Lemma
\ref{lem:invariance}.  When $\one\in K$, the condition $K_i=K_{i'}$ is
equivalent to $x_i=x_{i'}$ for every $\xx\in K$, as the next lemma shows.
Hence, it takes the same form as in the binary case.

\begin{lemma}\label{lem:one}
Let $C\subseteq\Z_p^N$ be a code with $\zero\in C$ and $\one\in\K(C)$.  Then,
two coordinates $i$ and $i'$ lie in the same kernel block of $C$ if and only
if $x_i=x_{i'}$ for every $\xx\in\K(C)$.
\end{lemma}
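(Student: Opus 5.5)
We prove the two implications separately, using only that $K=\K(C)$ is a linear subspace of $\Z_p^N$ (it is a linear subcode of $C$, since $\zero\in C$) and that $\one\in K$.

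\emph{The easy direction.} Assume that $x_i=x_{i'}$ for every $\xx\in K$. Then a vector $\xx\in K$ satisfies $x_i=0$ if and only if $x_{i'}=0$. Hence $K_i=K_{i'}$, and so $i\sim_K i'$.

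\emph{The converse.} Assume that $K_i=K_{i'}$, and consider the two linear forms $f,g:K\rightarrow\Z_p$ given by $f(\xx)=x_i$ and $g(\xx)=x_{i'}$. The hypothesis says exactly that $\ker(f)=\ker(g)$.

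\begin{itemize}
\item By a standard fact of linear algebra, two linear forms with the same kernel are proportional. Thus there is $\lambda\in\Z_p^*$ with $g=\lambda f$, unless both forms are zero.
\item Both forms are nonzero, because $f(\one)=g(\one)=1$.
\item Evaluating $g=\lambda f$ at $\one\in K$ gives $1=\lambda\cdot1$, so $\lambda=1$.
\item Therefore $x_{i'}=x_i$ for every $\xx\in K$.
\end{itemize}

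The only step that needs justification is the proportionality of forms with a common kernel. If $\ker(f)=\ker(g)=U\neq K$, then $U$ has codimension $1$ in $K$. Choose $\vv\in K\setminus U$ and put $\lambda=g(\vv)/f(\vv)$. Every vector of $K$ can be written as $\uu+c\vv$ with $\uu\in U$ and $c\in\Z_p$. Evaluating both sides on such vectors gives $g=\lambda f$.

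In our situation, the hypothesis $\one\in K$ gives $U\neq K$ directly, and we may take $\vv=\one$, so that $\lambda=1$ immediately. The whole argument is routine, and no step is a real obstacle.
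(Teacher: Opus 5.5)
Your proof is correct and follows the paper's argument. Both work with the coordinate linear forms on $\K(C)$, use that nonzero forms with a common kernel are proportional, and evaluate at $\one$ to force the constant to be $1$. Your self-contained justification of the proportionality, taking $\one$ as the vector outside the common kernel, is a harmless addition.
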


\begin{proof}
Put $K=\K(C)$, and let $f_i:K\rightarrow\Z_p$ be the linear form defined by
$f_i(\xx)=x_i$.  Then, $K_i=\ker(f_i)$, and $f_i\neq0$ because
$f_i(\one)=1$.  If $f_i=f_{i'}$, then $K_i=K_{i'}$.  Conversely, assume that
$K_i=K_{i'}$, that is, $\ker(f_i)=\ker(f_{i'})$.  Two nonzero linear forms on
$K$ with the same kernel are proportional.  Hence, $f_{i'}=cf_i$ for
some $c\in\Z_p^*$.  Evaluating at $\one$, we obtain $c=1$.  Therefore,
$f_i=f_{i'}$.
\end{proof}

The next two lemmas show why the kernel blocks are useful.  The first one
says that an equivalence carries kernel blocks to kernel blocks, and the
second one says that it carries the punctured codes to equivalent punctured
codes.

\begin{lemma}\label{lem:invariance}
Let $C,D\subseteq\Z_p^N$ be codes with $\zero\in C$ and $\zero\in D$, and
assume that $D=\yy+\mu(C)$ for some $\yy\in\Z_p^N$ and some monomial map
$\mu=\mu_{\pi,\ba}$.  Then, $\langle D\rangle=\mu(\langle C\rangle)$ and
$\K(D)=\mu(\K(C))$.  In particular, $C$ and $D$ have the same rank and the
same dimension of the kernel, and $C$ is linear if and only if $D$ is linear.
Moreover, the map $B\mapsto\pi(B)$ is a bijection from $\cB(C)$ onto
$\cB(D)$.
\end{lemma}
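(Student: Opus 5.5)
The approach is to check directly how the monomial map and the translation act on the linear span, on the kernel, and on the sets $K_i$. First I record the basic identities for $\mu=\mu_{\pi,\ba}$. It is a linear bijection of $\Z_p^N$, and translation by $\yy$ commutes with linear combinations up to a shift.

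For the span, note that $\zero\in D$ forces $-\yy=\mu(\xx_0)$ for some $\xx_0\in C$, so $\yy\in\mu(C)\subseteq\mu(\langle C\rangle)$. Hence $D\subseteq\mu(\langle C\rangle)$, and so $\langle D\rangle\subseteq\mu(\langle C\rangle)$. Conversely, $\zero\in C$ gives $\yy=\yy+\mu(\zero)\in D$. Then $\mu(\xx)=(\yy+\mu(\xx))-\yy\in\langle D\rangle$ for every $\xx\in C$, and linearity of $\mu$ gives $\mu(\langle C\rangle)\subseteq\langle D\rangle$.

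For the kernel, $\xx+C=C$ is equivalent to $\mu(\xx)+\mu(C)=\mu(C)$. This in turn is equivalent to $\mu(\xx)+\yy+\mu(C)=\yy+\mu(C)$, that is, $\mu(\xx)+D=D$. Hence $\K(D)=\mu(\K(C))$. Since $\mu$ is a linear bijection, the rank and the kernel dimension agree. Also, $C$ is linear if and only if $C=\langle C\rangle$, equivalently $|C|=|\langle C\rangle|$. Equality of cardinalities is preserved, since $|D|=|C|$ and $|\langle D\rangle|=|\langle C\rangle|$, so linearity transfers.

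For the kernel blocks, put $K=\K(C)$ and $L=\K(D)=\mu(K)$. Since $\mu(\xx)_{\pi(i)}=a_ix_i$ with $a_i\neq0$, we have $\mu(\xx)_{\pi(i)}=0$ if and only if $x_i=0$. Thus $L_{\pi(i)}=\mu(K_i)$. Because $\mu$ is injective, $K_i=K_{i'}$ if and only if $L_{\pi(i)}=L_{\pi(i')}$; that is, $i\sim_K i'$ if and only if $\pi(i)\sim_L\pi(i')$. So $\pi$ maps each class of $\sim_K$ onto a class of $\sim_L$. Since $\pi$ is a bijection on coordinates, $B\mapsto\pi(B)$ is a bijection $\cB(C)\to\cB(D)$, with inverse induced by $\pi^{-1}$.

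The only slightly delicate point is the span identity, where the translation $\yy$ must be absorbed. This is exactly where the hypothesis that both $C$ and $D$ contain $\zero$ is used: it puts $\yy$ inside $D$ and inside $\mu(C)$. The remaining steps are routine.
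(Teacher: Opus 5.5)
Your proof is correct and follows essentially the same route as the paper: you cancel the translation to get the kernel, compare $|C|$ with $|\langle C\rangle|$ for linearity, and use $\K(D)_{\pi(i)}=\mu(\K(C)_i)$ together with injectivity of $\mu$ for the blocks; for the span you prove two inclusions where the paper writes $D=\mu(C-\cc_0)$, which is the same idea. One small slip: $\zero\in D$ gives $\yy=\mu(-\xx_0)$, and for odd $p$ the vector $-\xx_0$ need not lie in $C$, so you only get $-\yy\in\mu(C)$, not $\yy\in\mu(C)$ (the same slip recurs in your closing remark), but $\yy\in\mu(\langle C\rangle)$ still holds and is all your argument uses.
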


\begin{proof}
Let $\vv\in\Z_p^N$.  Since $D=\yy+\mu(C)$, we have $\vv+D=D$ if and only if
$\vv+\mu(C)=\mu(C)$.  Since $\mu$ is a linear bijection, this holds if and
only if $\mu^{-1}(\vv)+C=C$.  Hence, $\K(D)=\mu(\K(C))$.  Since
$\zero\in D$, there is $\cc_0\in C$ with $\yy=-\mu(\cc_0)$, and then
$D=\mu(C-\cc_0)$, where $C-\cc_0=\{\cc-\cc_0:\cc\in C\}$.  The codes $C$ and
$C-\cc_0$ have the same linear span.  Indeed,
$C-\cc_0\subseteq\langle C\rangle$, and every $\cc\in C$ is the difference of
$\cc-\cc_0$ and $\zero-\cc_0$, which belong to $C-\cc_0$.  Hence,
$\langle D\rangle=\mu(\langle C\rangle)$, and the equalities of the ranks and
of the dimensions of the kernels follow.
Moreover, $C$ is linear if and only if $|C|=|\langle C\rangle|$.  Since
$|D|=|C|$ and $|\langle D\rangle|=|\langle C\rangle|$, this condition holds
for $C$ if and only if it holds for $D$.

For the last assertion, let $\xx\in\K(C)$.  The entry of $\mu(\xx)$ at the
position $\pi(i)$ is $a_ix_i$, which vanishes if and only if $x_i=0$.  Hence,
with the notation of Definition \ref{def:blocks}, we have
$\K(D)_{\pi(i)}=\mu(\K(C)_i)$ for $1\leq i\leq N$.  Since $\mu$ is injective,
$\K(D)_{\pi(i)}=\K(D)_{\pi(i')}$ if and only if $\K(C)_i=\K(C)_{i'}$.  That
is, $i$ and $i'$ lie in the same kernel block of $C$ if and only if $\pi(i)$
and $\pi(i')$ lie in the same kernel block of $D$.
\end{proof}

\begin{lemma}\label{lem:punctured}
Let $C$, $D$, $\yy$ and $\mu=\mu_{\pi,\ba}$ be as in Lemma
\ref{lem:invariance}, and let $B\in\cB(C)$.  Then, $C|_B$ and $D|_{\pi(B)}$
are equivalent codes of length $|B|$, both containing $\zero$.  If
$\yy=\zero$ and $\mu$ is a permutation, then they are permutation equivalent.
\end{lemma}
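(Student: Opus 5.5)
We prove the statement directly from the definitions, using Lemma \ref{lem:invariance}. We have $D=\yy+\mu(C)$ with $\mu=\mu_{\pi,\ba}$, and by Lemma \ref{lem:invariance} we know that $\pi(B)\in\cB(D)$. Both punctured codes contain $\zero$, since $\zero\in C$ and $\zero\in D$. Both have length $|B|=|\pi(B)|$, because $\pi$ is a bijection.

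The key observation is that a monomial map commutes with puncturing, after restricting $\pi$ and $\ba$. Let $\pi_B:B\to\pi(B)$ be the restriction of $\pi$, and let $\ba|_B$ be the restriction of $\ba$. These define a monomial map $\mu_B$ from vectors indexed by $B$ to vectors indexed by $\pi(B)$, given by $\mu_B(\xx')_{\pi(i)}=a_ix'_i$ for $i\in B$.

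For every $\xx\in\Z_p^N$ and $i\in B$, the entry of $\mu(\xx)$ at $\pi(i)$ is $a_ix_i$. Hence
$$
\mu(\xx)|_{\pi(B)}=\mu_B(\xx|_B).
$$
Puncturing is also additive, so $(\yy+\mu(\xx))|_{\pi(B)}=\yy|_{\pi(B)}+\mu_B(\xx|_B)$. Letting $\xx$ run over $C$ gives
$$
D|_{\pi(B)}=\yy|_{\pi(B)}+\mu_B\bigl(C|_B\bigr).
$$

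The only subtle point is that the two codes are indexed by different coordinate sets, $B$ and $\pi(B)$. By the convention stated before Definition \ref{def:blocks}, we compare them through a bijection. Choosing $\pi_B$ as this bijection makes $\mu_B$ a genuine monomial map of $\Z_p^{|B|}$. Any other choice of bijection only composes $\mu_B$ with a permutation, and a monomial map composed with a permutation is still a monomial map. So the displayed identity is exactly an equivalence of $C|_B$ and $D|_{\pi(B)}$ in the sense of Section \ref{sec:intro}.

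Finally, suppose $\yy=\zero$ and $\mu$ is a permutation, that is, $\ba=\one$. Then $\yy|_{\pi(B)}=\zero$ and $\ba|_B=\one$. Hence $\mu_B$ is a permutation and the translation vanishes, so the two punctured codes are permutation equivalent.
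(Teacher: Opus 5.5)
Your proof is correct and follows the paper's argument essentially step for step: restrict the monomial map to $B$, observe that $\mu(\xx)|_{\pi(B)}=\mu_B(\xx|_B)$, and conclude that $D|_{\pi(B)}=\yy|_{\pi(B)}+\mu_B(C|_B)$, with $\mu_B$ a permutation when $\ba=\one$. Your remark on the choice of bijection between $B$ and $\pi(B)$ only makes explicit the convention the paper states before Definition \ref{def:blocks}.
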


\begin{proof}
By Lemma \ref{lem:invariance}, the set $\pi(B)$ is a kernel block of $D$.
For $\xx\in\Z_p^N$ and $i\in B$, the entry of $\mu(\xx)$ at the position
$\pi(i)$ is $a_ix_i$.  Hence, $\mu(\xx)|_{\pi(B)}$ is obtained from
$\xx|_B$ by multiplying the entry at each position $i\in B$ by $a_i$ and
moving it to the position $\pi(i)$.  This defines a monomial map $\mu_B$ from
the vectors indexed by $B$ to the vectors indexed by $\pi(B)$, and $\mu_B$ is
a permutation when $\ba=\one$.  Therefore,
$$
D|_{\pi(B)}=\bigl(\yy+\mu(C)\bigr)\big|_{\pi(B)}
=\yy|_{\pi(B)}+\mu_B\bigl(C|_B\bigr),
$$
which proves the statement.  Both punctured codes contain $\zero$, because
$\zero\in C$ and $\zero\in D$.
\end{proof}

These two lemmas do not say that all the kernel blocks of a code give
equivalent punctured codes, and in general they do not.  For instance, the
binary linear code $\{000,110,001,111\}$ has the kernel blocks $\{1,2\}$ and
$\{3\}$, and the codes punctured outside them have different lengths.  When
all the punctured codes are equivalent, the punctured code becomes an
invariant of the code itself.

\begin{definition}\label{def:derived}
Let $C\subseteq\Z_p^N$ be a code with $\zero\in C$.  We say that $C$ is
\emph{block-homogeneous} if the codes $C|_B$, with $B$ running over
$\cB(C)$, are pairwise equivalent.  In that case, we write $\partial C$ for
any one of them, and we call it the \emph{derived code} of $C$.  It is well
defined up to equivalence.
\end{definition}

\begin{corollary}\label{cor:derived-invariant}
Let $C$ and $D$ be equivalent codes over $\Z_p$, both containing $\zero$.  If
$C$ is block-homogeneous, then so is $D$, the two codes have the same number
of kernel blocks, all their kernel blocks have the same size, and $\partial C$
and $\partial D$ are equivalent.
\end{corollary}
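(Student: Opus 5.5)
The plan is to read everything off Lemmas \ref{lem:invariance} and \ref{lem:punctured}. Write $D=\yy+\mu(C)$ with $\mu=\mu_{\pi,\ba}$, which is possible because $C$ and $D$ are equivalent. By the last assertion of Lemma \ref{lem:invariance}, the map $B\mapsto\pi(B)$ is a bijection from $\cB(C)$ onto $\cB(D)$. Since $\pi$ is a bijection of coordinates, we have $|\pi(B)|=|B|$. It follows that $C$ and $D$ have the same number of kernel blocks. It also follows that the multisets of block sizes coincide.

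The next step is to show that $D$ is block-homogeneous. Let $B'_1,B'_2\in\cB(D)$. By the bijection, $B'_1=\pi(B_1)$ and $B'_2=\pi(B_2)$ for some $B_1,B_2\in\cB(C)$. Lemma \ref{lem:punctured} gives that $D|_{\pi(B_1)}$ is equivalent to $C|_{B_1}$, and likewise for $B_2$. Since $C$ is block-homogeneous, $C|_{B_1}$ is equivalent to $C|_{B_2}$. Equivalence of codes is an equivalence relation: translations composed with monomial maps form a group acting on vectors, and when lengths agree it is compatible with the identification of index sets through bijections. Chaining the three equivalences therefore shows that $D|_{B'_1}$ and $D|_{B'_2}$ are equivalent, so $D$ is block-homogeneous.

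For the remaining claims, equivalent codes have equal lengths. Block-homogeneity of $C$ therefore forces all blocks of $C$ to have the same size, and by the bijection the blocks of $D$ have that same size as well. Finally, $\partial D$ is some $D|_{\pi(B)}$, which by Lemma \ref{lem:punctured} is equivalent to $C|_B$, and $C|_B$ is a representative of $\partial C$. This proves that $\partial C$ and $\partial D$ are equivalent.

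The only point needing care is the transitivity of equivalence for codes indexed by different coordinate sets of the same size. It is handled by fixing bijections to $\{1,\dots,|B|\}$, as the paper already remarks before Definition \ref{def:blocks}. I expect no other obstacle.
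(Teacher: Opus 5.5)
Your proposal is correct and follows the paper's proof essentially step by step. Both use the bijection $B\mapsto\pi(B)$ from Lemma \ref{lem:invariance}, the equivalence of $D|_{\pi(B)}$ with $C|_B$ from Lemma \ref{lem:punctured}, transitivity of equivalence, and the equality of lengths of equivalent codes; your explicit remark on transitivity across different index sets only spells out a detail the paper leaves implicit.
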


\begin{proof}
Write $D=\yy+\mu(C)$ with $\mu=\mu_{\pi,\ba}$.  By Lemma \ref{lem:invariance},
every kernel block of $D$ is of the form $\pi(B)$ for a unique
$B\in\cB(C)$, and $|\pi(B)|=|B|$.  By Lemma \ref{lem:punctured}, the code
$D|_{\pi(B)}$ is equivalent to $C|_B$ for every $B\in\cB(C)$.  Since the
codes $C|_B$ are pairwise equivalent, so are the codes $D|_{\pi(B)}$, and all
of them are equivalent to $\partial C$.  Finally, equivalent codes have the
same length, so all these blocks have the same size.
\end{proof}

\section{The kernel blocks of a reduced code}\label{sec:blocks}

In this section, we fix a reduced code $H=H^{t_1,\dots,t_s}$ of length $p^t$,
which we assume to be nonlinear when $p=2$.  We determine its kernel blocks,
and we show that they are interchangeable, that is, that a group of
permutations of the coordinates which map $H$ onto $H$ acts transitively on
them.  We denote by $\kappa$ the dimension of the kernel of $H$.  By Theorem
\ref{thm:kernel}, we have $\kappa=m$ if $p$ is odd, and $\kappa=m+1$ if
$p=2$.  We keep the indexing $(\zz,h)$ of the coordinates of $H$ introduced
in Section \ref{sec:prelim}.

Recall that $j_i$ denotes the level of the row $\ww_i$.  For
$2\leq i\leq m$, the entry $z_i$ of a column $\zz\in\cZ$ lies in $T_{j_i}$ by
\eqref{eq:columns}.  Hence, by \eqref{eq:Ti}, we may write
$z_i=p^{j_i-1}\zeta_i$ for a unique integer $\zeta_i$ with
$0\leq\zeta_i\leq p^{s-j_i+1}-1$.  We put
$$
\chi_i(\zz)=\zeta_i\bmod p\quad\text{and}\quad
\chi(\zz)=\bigl(\chi_2(\zz),\dots,\chi_m(\zz)\bigr)\in\Z_p^{m-1},
$$
where $2\leq i\leq m$.  The digits of $z_i$ are those of $\zeta_i$, shifted
by $j_i-1$ positions.  Hence, the digits of $z_i$ of index smaller than
$j_i-1$ are zero, and $\chi_i(\zz)$ is the digit of index $j_i-1$ of $z_i$.
In particular, $\chi_i(\zz)=0$ if and only if $z_i\in p^{j_i}\Z_{p^s}$.  The
same definition applies to any vector $\delta\in\Z_{p^s}^m$ whose $i$th entry
lies in $T_{j_i}$ for $2\leq i\leq m$.  We now attach to each coordinate
$(\zz,h)$ of $H$ the label
$$
\rho(\zz,h)=\begin{cases}
\chi(\zz) & \text{if $p$ is odd},\\
\bigl(\chi(\zz),\wt_H(h)\bmod2\bigr) & \text{if } p=2.
\end{cases}
$$
In both cases, $\rho(\zz,h)$ belongs to $\Z_p^{\kappa-1}$.

\begin{proposition}\label{prop:blocks}
Let $H=H^{t_1,\dots,t_s}$ be a reduced code with kernel dimension $\kappa$,
which is nonlinear if $p=2$.  Then, two coordinates $(\zz,h)$ and
$(\zz',g)$ lie in the same kernel block of $H$ if and only if
$\rho(\zz,h)=\rho(\zz',g)$.  Moreover, the map $\rho$ is onto
$\Z_p^{\kappa-1}$.  Consequently, $H$ has exactly $p^{\kappa-1}$ kernel
blocks.  If $p$ is odd, every kernel block is a union of fibers.  If $p=2$,
every fiber meets exactly two kernel blocks.
\end{proposition}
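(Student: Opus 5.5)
The plan is to compute the value of every basis vector of $\K(H)$ from Theorem \ref{thm:kernel} at a coordinate $(\zz,h)$, and then to use Lemma \ref{lem:one}. Since $\Phi_s(p^{s-1}\ww_1)=\one$ lies in $\K(H)$ (take $i=1$, $j_1=1$), Lemma \ref{lem:one} applies. It says that two coordinates lie in the same kernel block if and only if every basis vector of $\K(H)$ takes the same value at them. So it suffices to show that the vector of values of the basis vectors at $(\zz,h)$ is an affine bijective image of $\rho(\zz,h)$. Concretely, we aim to show it equals $(1,\rho(\zz,h))$ up to reordering.

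For $2\leq i\leq m$, the entry of $p^{s-j_i}\ww_i$ at $\zz$ is $p^{s-j_i}z_i=p^{s-1}\zeta_i$. Modulo $p^s$ this equals $p^{s-1}\chi_i(\zz)$. By \eqref{eq:top}, $\phi_s$ maps it to $\chi_i(\zz)\one$, so the value at $(\zz,h)$ is $\chi_i(\zz)$, independently of $h$. Together with $\one$, this handles odd $p$: the values at $(\zz,h)$ are $(1,\chi(\zz))$. Hence blocks correspond to fibers of $\chi$, and each block is a union of fibers.

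For $p=2$ we also need the vector $\Phi_s((2^{s-1}-1)\one)$. The digits of $2^{s-1}-1$ are $u_0=\dots=u_{s-2}=1$ and $u_{s-1}=0$. By \eqref{eq:gray}, its value at $(\zz,h)$ is $\sum_i h_i=\wt_H(h)\bmod 2$. So the values are $(1,\chi(\zz),\wt_H(h)\bmod2)$, which gives the claimed description by $\rho$.

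Surjectivity of $\rho$ is the remaining step, and it is the only place where one must be careful. We must check that every $\chi_i$ takes all values independently. The columns run over $\{1\}\times T_1^{t_1-1}\times T_2^{t_2}\times\cdots\times T_s^{t_s}$, and $\zeta_i$ runs over $\{0,\dots,p^{s-j_i+1}-1\}$ independently for each $i$. Since $s-j_i+1\geq1$, the residue $\zeta_i\bmod p$ attains every value of $\Z_p$. Thus $\chi$ is onto $\Z_p^{m-1}$. For $p=2$ we need $s\geq2$, so that $h$ is nonempty and $\wt_H(h)$ takes both parities. These parities are independent of $\zz$, so $\rho$ is onto $\Z_2^{m}=\Z_2^{\kappa-1}$.

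The count $p^{\kappa-1}$ then follows. For $p=2$, the parity of $\wt_H(h)$ takes both values within a single fiber, so each fiber meets exactly two blocks.
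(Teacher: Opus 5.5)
Your proof is correct and follows essentially the same route as the paper. You evaluate the basis of $\K(H)$ from Theorem~\ref{thm:kernel} at $(\zz,h)$, use $\one\in\K(H)$ together with Lemma~\ref{lem:one}, and then check surjectivity of $\chi$ and, for $p=2$, that both parities of $\wt_H(h)$ occur on every fiber. The only step left implicit is that the condition of Lemma~\ref{lem:one} is linear in $\xx$, which is why checking it on a basis of $\K(H)$ suffices; the paper states this explicitly.
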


\begin{proof}
We evaluate at a coordinate $(\zz,h)$ the basis of $\K(H)$ given by Theorem
\ref{thm:kernel}.  Let $1\leq i\leq m$.  By \eqref{eq:entry}, the entry of
$p^{s-j_i}\ww_i$ at $\zz$ is $p^{s-j_i}z_i$.  For $i\geq2$, this entry is
$p^{s-1}\zeta_i$, and hence it equals $p^{s-1}\chi_i(\zz)$, since
$\zeta_i\equiv\chi_i(\zz)\pmod p$ and $p^{s-1}\cdot p=0$ in $\Z_{p^s}$.  By
\eqref{eq:top}, we have $\phi_s(p^{s-1}\chi_i(\zz))=\chi_i(\zz)\one$.
Therefore, the basis vector $\Phi_s(p^{s-j_i}\ww_i)$ takes the value
$\chi_i(\zz)$ at $(\zz,h)$, whatever $h$ is.  For $i=1$, we have $j_1=1$ and
$\ww_1=\one$, so the first basis vector is $\Phi_s(p^{s-1}\one)=\one$.  If
$p=2$, the remaining basis vector is $\Phi_s((2^{s-1}-1)\one)$.  The digits
of $2^{s-1}-1\in\Z_{2^s}$ are equal to $1$ at the indices $0,\dots,s-2$, and
to $0$ at the index $s-1$.  Hence, by \eqref{eq:gray}, this vector takes the
value $h_0+h_1+\cdots+h_{s-2}$ at $(\zz,h)$, and this value is
$\wt_H(h)\bmod2$.

In particular, $\one\in\K(H)$, and Lemma \ref{lem:one} applies.  The
condition of Lemma \ref{lem:one} is linear in $\xx$, so it is enough to check
it on a basis of $\K(H)$.  By the previous paragraph, $(\zz,h)$ and
$(\zz',g)$ lie in the same kernel block if and only if
$\rho(\zz,h)=\rho(\zz',g)$.

By \eqref{eq:columns}, the entries $z_2,\dots,z_m$ of $\zz\in\cZ$ range
independently over $T_{j_2},\dots,T_{j_m}$.  As $z_i=p^{j_i-1}\zeta_i$ runs
over $T_{j_i}$, the integer $\zeta_i$ runs over $\{0,\dots,p^{s-j_i+1}-1\}$,
which contains all the residues modulo $p$ because $s-j_i+1\geq1$.  Hence,
the map $\chi$ is onto $\Z_p^{m-1}$.  If $p$ is odd, then
$\rho(\zz,h)=\chi(\zz)$ does not depend on $h$.  Hence, $\rho$ is onto
$\Z_p^{m-1}=\Z_p^{\kappa-1}$, and every kernel block is a union of fibers.  If
$p=2$, then, for a fixed $\zz$, the parity of $\wt_H(h)$ takes both values on
the fiber $F_\zz$, since $s-1\geq1$.  Hence, $\rho$ is onto
$\Z_2^{m}=\Z_2^{\kappa-1}$, and every fiber meets exactly two kernel blocks.
In both cases, the kernel blocks are in bijection with $\Z_p^{\kappa-1}$, and
their number is $p^{\kappa-1}$.
\end{proof}

For $c\in\Z_p^{\kappa-1}$, we denote by
$$
B_c=\{(\zz,h):\ \rho(\zz,h)=c\}
$$
the kernel block of $H$ with \emph{label} $c$.

We now show that the kernel blocks are interchangeable.  For a code
$H^{t_1,\dots,t_s}$, we put
$$
G=\{0\}\times T_{j_2}\times\cdots\times T_{j_m}\subseteq\Z_{p^s}^{m},
$$
where $j_i$ is the level of the row $\ww_i$.  This is a subgroup of
$\Z_{p^s}^m$, and \eqref{eq:columns} identifies the set $\cZ$ of columns with
the coset $(1,0,\dots,0)+G$.  In particular, $\zz+\delta\in\cZ$ for every
$\zz\in\cZ$ and every $\delta\in G$.  Hence, for $\delta\in G$, the map
$\pi_\delta$ defined by
$$
\pi_\delta(\zz,h)=(\zz+\delta,h)
$$
is a permutation of the coordinates of $H^{t_1,\dots,t_s}$.  If $p=2$ and
$s\geq2$, we also consider the permutation $\theta$ defined by
$\theta(\zz,h)=(\zz,h+e_0)$, where $e_0=(1,0,\dots,0)\in\Z_2^{s-1}$.  The
next lemma holds for every code $H^{t_1,\dots,t_s}$, reduced or not.

\begin{lemma}\label{lem:auto}
Let $H=H^{t_1,\dots,t_s}$, let $\uu=\sum_{i=1}^{m}x_i\ww_i\in
\cH^{t_1,\dots,t_s}$, and let $\delta\in G$.  Then,
$\pi_\delta\bigl(\Phi_s(\uu)\bigr)=\Phi_s(\uu-\lambda\one),$
where $\lambda=\sum_{i=2}^{m}x_i\delta_i$.  If $p=2$ and $s\geq2$, then
$\theta\bigl(\Phi_s(\uu)\bigr)=\Phi_s\bigl((1+2^{s-1})\uu\bigr).$
Consequently, $\pi_\delta(H)=H$, and $\theta(H)=H$ if $p=2$ and $s\geq2$.
\end{lemma}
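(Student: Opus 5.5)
We compute coordinatewise, using \eqref{eq:value} and \eqref{eq:entry}. First we fix the convention for $\pi_\delta$ acting on vectors. By the definition of $\mu_{\pi,\one}$, we have $\pi_\delta(\xx)_{\pi_\delta(\zz,h)}=\xx_{(\zz,h)}$. Equivalently, the entry of $\pi_\delta(\xx)$ at $(\zz,h)$ equals the entry of $\xx$ at $(\zz-\delta,h)$.

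For $\xx=\Phi_s(\uu)$, this entry is $\phi_s(\uu(\zz-\delta))_h$. Since $\delta_1=0$, formula \eqref{eq:entry} gives
$$
\uu(\zz-\delta)=\sum_{i=1}^m x_i(z_i-\delta_i)=\uu(\zz)-\lambda.
$$
This is exactly the entry of $\uu-\lambda\one$ at $\zz$, so $\pi_\delta(\Phi_s(\uu))=\Phi_s(\uu-\lambda\one)$. Since $\ww_1=\one$, the vector $\uu-\lambda\one$ lies in $\cH^{t_1,\dots,t_s}$. Hence $\pi_\delta(H)\subseteq H$, and equality follows because $\pi_\delta$ is a bijection.

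For $\theta$ with $p=2$, set $v=\uu(\zz)$. The entry of $\theta(\Phi_s(\uu))$ at $(\zz,h)$ is $\phi_s(v)_{h+e_0}$, which by \eqref{eq:gray} equals $v_{s-1}+\sum_i h_iv_i+v_0$. We must compare this with $\phi_s((1+2^{s-1})v)_h$. Write $(1+2^{s-1})v=v+2^{s-1}v$. Modulo $2^s$, we have $2^{s-1}v=2^{s-1}v_0$. Then \eqref{eq:top}, applied with $\lambda=v_0$, gives
$$
\phi_s\bigl(v+2^{s-1}v_0\bigr)=\phi_s(v)+v_0\one.
$$
Its entry at $h$ is $v_{s-1}+\sum_i h_iv_i+v_0$, which matches the value above. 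Since $(1+2^{s-1})\uu\in\cH$, we get $\theta(H)\subseteq H$, and again equality holds by bijectivity.

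The only delicate point is the direction convention in the permutation action, namely whether $\delta$ or $-\delta$ appears. If the convention goes the other way, $\lambda$ changes sign, so we track it through the definition of $\mu_{\pi,\one}$ as above. The identity in the $\theta$ case does not depend on this, because $e_0=-e_0$ in $\Z_2^{s-1}$.
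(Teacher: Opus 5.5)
Your proposal is correct and follows essentially the same route as the paper. The $\pi_\delta$ part is the coordinatewise computation through $\pi_\delta^{-1}(\zz,h)=(\zz-\delta,h)$ and $\delta_1=0$. The $\theta$ part uses $2^{s-1}W=2^{s-1}W_0$ together with \eqref{eq:top}, and closure follows from $\ww_1=\one$, the subgroup property and bijectivity.
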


\begin{proof}
By the definition of the action of a permutation, the entry of
$\pi_\delta(\Phi_s(\uu))$ at $(\zz,h)$ is the entry of $\Phi_s(\uu)$ at
$(\zz-\delta,h)$, which is $\phi_s(\uu(\zz-\delta))_h$ by \eqref{eq:value}.
Since $\delta_1=0$, \eqref{eq:entry} gives
$$
\uu(\zz-\delta)=\sum_{i=1}^{m}x_i(z_i-\delta_i)=\uu(\zz)-\lambda.
$$
The element $\lambda\in\Z_{p^s}$ depends on $\uu$ and $\delta$, but not on
$\zz$.  Hence, $\pi_\delta(\Phi_s(\uu))=\Phi_s(\uu-\lambda\one)$.

Now, let $p=2$ and $s\geq2$.  The map $\theta$ is an involution of the set of
coordinates.  Hence, the entry of $\theta(\Phi_s(\uu))$ at $(\zz,h)$ is the
entry of $\Phi_s(\uu)$ at $(\zz,h+e_0)$.  Put $W=\uu(\zz)$.  By
\eqref{eq:gray}, this entry is $\phi_s(W)_{h+e_0}=\phi_s(W)_h+W_0$.  On the
other hand, $2^{s-1}W=2^{s-1}W_0$ in $\Z_{2^s}$, and \eqref{eq:top} gives
$\phi_s((1+2^{s-1})W)=\phi_s(W)+W_0\one$.  Therefore,
$\theta(\Phi_s(\uu))=\Phi_s((1+2^{s-1})\uu)$.

Finally, the vector $\uu-\lambda\one$ belongs to $\cH^{t_1,\dots,t_s}$
because $\one=\ww_1$, and so does $(1+2^{s-1})\uu$, because
$\cH^{t_1,\dots,t_s}$ is a subgroup of $\Z_{2^s}^n$.  Hence,
$\pi_\delta(H)\subseteq H$ and $\theta(H)\subseteq H$.  Equality holds because
these maps are injective.
\end{proof}

\begin{proposition}\label{prop:transitive}
Let $H=H^{t_1,\dots,t_s}$ be a reduced code, which is nonlinear if $p=2$, and
let $\Gamma$ be the group generated by the permutations $\pi_\delta$ with
$\delta\in G$, and by $\theta$ if $p=2$.  Then, every element of $\Gamma$
maps $H$ onto $H$, and $\Gamma$ acts transitively on $\cB(H)$.
Consequently, the codes $H|_B$, with $B$ running over $\cB(H)$, are pairwise
permutation equivalent.
\end{proposition}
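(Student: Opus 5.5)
The first assertion follows from Lemma \ref{lem:auto}. Each $\pi_\delta$ and, for $p=2$, the map $\theta$ maps $H$ onto $H$. The set of permutations of the coordinates mapping $H$ onto $H$ is closed under composition and inverses, so every element of $\Gamma$ maps $H$ onto $H$.

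For transitivity, I would first check that each generator carries kernel blocks to kernel blocks. This is Lemma \ref{lem:invariance} applied with $D=H=C$, $\yy=\zero$ and $\mu$ a generator. It is cleaner, though, to compute the action on labels directly. I will show that $\rho(\pi_\delta(\zz,h))=\rho(\zz,h)+(\chi(\delta),0)$, dropping the $0$ when $p$ is odd.

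Indeed, $\pi_\delta(\zz,h)=(\zz+\delta,h)$. For $2\leq i\leq m$ we have $z_i+\delta_i=p^{j_i-1}(\zeta_i+\zeta_i')$, where $\delta_i=p^{j_i-1}\zeta_i'$. Reducing modulo $p^{s-j_i+1}$ does not change the residue modulo $p$, because $s-j_i+1\geq1$. Hence $\chi_i(\zz+\delta)=\chi_i(\zz)+\chi_i(\delta)$, and the $h$-component is unchanged.

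For $p=2$, the map $\theta$ fixes $\zz$ and replaces $h$ by $h+e_0$, which flips $\wt_H(h)\bmod 2$. So $\theta$ adds the vector $(0,\dots,0,1)$ to the label.

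Next I would show that these translations cover all of $\Z_p^{\kappa-1}$. As $\delta$ runs over $G$, the vector $\chi(\delta)$ runs over all of $\Z_p^{m-1}$, by the same argument as the surjectivity of $\chi$ in Proposition \ref{prop:blocks}. For instance, take $\delta_i=p^{j_i-1}c_i$ with $c_i\in\{0,\dots,p-1\}$. Combined with $\theta$ when $p=2$, the translations of $\Z_p^{\kappa-1}$ induced by $\Gamma$ include every vector. Thus for any two labels $c,c'$ there is $\gamma\in\Gamma$ with $\gamma(B_c)=B_{c'}$, using Proposition \ref{prop:blocks} to identify blocks with labels. This gives transitivity.

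Finally, let $B,B'\in\cB(H)$ and choose $\gamma\in\Gamma$ with $\gamma(B)=B'$. Lemma \ref{lem:punctured}, with $C=D=H$, $\yy=\zero$ and $\mu=\gamma$ a permutation, shows that $H|_B$ and $H|_{\gamma(B)}=H|_{B'}$ are permutation equivalent.

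The only step requiring care is the additivity of $\chi$ under $\zz\mapsto\zz+\delta$. One has to check that carries in $\Z_{p^s}$ do not affect the digit of index $j_i-1$. This holds because both summands lie in $p^{j_i-1}\Z_{p^s}$, so their lowest possibly nonzero digits simply add modulo $p$.
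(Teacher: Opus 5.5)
Your proposal is correct and follows the paper's proof essentially step by step: Lemma~\ref{lem:auto} gives the first assertion, and the additivity $\chi(\zz+\delta)=\chi(\zz)+\chi(\delta)$ gives $\pi_\delta(B_c)=B_{c+c_\delta}$. Then $\theta$ flips the parity coordinate when $p=2$, the map $\delta\mapsto\chi(\delta)$ is onto $\Z_p^{m-1}$, and Lemma~\ref{lem:punctured} with $C=D=H$ finishes the proof. The only cosmetic difference is that you check additivity via the representatives $\zeta_i$ reduced modulo $p^{s-j_i+1}$ and write $\delta$ explicitly, whereas the paper argues with digits and uses the standard basis together with additivity.
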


\begin{proof}
By Lemma \ref{lem:auto}, the generators of $\Gamma$ map $H$ onto $H$, and
hence so does every element of $\Gamma$.  We now describe the action on the
kernel blocks.  Let $\delta\in G$.  For $2\leq i\leq m$, both $z_i$ and
$\delta_i$ lie in $T_{j_i}$, so their digits of index smaller than $j_i-1$
are zero.  Hence, the digit of index $j_i-1$ of $z_i+\delta_i$ is the sum of
the digits of index $j_i-1$ of $z_i$ and $\delta_i$, computed modulo $p$.
That is, $\chi(\zz+\delta)=\chi(\zz)+\chi(\delta)$.  Since $\pi_\delta$ does
not change $h$, Proposition \ref{prop:blocks} gives
$\pi_\delta(B_c)=B_{c+c_\delta}$ for every label $c$, where
$$
c_\delta=\begin{cases}
\chi(\delta) & \text{if $p$ is odd},\\
(\chi(\delta),0) & \text{if } p=2.
\end{cases}
$$
If $p=2$, the permutation $\theta$ does not change $\zz$, and it changes the
parity of $\wt_H(h)$.  Hence, $\theta(B_c)=B_{c+\omega}$, where
$\omega=(0,\dots,0,1)\in\Z_2^{m}$.

Now, fix $i$ with $2\leq i\leq m$, and let $\delta\in G$ be the vector whose
$i$th entry is $p^{j_i-1}$ and whose other entries are zero.  Then,
$\chi_i(\delta)=1$ and $\chi_{i'}(\delta)=0$ for $i'\neq i$.  The map
$\delta\mapsto\chi(\delta)$ is additive on $G$, and its image contains the
standard basis of $\Z_p^{m-1}$.  Hence, it is onto $\Z_p^{m-1}$.  Let $c$ and
$c'$ be two labels.  If $p$ is odd, there is $\delta\in G$ with
$c_\delta=c'-c$, and then $\pi_\delta(B_c)=B_{c'}$.  If $p=2$, there is
$\delta\in G$ such that $c_\delta$ is equal to $c'-c$ or to $c'-c-\omega$.
If $c_\delta=c'-c$, then $\pi_\delta(B_c)=B_{c'}$.  Otherwise,
$c_\delta=c'-c-\omega$, and then $\theta(\pi_\delta(B_c))=B_{c'}$.  Hence,
$\Gamma$ acts transitively on $\cB(H)$.

Finally, let $B,B'\in\cB(H)$, and let $\gamma\in\Gamma$ be such that
$\gamma(B)=B'$.  By Lemma \ref{lem:punctured}, applied with $C=D=H$,
$\yy=\zero$ and $\mu=\gamma$, the codes $H|_B$ and $H|_{B'}$ are permutation
equivalent.
\end{proof}

\begin{corollary}\label{cor:homogeneous}
Let $H$ be a reduced code of length $p^t$ with kernel dimension $\kappa$,
which is nonlinear if $p=2$.  Then, $H$ is block-homogeneous, and all its
kernel blocks have $p^{\,t-\kappa+1}$ coordinates.
\end{corollary}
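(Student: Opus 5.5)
The corollary follows by combining Propositions \ref{prop:transitive} and \ref{prop:blocks} with a simple counting argument.

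First, block-homogeneity. Proposition \ref{prop:transitive} shows that the codes $H|_B$, with $B$ running over $\cB(H)$, are pairwise permutation equivalent. In particular, they are pairwise equivalent, which is exactly Definition \ref{def:derived}. So $H$ is block-homogeneous and $\partial H$ is defined.

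Next, the block sizes. Each element $\gamma\in\Gamma$ is a permutation of the coordinates, and Proposition \ref{prop:transitive} says $\Gamma$ acts transitively on $\cB(H)$. Hence for any two blocks $B,B'$ there is a bijection $\gamma$ with $\gamma(B)=B'$, so $|B|=|B'|$. (Alternatively, equivalent codes have equal length, so the same conclusion follows from the equivalence of the punctured codes.) Thus all kernel blocks share a common size $b$.

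Finally, we count. The kernel blocks partition the $p^t$ coordinates of $H$, since $H$ has length $p^t$ by Theorem \ref{thm:GH}. By Proposition \ref{prop:blocks}, there are exactly $p^{\kappa-1}$ blocks. Therefore
\[
b\,p^{\kappa-1}=p^t,
\qquad\text{so}\qquad
b=p^{\,t-\kappa+1}.
\]
Each step is immediate from earlier results; the only point needing care is that the sizes are equal, and transitivity under $\Gamma$ settles it.
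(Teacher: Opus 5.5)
Your proposal is correct and follows essentially the same approach as the paper: block-homogeneity from Proposition~\ref{prop:transitive} (permutation equivalence implies equivalence), equal block sizes, and the count of $p^{\kappa-1}$ blocks from Proposition~\ref{prop:blocks} partitioning the $p^t$ coordinates. The only difference is that you derive the equal sizes from transitivity of $\Gamma$, whereas the paper reads them off from block-homogeneity, since equivalent punctured codes have equal length; you mention that route as an alternative, so both are covered.
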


\begin{proof}
Permutation equivalent codes are equivalent, so $H$ is block-homogeneous by
Proposition \ref{prop:transitive}.  In particular, all the kernel blocks have
the same size.  There are $p^{\kappa-1}$ of them by Proposition
\ref{prop:blocks}, and they partition the $p^t$ coordinates of $H$.  Hence,
each of them has $p^t/p^{\kappa-1}=p^{\,t-\kappa+1}$ coordinates.
\end{proof}

\section{The descent}\label{sec:descent}

In this section, we identify the derived code of a reduced code, which is
well defined by Corollary \ref{cor:homogeneous}.  We keep the notation of
Section \ref{sec:blocks}.  Thus, $H=H^{t_1,\dots,t_s}$ is a reduced code,
which is nonlinear if $p=2$.  Since all the kernel blocks of $H$ are
interchangeable, it is enough to consider one of them.  The most convenient
one is $B_\zero$, the block which contains the coordinate
$((1,0,\dots,0),\zero)$.  Since $\chi_i(\zz)=0$ means that
$z_i\in p^{j_i}\Z_{p^s}$, we have
$$
B_\zero=\begin{cases}
\bigl\{(\zz,h):\ z_i\in p^{j_i}\Z_{p^s}\ \text{for }2\leq i\leq m\bigr\}
& \text{if $p$ is odd},\\[1mm]
\bigl\{(\zz,h):\ z_i\in 2^{j_i}\Z_{2^s}\ \text{for }2\leq i\leq m,
\ \text{and}\ \wt_H(h)\ \text{is even}\bigr\} & \text{if } p=2.
\end{cases}
$$

Let $m'=m-t_s$ be the number of rows of $A^{t_1,\dots,t_s}$ whose level is at
most $s-1$.  We consider the $\Z_{p^{s-1}}$-linear GH code
$$
\hH=\begin{cases}
H^{t_1,\dots,t_{s-2},\,t_{s-1}+1} & \text{if $p$ is odd},\\
H^{t_1,\dots,t_{s-1}} & \text{if } p=2.
\end{cases}
$$
For $s=2$, the code $\hH$ is the linear code $H^{t_1+1}$ if $p$ is odd, and
the linear code $H^{t_1}$ if $p=2$.  Let $\hA$ be the matrix which defines
$\hH$ as in Section \ref{sec:prelim}, let $\hw_1,\hw_2,\dots$ be its rows,
and let $\hcH$ be the $\Z_{p^{s-1}}$-additive code generated by them.  By the
definition of the levels, the row $\hw_i$ has level $j_i$ for $1\leq i\leq m'$.
If $p$ is odd, the matrix $\hA$ has one more row $\hw_{m'+1}$, whose level is
$s-1$.  If $p=2$, it has exactly $m'$ rows.  We write
$\hT_j=p^{j-1}\Z_{p^{s-1}}$ for the sets \eqref{eq:Ti} of the alphabet
$\Z_{p^{s-1}}$, and $\hcZ$ for the set of columns of $\hA$.  By
\eqref{eq:columns}, we have
$$
\hcZ=\begin{cases}
\{1\}\times\hT_{j_2}\times\cdots\times\hT_{j_{m'}}\times\hT_{s-1}
& \text{if $p$ is odd},\\[1mm]
\{1\}\times\hT_{j_2}\times\cdots\times\hT_{j_{m'}} & \text{if } p=2.
\end{cases}
$$
The next lemma gives the bijection between $B_\zero$ and the set of
coordinates of $\hH$ that we will use.

\begin{lemma}\label{lem:beta}
Let $(\zz,h)\in B_\zero$.  Then, $z_i=p^{j_i}\xi_i$ for a unique integer
$\xi_i$ with $0\leq\xi_i\leq p^{s-j_i}-1$ if $2\leq i\leq m'$, and $z_i=0$ if
$m'<i\leq m$.  Moreover, $h_0=h_1+\cdots+h_{s-2}$ if $p=2$.  Let $\beta$ be
the map defined on $B_\zero$ by $\beta(\zz,h)=(\hz,h')$, where
$h'=(h_1,\dots,h_{s-2})$ and
\begin{equation}\label{eq:beta}
\hz=\begin{cases}
\bigl(1,\ p^{j_2-1}\xi_2,\ \dots,\ p^{j_{m'}-1}\xi_{m'},\
p^{s-2}h_0\bigr) & \text{if $p$ is odd},\\[1mm]
\bigl(1,\ 2^{j_2-1}\xi_2,\ \dots,\ 2^{j_{m'}-1}\xi_{m'}\bigr)
& \text{if } p=2.
\end{cases}
\end{equation}
Then, $\beta$ is a bijection from $B_\zero$ onto the set of coordinates of
$\hH$.
\end{lemma}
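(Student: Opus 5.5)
The plan is to unpack the definition of $B_\zero$ entry by entry, and then count: first derive the structural facts about $\zz$ and $h$, then show that $\beta$ lands in the right set and is injective, and finally compare cardinalities (or exhibit an inverse).

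\emph{Structural facts.} Let $(\zz,h)\in B_\zero$ and $2\leq i\leq m$. We have $z_i\in T_{j_i}$, so $z_i=p^{j_i-1}\zeta_i$ with $0\leq\zeta_i\leq p^{s-j_i+1}-1$. The condition $\chi_i(\zz)=0$ says $p\mid\zeta_i$. Writing $\zeta_i=p\xi_i$ gives $z_i=p^{j_i}\xi_i$ with $0\leq\xi_i\leq p^{s-j_i}-1$, and uniqueness follows from the range. If $i>m'$, then $j_i=s$ and the range forces $\xi_i=0$, so $z_i=0$. For $p=2$, the block condition says $\wt_H(h)$ is even, which is exactly $h_0+h_1+\cdots+h_{s-2}=0$ in $\Z_2$, that is, $h_0=h_1+\cdots+h_{s-2}$.

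\emph{Well-definedness and bijectivity.} Next I check that $\hz\in\hcZ$. For $2\leq i\leq m'$ we have $p^{j_i-1}\xi_i\in\hT_{j_i}=p^{j_i-1}\Z_{p^{s-1}}$. As $\xi_i$ runs over $\{0,\dots,p^{s-j_i}-1\}=\{0,\dots,p^{(s-1)-j_i+1}-1\}$, this element runs bijectively over $\hT_{j_i}$. For odd $p$, the last entry $p^{s-2}h_0$ runs bijectively over $\hT_{s-1}$ as $h_0$ runs over $\Z_p$. The vector $h'$ runs over $\Z_p^{s-2}$, which indexes the fibers of $\hH$. Hence $\beta$ maps $B_\zero$ into $\hcZ\times\Z_p^{s-2}$.

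For injectivity, observe that $(\zz,h)$ is recovered from $\beta(\zz,h)$:
\begin{itemize}
\item $z_1=1$;
\item $z_i=p\cdot\hz_i$ computed from the representative $p^{j_i-1}\xi_i$ for $2\leq i\leq m'$, and $z_i=0$ for $i>m'$;
\item $h'$ is read off directly;
\item $h_0$ is the digit in the last entry for odd $p$, and $h_0=\sum h'_k$ for $p=2$.
\end{itemize}
Conversely, any $(\hz,h')$ in the target yields, by these formulas, a point of $\cZ\times\Z_p^{s-1}$ satisfying the block conditions. So this recovery map is a two-sided inverse. Alternatively, $|B_\zero|=p^{t-\kappa+1}$ by Corollary~\ref{cor:homogeneous}, and one can compare this with the length of $\hH$ given by \eqref{eq:length}.

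\emph{Main obstacle.} The only delicate point is the bookkeeping of ranges. One has to see that dividing by $p$ maps $\{\zeta_i: p\mid\zeta_i\}$ bijectively onto the representative range for $\hT_{j_i}$ in $\Z_{p^{s-1}}$. One also has to handle the case $s=2$: then $h'$ is empty, and for odd $p$ the extra entry is $h_0\in\hT_1=\Z_p$, matching $\hH=H^{t_1+1}$. I would also double-check that the rows of level $s$ disappear and that, for odd $p$, the lost fiber coordinate $h_0$ reappears as the new level-$(s-1)$ row. Both follow from the ranges above.
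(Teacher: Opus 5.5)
Your proof is correct and follows essentially the same route as the paper. You unpack the conditions $\chi_i(\zz)=0$ to get $\zeta_i=p\xi_i$, note that the level-$s$ entries vanish, and match each factor bijectively with the corresponding factor of $\hcZ\times\Z_p^{s-2}$, with $h_0$ free for odd $p$ and determined by $h'$ for $p=2$; the only difference is that you write the inverse explicitly, whereas the paper phrases this as the coordinates varying independently.
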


\begin{proof}
For $s=2$, the vector $h'$ is empty, and the condition
$h_0=h_1+\cdots+h_{s-2}$ for $p=2$ reads $h_0=0$.  Let $2\leq i\leq m$.  If
$i\leq m'$, then $j_i\leq s-1$.  Recall that $z_i=p^{j_i-1}\zeta_i$ with
$0\leq\zeta_i\leq p^{s-j_i+1}-1$.  Since $z_i\in p^{j_i}\Z_{p^s}$, the
integer $\zeta_i$ is divisible by $p$.  Hence, $z_i=p^{j_i}\xi_i$ with
$\xi_i=\zeta_i/p$, and $0\leq\xi_i\leq p^{s-j_i}-1$.  The integer $\xi_i$ is
unique, and it takes all these values as $(\zz,h)$ runs over $B_\zero$.  If
$i>m'$, then $j_i=s$, and $z_i\in p^{s}\Z_{p^s}=\{0\}$.  For
$2\leq i\leq m'$, the map $\xi_i\mapsto p^{j_i-1}\xi_i$ is a bijection from
$\{0,\dots,p^{s-j_i}-1\}$ onto $\hT_{j_i}$, by \eqref{eq:Ti} for the alphabet
$\Z_{p^{s-1}}$.  On $B_\zero$, the integers $\xi_2,\dots,\xi_{m'}$ and the
vector $h'$ vary independently.  If $p$ is odd, the entry $h_0$ also varies
independently, and $h_0\mapsto p^{s-2}h_0$ is a bijection from $\Z_p$ onto
$\hT_{s-1}$.  If $p=2$, the condition that $\wt_H(h)$ is even means that
$h_0=h_1+\cdots+h_{s-2}$, and $h_0$ is determined by $h'$.  In both cases,
$\beta$ is a bijection from $B_\zero$ onto $\hcZ\times\Z_p^{s-2}$, which is
the set of coordinates of $\hH$.
\end{proof}

For $p=2$, we also need the following lemma, which includes the linear
reduced codes.

\begin{lemma}\label{lem:binary-vector}
Let $p=2$, and let $H=H^{t_1,\dots,t_s}$ be a reduced code.  Then,
$\Phi_s((2^{s-1}-1)\one)\in\K(H)$.
\end{lemma}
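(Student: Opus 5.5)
The plan is to split according to whether $H$ is nonlinear or linear. If $H$ is nonlinear, Theorem \ref{thm:kernel} applies with $p=2$, and it lists $\Phi_s((2^{s-1}-1)\one)$ among the basis vectors of $\K(H)$. So only the linear reduced codes remain. By Theorem \ref{thm:linearity} and the remark after Definition \ref{def:reduced}, these are the codes with $s=2$ and $t_1=2$, that is, $H=H^{2,t_2}$. For them $\K(H)=H$, because $H$ is linear, and it suffices to show that $\Phi_2((2-1)\one)=\Phi_2(\one)$ lies in $H$. This is immediate: $\one=\ww_1\in\cH^{2,t_2}$, so $\Phi_2(\one)\in H=\K(H)$.

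If one prefers a self-contained argument that does not rely on the nonlinear case of Theorem \ref{thm:kernel}, I would verify directly that $\vv=\Phi_s((2^{s-1}-1)\one)$ satisfies $\vv+H=H$. At the coordinate $(\zz,h)$, this vector takes the value $\wt_H(h)\bmod 2=h_0+\cdots+h_{s-2}$, as computed in the proof of Proposition \ref{prop:blocks}. For $\uu\in\cH$, one then wants to show that $\Phi_s(\uu)+\vv=\Phi_s(\uu')$ for some $\uu'\in\cH$.

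The natural candidate is $\uu'=\uu+(2^{s-1}-1)\one$ corrected by a multiple of $2^{s-1}\one$. The difficulty is the carries: Carlet's Gray map is not additive, so the carry analysis could depend on $\uu$. This direct route is therefore the main obstacle, and I would avoid it by relying on the cited Theorem \ref{thm:kernel}.

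With that choice, the proof reduces to the two-case argument above, and the only subtle point is checking that the linear reduced case is exactly $H^{2,t_2}$ for $p=2$.
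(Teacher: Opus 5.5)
Your proof is correct and matches the paper's argument. The nonlinear case follows from Theorem~\ref{thm:kernel}, and in the linear case $\K(H)=H$ contains $\Phi_s((2^{s-1}-1)\ww_1)$; you do not need to identify the linear reduced codes as $H^{2,t_2}$, since $(2^{s-1}-1)\ww_1\in\cH^{t_1,\dots,t_s}$ for every $s$.
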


\begin{proof}
If $H$ is nonlinear, this follows from Theorem \ref{thm:kernel}.  If $H$ is
linear, then $\K(H)=H$.  Moreover, $(2^{s-1}-1)\one=(2^{s-1}-1)\ww_1$ belongs
to $\cH^{t_1,\dots,t_s}$, and hence $\Phi_s((2^{s-1}-1)\one)\in H$.
\end{proof}

\begin{theorem}\label{thm:descent}
Let $H=H^{t_1,\dots,t_s}$ be a reduced code, which is nonlinear if $p=2$.
Then, the bijection $\beta$ of Lemma \ref{lem:beta} carries $H|_{B_\zero}$
onto $\hH$.
\end{theorem}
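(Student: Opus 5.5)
The plan is to compute, for a general codeword $\uu=\sum_{i=1}^m x_i\ww_i$ of $\cH^{t_1,\dots,t_s}$, the value of $\Phi_s(\uu)$ at a coordinate $(\zz,h)\in B_\zero$. This value should coincide with the value at $\beta(\zz,h)=(\hz,h')$ of a codeword $\Phi_{s-1}(\hu)$ of $\hH$, where the coefficients of $\hu$ are explicit functions of the $x_i$. One then shows that, as the $x_i$ range over $\Z_{p^s}$, the resulting coefficients of $\hu$ range over all of $\Z_{p^{s-1}}$. Together with the bijectivity of $\beta$ from Lemma \ref{lem:beta}, this gives $\beta(H|_{B_\zero})=\hH$ by double inclusion: every punctured codeword is carried to a codeword of $\hH$, and every codeword of $\hH$ is hit.

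The computation goes as follows. On $B_\zero$, Lemma \ref{lem:beta} gives $z_i=p^{j_i}\xi_i$ for $2\le i\le m'$ and $z_i=0$ for $i>m'$. Hence
\[
\uu(\zz)=x_1+p\sum_{i=2}^{m'}x_i\,p^{j_i-1}\xi_i .
\]
Write $x_1=\varepsilon+pq$ with $\varepsilon\in\{0,\dots,p-1\}$. Then $\uu(\zz)=\varepsilon+pW'$ with
\[
W'=q+\sum_{i=2}^{m'}x_i\,p^{j_i-1}\xi_i\in\Z_{p^{s-1}},
\]
where the $x_i$ are reduced modulo $p^{s-1}$. This is well defined by the remark before Lemma \ref{lem:shift}, so that lemma gives $\phi_s(\uu(\zz))_h=\phi_{s-1}(W')_{h'}+\varepsilon h_0$.

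\emph{Odd $p$.} By \eqref{eq:top} for $\Z_{p^{s-1}}$, the term $\varepsilon h_0$ equals $\phi_{s-1}(W'+p^{s-2}\varepsilon h_0)_{h'}$. The argument $W'+\varepsilon\cdot p^{s-2}h_0$ is exactly $\hu(\hz)$, where
\[
\hu=q\hw_1+\sum_{i=2}^{m'}x_i\hw_i+\varepsilon\,\hw_{m'+1},
\]
because the last entry of $\hz$ is $p^{s-2}h_0$ by \eqref{eq:beta}. Surjectivity is immediate: $q$ (modulo $p^{s-1}$), the $x_i$ modulo $p^{s-1}$, and $\varepsilon$ are independent. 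Since $\hw_{m'+1}$ has order $p$, only $\varepsilon$ modulo $p$ matters, and $\varepsilon$ runs over $\{0,\dots,p-1\}$. Hence every codeword of $\hcH$ arises.

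\emph{$p=2$.} On $B_\zero$ one has $h_0=h_1+\cdots+h_{s-2}$, so the extra term is
\[
\varepsilon h_0=\varepsilon(h_1+\cdots+h_{s-2})=\phi_{s-1}\bigl((2^{s-2}-1)\varepsilon\bigr)_{h'}-\phi_{s-1}(0)\ldots
\]
This expression is not additive in an obvious way, and it is the main obstacle. The plan is to avoid it by reducing to $\varepsilon=0$. Translating by a kernel vector does not change $H$. By Lemma \ref{lem:binary-vector} and Theorem \ref{thm:kernel}, adding $\Phi_s((2^{s-1}-1)\one)$ to $\Phi_s(\uu)$ gives another codeword of $H$. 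By the proof of Proposition \ref{prop:blocks}, this vector takes the value $\wt_H(h)\bmod 2=0$ on $B_\zero$, so the addition does not help directly. The useful tool is instead the identity
\[
\Phi_s(\uu+\vv)=\Phi_s(\uu)+\Phi_s(\vv)
\]
for $\vv$ with $\Phi_s(\vv)\in\K(H)$, which holds because the kernel is the image of suitable elements. This is delicate, so the cleaner route is a direct one. For $\varepsilon=1$, the value $h_0+h_1+\cdots$ must be absorbed, and the approach will be to check that
\[
\phi_{s-1}(W')_{h'}+(h_1+\cdots+h_{s-2})=\phi_{s-1}\bigl(W'+(2^{s-2}-1)\bigr)_{h'}
\]
does \emph{not} hold in general, because of carries. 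Instead, I would express the set $\{\Phi_s(\uu)|_{B_\zero}\}$ as $\{\Phi_{s-1}(\hu)|\}\cup\{\Phi_{s-1}(\hu)+\mathbf{v}\}$, where $\mathbf{v}$ is the vector $(h_1+\cdots+h_{s-2})$. The claim is then that $\mathbf{v}$ lies in $\K(\hH)$ and that $\mathbf{v}+\hH=\hH$ holds as sets. For $s-1\ge 2$, $\mathbf{v}=\Phi_{s-1}((2^{s-2}-1)\one)$, which lies in $\K(\hH)$ by Lemma \ref{lem:binary-vector} applied to the reduced code $\hH$ (reduced since $t_1\ge2$). For $s=2$, $\mathbf{v}=\zero$. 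This yields $\beta(H|_{B_\zero})\subseteq\hH$, and the case $\varepsilon=0$ already gives surjectivity. I expect the careful handling of this binary $\varepsilon$-term via the kernel of $\hH$ to be the crux.
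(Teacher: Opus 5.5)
Your argument is correct and is essentially the paper's proof. You write $\uu(\zz)=\varepsilon+pW'$ and apply Lemma \ref{lem:shift}; for odd $p$ you absorb $\varepsilon h_0$ into the order-$p$ row $\hw_{m'+1}$ via \eqref{eq:top}; for $p=2$ you absorb $\varepsilon h_0=\varepsilon(h_1+\cdots+h_{s-2})$ as a translate by $\Phi_{s-1}((2^{s-2}-1)\one)\in\K(\hH)$, using Lemma \ref{lem:binary-vector}; and surjectivity comes from $\varepsilon=0$. In the write-up, state $\hu=q\hw_1+\sum_{i=2}^{m'}x_i\hw_i$ explicitly for $p=2$, and delete the abandoned detours. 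In particular, the claimed identity $\Phi_s(\uu+\vv)=\Phi_s(\uu)+\Phi_s(\vv)$ whenever $\Phi_s(\vv)\in\K(H)$ is false for $p=2$ (take $\uu=\one$ and $\vv=(2^{s-1}-1)\one$), and your argument never needs it.
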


\begin{proof}
We compare the values of the codewords.  Let $\uu=\sum_{i=1}^{m}x_i\ww_i$
with $x_i\in\Z_{p^s}$, and let $(\zz,h)\in B_\zero$.  Since $z_1=1$, Lemma
\ref{lem:beta} and \eqref{eq:entry} give
$$
W:=\uu(\zz)=x_1+\sum_{i=2}^{m'}p^{j_i}x_i\xi_i.
$$
Write $x_1=\varepsilon+pq$ with $\varepsilon\in\{0,\dots,p-1\}$ and
$q\in\{0,\dots,p^{s-1}-1\}$.  For $2\leq i\leq m'$, let
$r_i\in\{0,\dots,p^{s-j_i}-1\}$ be the residue of $x_i$ modulo $p^{s-j_i}$.
Since $p^{j_i}p^{s-j_i}=0$ in $\Z_{p^s}$, the element $p^{j_i}x_i\xi_i$ only
depends on $x_i$ modulo $p^{s-j_i}$.  Hence, we may replace $x_i$ by $r_i$,
and we obtain $W=\varepsilon+pW'$, where
$$
W'=q+\sum_{i=2}^{m'}p^{j_i-1}r_i\xi_i\in\Z_{p^{s-1}}.
$$
By Lemma \ref{lem:shift}, we obtain
\begin{equation}\label{eq:core}
\phi_s(W)_h=\phi_{s-1}(W')_{h'}+\varepsilon h_0.
\end{equation}
The two cases now differ in the treatment of the term $\varepsilon h_0$.

Assume first that $p$ is odd.  By \eqref{eq:top} for the alphabet
$\Z_{p^{s-1}}$, the right-hand side of \eqref{eq:core} is equal to
$\phi_{s-1}(W'+p^{s-2}\varepsilon h_0)_{h'}$.  Consider the codeword
$$
\hu=q\,\hw_1+\sum_{i=2}^{m'}r_i\hw_i+\varepsilon\,\hw_{m'+1}
$$
of $\hcH$.  By \eqref{eq:entry} for $\hcH$, its entry at the column $\hz$ of
\eqref{eq:beta} is
$$
\hu(\hz)=q+\sum_{i=2}^{m'}r_ip^{j_i-1}\xi_i+\varepsilon p^{s-2}h_0
=W'+p^{s-2}\varepsilon h_0.
$$
Hence, by \eqref{eq:value}, the value of $\Phi_s(\uu)$ at $(\zz,h)$ is equal
to the value of $\Phi_{s-1}(\hu)$ at $\beta(\zz,h)$, for every
$(\zz,h)\in B_\zero$.  That is, $\beta$ carries $\Phi_s(\uu)|_{B_\zero}$ onto
$\Phi_{s-1}(\hu)$.  It remains to show that $\hu$ runs over all of $\hcH$ as
$\uu$ runs over $\cH^{t_1,\dots,t_s}$.  The map $x_1\mapsto(\varepsilon,q)$ is
a bijection from $\Z_{p^s}$ onto $\{0,\dots,p-1\}\times\{0,\dots,p^{s-1}-1\}$,
and the map $x_i\mapsto r_i$ is onto $\{0,\dots,p^{s-j_i}-1\}$ for
$2\leq i\leq m'$.  On the other hand, the row $\hw_1$ has order $p^{s-1}$, the
row $\hw_i$ has order $p^{s-j_i}$ for $2\leq i\leq m'$, and the row
$\hw_{m'+1}$ has order $p$.  Hence, every codeword of $\hcH$ can be written as
$\hu$ for suitable $q$, $r_2,\dots,r_{m'}$ and $\varepsilon$.  Therefore,
$\beta$ carries $H|_{B_\zero}$ onto $\Phi_{s-1}(\hcH)=\hH$.

Assume now that $p=2$, and consider the codeword
$\hu=q\,\hw_1+\sum_{i=2}^{m'}r_i\hw_i$ of $\hcH$.  By \eqref{eq:entry} for
$\hcH$, its entry at the column $\hz$ of \eqref{eq:beta} is $\hu(\hz)=W'$.
Let $\yy=\Phi_{s-1}((2^{s-2}-1)\one)$.  If $s=2$, then $\yy=\zero$ and
$h_0=0$.  If $s\geq3$, the digits of $2^{s-2}-1\in\Z_{2^{s-1}}$ are equal to
$1$ at the indices $0,\dots,s-3$, and to $0$ at the index $s-2$.  Hence, by
\eqref{eq:gray}, the vector $\yy$ takes at $(\hz,h')$ the value
$h_1+\cdots+h_{s-2}$, which is equal to $h_0$ on $B_\zero$.  In both cases,
\eqref{eq:value} and \eqref{eq:core} show that $\beta$ carries
$\Phi_s(\uu)|_{B_\zero}$ onto $\Phi_{s-1}(\hu)+\varepsilon\yy$.  Moreover,
$\yy\in\K(\hH)$.  Indeed, $\yy=\zero$ if $s=2$.  If $s\geq3$, then
$\hH=H^{t_1,\dots,t_{s-1}}$ is a reduced code, and $\yy\in\K(\hH)$ by Lemma
\ref{lem:binary-vector}.  Therefore, $\Phi_{s-1}(\hu)+\varepsilon\yy\in\hH$,
and $\beta$ carries $H|_{B_\zero}$ into $\hH$.  Conversely, the row $\hw_1$
has order $2^{s-1}$, and the row $\hw_i$ has order $2^{s-j_i}$ for
$2\leq i\leq m'$.  Hence, every codeword of $\hcH$ can be written as $\hu$
for suitable $q$ and $r_2,\dots,r_{m'}$.  Taking $\varepsilon=0$, we see that
every codeword of $\hH$ is the image under $\beta$ of a codeword of
$H|_{B_\zero}$.  Therefore, $\beta$ carries $H|_{B_\zero}$ onto $\hH$.
\end{proof}

Combining Theorem \ref{thm:descent} with Corollary \ref{cor:homogeneous}, we
can now record the descent in the language of Section \ref{sec:derived}.

\begin{corollary}\label{cor:derived}
Let $H=H^{t_1,\dots,t_s}$ be a reduced code of length $p^t$ with kernel
dimension $\kappa$, which is nonlinear if $p=2$.  Then, $H$ is
block-homogeneous, and its derived code is, up to equivalence, the
$\Z_{p^{s-1}}$-linear GH code
\begin{equation}\label{eq:derived}
\partial H^{t_1,\dots,t_s}=\begin{cases}
H^{t_1,\dots,t_{s-2},\,t_{s-1}+1} & \text{if $p$ is odd},\\
H^{t_1,\dots,t_{s-1}} & \text{if } p=2.
\end{cases}
\end{equation}
More precisely, $H|_B$ is permutation equivalent to this code for every
$B\in\cB(H)$.  Moreover, $H$ has $p^{\kappa-1}$ kernel blocks, and
$\partial H$ has length $p^{\,t-\kappa+1}$.
\end{corollary}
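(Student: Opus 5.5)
The corollary follows by assembling results already proved, so the plan is to chain them in order. First, by Corollary \ref{cor:homogeneous}, $H$ is block-homogeneous, and each of its kernel blocks has $p^{\,t-\kappa+1}$ coordinates. By Proposition \ref{prop:blocks}, the number of kernel blocks is $p^{\kappa-1}$. This already settles the counting part of the statement.

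Next, I will identify $H|_{B_\zero}$. By Theorem \ref{thm:descent}, the bijection $\beta$ of Lemma \ref{lem:beta} carries $H|_{B_\zero}$ onto $\hH$. Renaming the coordinates of $B_\zero$ through $\beta$ is exactly a comparison through a bijection of index sets, as described in Section \ref{sec:derived}. Hence $H|_{B_\zero}$ is permutation equivalent to $\hH$, which is the code on the right-hand side of \eqref{eq:derived} in both cases of $p$.

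Then I pass from $B_\zero$ to an arbitrary block $B\in\cB(H)$. By Proposition \ref{prop:transitive}, there is $\gamma\in\Gamma$ with $\gamma(B_\zero)=B$ and $\gamma(H)=H$. Lemma \ref{lem:punctured}, applied with $C=D=H$, $\yy=\zero$ and $\mu=\gamma$, then shows that $H|_B$ is permutation equivalent to $H|_{B_\zero}$. By transitivity of permutation equivalence, $H|_B$ is permutation equivalent to $\hH$. Consequently $\partial H$ is $\hH$ up to equivalence, and its length is $|B|=p^{\,t-\kappa+1}$.

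The only point needing care is consistency. The length of $\hH$ from Theorem \ref{thm:GH} should match $p^{\,t-\kappa+1}$, which serves as a sanity check rather than an obstacle. For $p$ odd, $\kappa=m$, and $\hH$ has parameter sum $t+1-s-t_s+\dots$; I would simply verify $\hat t = t-m+1$ from \eqref{eq:length}. Since $\beta$ is a bijection onto the coordinate set of $\hH$, this holds automatically, so no real obstacle remains.
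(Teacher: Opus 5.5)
Your proposal is correct and follows the paper's own chain of results. Block-homogeneity comes from Corollary \ref{cor:homogeneous}, the identification of $H|_{B_\zero}$ with $\hH$ up to permutation from Theorem \ref{thm:descent}, the transfer to every block from Proposition \ref{prop:transitive}, and the block count from Proposition \ref{prop:blocks}. The only difference is that you read the length $p^{\,t-\kappa+1}$ off the block size (valid, since $\beta$ is a bijection onto the coordinates of $\hH$), whereas the paper recomputes it from \eqref{eq:length} as $\tilde t=t-\kappa+1$; your half-finished parameter-sum remark is therefore unnecessary and can be dropped.
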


\begin{proof}
The code $H$ is block-homogeneous by Corollary \ref{cor:homogeneous}, so its
derived code may be computed on the block $B_\zero$.  By Theorem
\ref{thm:descent}, the code $H|_{B_\zero}$ is permutation equivalent to
$\hH$.  By Proposition \ref{prop:transitive}, so is $H|_B$ for every
$B\in\cB(H)$.  The number of kernel blocks is given by Proposition
\ref{prop:blocks}.  Finally, by \eqref{eq:length}, the code $\hH$ has length
$p^{\,\tilde t}$, where
$$
\tilde t+1=\begin{cases}
\sum_{i=1}^{s-1}(s-i)t_i+1 & \text{if $p$ is odd},\\[1mm]
\sum_{i=1}^{s-1}(s-i)t_i & \text{if } p=2.
\end{cases}
$$
On the other hand, by \eqref{eq:length},
$$
\sum_{i=1}^{s-1}(s-i)t_i=\sum_{i=1}^{s}(s-i)t_i
=\sum_{i=1}^{s}(s-i+1)t_i-\sum_{i=1}^{s}t_i=(t+1)-m,
$$
the term of index $i=s$ being zero.  Since $\kappa=m$ if $p$ is odd, and
$\kappa=m+1$ if $p=2$, we obtain $\tilde t=t-\kappa+1$ in both cases.
\end{proof}

\begin{remark}\label{rem:binary}
The difference between the two cases can be seen on the block $B_\zero$.  In
both cases, the $t_s$ rows of $A^{t_1,\dots,t_s}$ of level $s$ vanish on
$B_\zero$, and they do not contribute to the derived code.  For $p=2$, the
kernel contains $\Phi_s((2^{s-1}-1)\one)$, which splits each fiber into two
halves.  Hence, the permutation $\theta$, which acts on $h$, is needed to
show that the kernel blocks are interchangeable.  On $B_\zero$, the entry
$h_0$ is determined by $h_1,\dots,h_{s-2}$, and the term $\varepsilon h_0$ of
\eqref{eq:core} is absorbed by a vector of the kernel of $\hH$.  For odd $p$,
the fibers are not split, and the translations $\pi_\delta$ are enough.  On
$B_\zero$, the entry $h_0$ is free, and it gives the last entry $p^{s-2}h_0$
of the column $\hz$ in \eqref{eq:beta}, that is, the entry of $\hz$ in the
additional row $\hw_{m'+1}$.  This is why, for odd $p$, the matrix $\hA$ has
one more row of order $p$ than $A^{t_1,\dots,t_{s-1}}$.
\end{remark}

\begin{remark}\label{rem:profile}
By \eqref{eq:span-puncture} and Corollary \ref{cor:derived}, the linear span
of $H$ punctured outside any kernel block has dimension $\rank(\partial H)$.
For $p=2$, this was observed in \cite{OthersProfile}. The kernel-block rank
profile of \cite{Z2Z4Z8Profile} extends naturally to codes over $\Z_p$, as the
multiset of these dimensions.  Hence, for every nonlinear reduced code $H$,
it consists of the single value $\rank(\partial H)$, repeated $p^{\kappa-1}$
times.  By Theorem \ref{thm:GH}, the code on the right-hand side of
\eqref{eq:derived} is a GH code of length $p^{\,t-\kappa+1}$.  Hence,
$\partial H$ has $p^{\,t-\kappa+2}$ codewords, and
$\rank(\partial H)\geq t-\kappa+2$.  Equality holds if and only if
$\partial H$ is linear.  Since the codes $H^{t_1}$ are linear, Theorem
\ref{thm:linearity} shows that this happens if and only if $s=2$ when $p$ is
odd, and if and only if $s=2$, or $s=3$ and $t_1=2$, when $p=2$.
\end{remark}

\begin{figure}[!ht]
\centering
\begin{tikzpicture}[font=\small,
  lab/.style={fill=white,inner sep=1.5pt,font=\footnotesize}]
\def\cs{0.25}
\foreach \g/\col/\name in {0/blue!22/B_0,1/red!22/B_1,2/green!25/B_2}{
  \begin{scope}[xshift=\g*2.55cm]
    \fill[\col] (0,0) rectangle (9*\cs,9*\cs);
    \foreach \r in {0,...,9}{\draw[black!35,line width=0.2pt]
      (0,\r*\cs)--(9*\cs,\r*\cs);}
    \foreach \k in {0,...,9}{\draw[line width=0.4pt]
      (\k*\cs,0)--(\k*\cs,9*\cs);}
    \node[lab] at (4.5*\cs,4.5*\cs) {$\name$};
    \node at (4.5*\cs,-0.3) {$a\equiv\g\pmod 3$};
  \end{scope}
}
\node[rotate=90] at (-0.35,4.5*\cs) {$h\in\Z_3^2$};
\node at (3.675,-1.05) {(a) $p=3$};
\begin{scope}[xshift=8.75cm,yshift=0.1cm]
\def\ds{0.46}
\foreach \g/\colA/\colB/\nameA/\nameB in
  {0/blue!22/green!25/{B_{(0,0)}}/{B_{(0,1)}},
   1/red!22/orange!30/{B_{(1,0)}}/{B_{(1,1)}}}{
  \begin{scope}[xshift=\g*2.1cm]
    \fill[\colA] (0,2*\ds) rectangle (4*\ds,4*\ds);
    \fill[\colB] (0,0) rectangle (4*\ds,2*\ds);
    \foreach \r in {0,...,4}{\draw[black!35,line width=0.2pt]
      (0,\r*\ds)--(4*\ds,\r*\ds);}
    \foreach \k in {0,...,4}{\draw[line width=0.4pt]
      (\k*\ds,0)--(\k*\ds,4*\ds);}
    \node[lab] at (2*\ds,3*\ds) {$\nameA$};
    \node[lab] at (2*\ds,1*\ds) {$\nameB$};
  \end{scope}
}
\foreach \k/\a in {0/0,1/2,2/4,3/6,4/1,5/3,6/5,7/7}{
  \node[font=\footnotesize] at ({\k*\ds+0.5*\ds+(\k>3)*(2.1-4*\ds)},-0.22)
  {$\a$};}
\node[font=\footnotesize] at (2*\ds,-0.6) {$a$ even};
\node[font=\footnotesize] at (2.1+2*\ds,-0.6) {$a$ odd};
\foreach \r/\h in {0/10,1/01,2/11,3/00}{
  \node[font=\footnotesize] at (-0.3,\r*\ds+0.5*\ds) {$\h$};}
\node at (-0.75,2*\ds) {$h$};
\node[font=\footnotesize,anchor=west] at (2.1+4*\ds+0.1,3*\ds)
  {$h_0+h_1=0$};
\node[font=\footnotesize,anchor=west] at (2.1+4*\ds+0.1,1*\ds)
  {$h_0+h_1=1$};
\node at (2.0,-1.25) {(b) $p=2$};
\end{scope}
\end{tikzpicture}
\caption{The kernel blocks of $H^{2,0,0}$.  Each column of cells is a fiber
$F_{(1,a)}$, and the cells of the same color form a kernel block.  For $p=2$,
every fiber meets two kernel blocks.}
\label{fig:blocks}
\end{figure}

\begin{example}\label{ex:descent}
We follow the construction on the smallest reduced code with $s=3$.  Let
$(t_1,t_2,t_3)=(2,0,0)$.  Then, $m=2$, both rows have level $1$,
and $t+1=6$ by \eqref{eq:length}.  Hence, $H^{2,0,0}$ has length $p^5$ and
$p^6$ codewords, and it is nonlinear by Theorem \ref{thm:linearity}.  The
columns of $A^{2,0,0}$ are the $p^3$ vectors $(1,a)^T$ with $a\in\Z_{p^3}$.
Hence, the coordinates of $H^{2,0,0}$ are the pairs $((1,a),h)$, where
$h=(h_0,h_1)\in\Z_p^2$.  By Theorem \ref{thm:kernel}, the kernel of
$H^{2,0,0}$ contains $\Phi_3(p^2\ww_1)=\one$ and $\Phi_3(p^2\ww_2)$.  The
second vector takes the value $a\bmod p$ at $((1,a),h)$, since
$p^2a=p^2(a\bmod p)$ in $\Z_{p^3}$.  For a codeword with coefficients
$x_1,x_2\in\Z_{p^3}$, the entry at the column $(1,a)$ is $W=x_1+x_2a$.  On
$B_\zero$, we write $a=p\xi$ with $0\leq\xi\leq p^2-1$.  Let
$x_1=\varepsilon+pq$ as in the proof of Theorem \ref{thm:descent}, and let
$r_2$ be the residue of $x_2$ modulo $p^2$.  Then, $W=\varepsilon+pW'$, where
$W'=q+r_2\xi\in\Z_{p^2}$.

Let $p$ be odd.  Then, $\kappa=2$, and the $p$ kernel blocks are the sets
$$
B_c=\bigl\{((1,a),h):\ a\equiv c\pmod p,\ h\in\Z_p^2\bigr\},
$$
with $c\in\Z_p$, each of them with $p^2\cdot p^2=p^4=p^{\,t-\kappa+1}$
coordinates.  The permutations of Proposition \ref{prop:transitive} are the
maps $\pi_\delta$ with $\delta=(0,d)$ and $d\in\Z_{p^3}$, and $\pi_\delta$
sends ${((1,a),h)}$ to ${((1,a+d),h)}$.  For $d=1$, it maps $B_c$ onto
$B_{c+1}$.  On $B_\zero$, we obtain
$$
\phi_3(W)_h=W'_1+h_1W'_0+\varepsilon h_0
=\phi_2\bigl(W'+p\,\varepsilon h_0\bigr)_{h_1}.
$$
The element $W'+p\,\varepsilon h_0$ is the entry at the column
$(1,\xi,ph_0)$ of the codeword $q\hw_1+r_2\hw_2+\varepsilon\hw_3$ of the
$\Z_{p^2}$-additive code $\cH^{2,1}$.  Therefore, the map
$((1,a),h)\mapsto((1,\xi,ph_0),h_1)$ carries $H^{2,0,0}|_{B_\zero}$ onto
$H^{2,1}$, as Theorem \ref{thm:descent} asserts.  This derived code is the
code of Example \ref{ex:construction}, and it is nonlinear.

Let $p=2$.  Then, $\kappa=3$, and the kernel also contains
$\Phi_3(3\cdot\one)$, which takes the value $h_0+h_1$ at $((1,a),h)$.  Hence,
the four kernel blocks are the sets
$$
B_{(c,e)}=\bigl\{((1,a),h):\ a\equiv c\pmod 2,\ h_0+h_1=e\bigr\},
$$
with $(c,e)\in\Z_2^2$, each of them with $8=2^{\,t-\kappa+1}$ coordinates.
The permutation $\pi_\delta$ with $\delta=(0,1)$ changes $c$, and $\theta$
changes $e$.  On
$B_\zero$, we have $h_0=h_1$, and we obtain
$$
\phi_3(W)_h=W'_1+h_1W'_0+\varepsilon h_1
=\phi_2(W')_{h_1}+\varepsilon\,\phi_2(1)_{h_1}.
$$
The element $W'$ is the entry at the column $(1,\xi)$ of the codeword
$q\hw_1+r_2\hw_2$ of the $\Z_4$-additive code $\cH^{2,0}$, and the vector
$\Phi_2(\one)$ belongs to the linear code $H^{2,0}$.  Therefore, the map
$((1,a),h)\mapsto((1,\xi),h_1)$ carries $H^{2,0,0}|_{B_\zero}$ onto $H^{2,0}$,
which is a binary linear Hadamard code of length $2^3$.

Figure \ref{fig:blocks} shows the kernel blocks of $H^{2,0,0}$ for $p=3$ and
$p=2$.
\end{example}
\section{The classification}\label{sec:classification}

We can now classify the reduced codes.  By Corollary \ref{cor:derived}, the
derived code of a nonlinear reduced $\Z_{p^s}$-linear GH code with $s\geq3$
is a reduced $\Z_{p^{s-1}}$-linear GH code.  If $p$ is odd, this code is
nonlinear, and the descent can be iterated until $s$ becomes $1$.  If
$p=2$, it is linear when $s=3$ and $t_1=2$, and then the descent stops.
This leads to a comparison between a $\Z_4$-linear Hadamard code and a
$\Z_8$-linear Hadamard code whose derived codes are both linear.  We first
deal with this comparison, which only occurs for $p=2$.  It is the only place
where the rank is needed.

\begin{proposition}\label{prop:exceptional}
Let $p=2$, let $H^{a_1,a_2}$ be a nonlinear $\Z_4$-linear Hadamard code, and
let $H^{2,b_2,b_3}$ be a $\Z_8$-linear Hadamard code.  Assume that the two
codes have the same length $2^t$ and the same dimension of the kernel.
Then, $a_1=b_2+4$ and $a_2=b_3-2$.  Moreover,
$$
\rank\bigl(H^{a_1,a_2}\bigr)-\rank\bigl(H^{2,b_2,b_3}\bigr)=a_1-3\geq1.
$$
In particular, the two codes are nonequivalent.
\end{proposition}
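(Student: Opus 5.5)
We read off the two equations for the parameters from the length and the kernel dimension, and then substitute into the rank formulas \eqref{eq:rank4} and \eqref{eq:rank8}.

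First, the length. For $H^{a_1,a_2}$ with $s=2$, \eqref{eq:length} gives $t+1=2a_1+a_2$. For $H^{2,b_2,b_3}$ with $s=3$, it gives $t+1=3\cdot2+2b_2+b_3=6+2b_2+b_3$.

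Next, the kernel. Since $H^{a_1,a_2}$ is nonlinear, Theorem \ref{thm:linearity} gives $a_1\geq3$, so the code is reduced. Theorem \ref{thm:kernel} then yields $\kernel(H^{a_1,a_2})=a_1+a_2+1$. The code $H^{2,b_2,b_3}$ has $t_1=2$ and $s=3$, so it is reduced. Its parameters are not of the form $(1,0,t_3)$ or $(1,1,t_3)$, so it is nonlinear by Theorem \ref{thm:linearity}. Hence Theorem \ref{thm:kernel} gives $\kernel(H^{2,b_2,b_3})=2+b_2+b_3+1$.

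Equating the kernel dimensions gives $a_1+a_2=2+b_2+b_3$. Equating the lengths gives $2a_1+a_2=6+2b_2+b_3$. Subtracting the first equation from the second gives $a_1=b_2+4$. Substituting back gives $a_2=b_3-2$.

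Now compute the ranks. By \eqref{eq:rank4},
$$
\rank(H^{a_1,a_2})=2a_1+a_2+\binom{a_1-1}{2}.
$$
By \eqref{eq:rank8} with $t_1=2$, the first term is $(16-16+140+28)/24=7$, so
$$
\rank(H^{2,b_2,b_3})=7+\tfrac{b_2}{2}(7+b_2)+b_3+1.
$$
Put $b_2=a_1-4$ and $b_3=a_2+2$. The difference of the two ranks is then
$$
2a_1+\binom{a_1-1}{2}-10-\tfrac{(a_1-4)(a_1+3)}{2}.
$$
Expanding, $\binom{a_1-1}{2}=(a_1^2-3a_1+2)/2$ and $(a_1-4)(a_1+3)=a_1^2-a_1-12$. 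Their difference over $2$ is $(-2a_1+14)/2=7-a_1$. The total is therefore $2a_1+7-a_1-10=a_1-3$.

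Since $b_2\geq0$, we have $a_1\geq4$, so $a_1-3\geq1$. The two codes thus have different ranks, and Lemma \ref{lem:invariance} shows they are nonequivalent. The only delicate step is confirming that both codes are nonlinear and reduced, so that Theorem \ref{thm:kernel} applies to each; the rest is arithmetic.
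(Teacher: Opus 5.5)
Your proposal is correct and follows essentially the same route as the paper: you check nonlinearity via Theorem~\ref{thm:linearity}, obtain $a_1=b_2+4$ and $a_2=b_3-2$ from \eqref{eq:length} and Theorem~\ref{thm:kernel}, and substitute into \eqref{eq:rank4} and \eqref{eq:rank8} to get the difference $a_1-3\geq1$. Your arithmetic is right throughout, including the constant $7$ in \eqref{eq:rank8} at $t_1=2$ and the cancellation of $a_2$.
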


\begin{proof}
By Theorem \ref{thm:linearity}, the first code has $a_1\geq3$, and the second
one is nonlinear, since its first parameter is $2$.  Hence, Theorem
\ref{thm:kernel} applies to both of them.  Let $\kappa$ be the common
dimension of the kernel.  By \eqref{eq:length} and Theorem \ref{thm:kernel},
applied to both codes, we have
$$
t+1=2a_1+a_2=6+2b_2+b_3\quad\text{and}\quad\kappa=1+a_1+a_2=3+b_2+b_3.
$$
Hence, $(t+1)-\kappa=a_1-1=3+b_2$, and so $a_1=b_2+4$.  Then,
$a_2=\kappa-1-a_1=(2+b_2+b_3)-(b_2+4)=b_3-2$.  In particular, $a_1\geq4$.

We now compare the ranks.  By \eqref{eq:rank4},
$$
\rank\bigl(H^{a_1,a_2}\bigr)=2a_1+a_2+\frac{a_1^{2}-3a_1+2}{2}.
$$
On the other hand, we put $t_1=2$ in \eqref{eq:rank8}.  Since
$2^4-2\cdot2^3+35\cdot2^2+14\cdot2=168=7\cdot24$, we obtain
$$
\rank\bigl(H^{2,b_2,b_3}\bigr)=7+\frac{b_2}{2}\bigl(2^2+2+b_2+1\bigr)+b_3+1
=8+\frac{b_2(b_2+7)}{2}+b_3.
$$
We substitute $b_2=a_1-4$ and $b_3=a_2+2$, and we get
$$
\rank\bigl(H^{2,b_2,b_3}\bigr)=8+\frac{(a_1-4)(a_1+3)}{2}+a_2+2
=10+a_2+\frac{a_1^{2}-a_1-12}{2}.
$$
Therefore,
$$
\begin{aligned}
\rank\bigl(H^{a_1,a_2}\bigr)-\rank\bigl(H^{2,b_2,b_3}\bigr)
&=2a_1-10+\frac{(a_1^{2}-3a_1+2)-(a_1^{2}-a_1-12)}{2}\\
&=2a_1-10+\frac{14-2a_1}{2}=a_1-3.
\end{aligned}
$$
Since $a_1\geq4$, the two ranks are different.  By Lemma
\ref{lem:invariance}, the two codes are nonequivalent.
\end{proof}

The difference $a_1-3$ would vanish only for $a_1=3$, which is
exactly the case where no such $\Z_8$-linear Hadamard code exists.  We can
now prove the main result of this paper.

\begin{theorem}\label{thm:main}
Let $H=H^{t_1,\dots,t_s}$ and $H'=H^{t'_1,\dots,t'_{s'}}$ be reduced codes of
the same length.  Then, $H$ and $H'$ are equivalent if and only if $s=s'$ and
$t_i=t'_i$ for $1\leq i\leq s$.
\end{theorem}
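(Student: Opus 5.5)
The "if" direction is trivial, so the work is the "only if" direction. I would argue by strong induction on the common length $p^t$. The induction uses three invariants of an equivalence class, via Lemma \ref{lem:invariance} and Corollary \ref{cor:derived-invariant}: linearity, the kernel dimension $\kappa$, and the derived code up to equivalence. Suppose $H$ and $H'$ are equivalent reduced codes. If $p=2$ and one of them is linear, then both are linear. By Theorem \ref{thm:linearity}, a linear reduced code has $s=2$ and $t_1=2$, and hence parameters $(2,t-3)$; so $H$ and $H'$ coincide. Otherwise both are nonlinear, hence block-homogeneous by Corollary \ref{cor:derived}. They then have the same $\kappa$, and $\partial H$ is equivalent to $\partial H'$.

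\emph{Odd $p$.} The kernel dimension gives $m=m'$, and $\partial H=H^{t_1,\dots,t_{s-2},t_{s-1}+1}$, with the analogous formula for $H'$. If $s=2$, then $\partial H=H^{t_1+1}$ is linear, and it is nonlinear whenever $s\geq3$ (it is reduced, and reduced codes are nonlinear for odd $p$). Hence $s=2$ if and only if $s'=2$. For $s=s'=2$, the length equation $t+1=2t_1+t_2$ together with $m=t_1+t_2$ determines $(t_1,t_2)$. For $s,s'\geq3$, both derived codes are reduced codes of the same, strictly smaller, length $p^{t-\kappa+1}$. By induction, $s-1=s'-1$ and $(t_1,\dots,t_{s-2},t_{s-1}+1)=(t'_1,\dots,t'_{s-2},t'_{s-1}+1)$. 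Then $m=m'$ yields $t_s=t'_s$.

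\emph{$p=2$, nonlinear case.} Here $\partial H=H^{t_1,\dots,t_{s-1}}$ has length $2^{t-\kappa+1}$, and $m=m'$. If $s=s'=2$, the same two equations as above decide. If $s\geq3$, then $\partial H$ is either a nonlinear reduced code or, when $s=3$ and $t_1=2$, the linear code $H^{2,t_2}$. For $s=2$, the derived code is $H^{t_1}$, a linear code of the form $H^{k}$ with $s=1$. Linear binary Hadamard codes of the same length are all equivalent. So the situations in which both derived codes are linear are: $s=s'=2$; $s=3$ with $t_1=2$ against $s'=2$; and $s=s'=3$ with $t_1=t'_1=2$. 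The middle case is exactly Proposition \ref{prop:exceptional}: the two codes have equal length and equal kernel dimension, but different ranks, contradicting equivalence. In the last case, $(2,t_2,t_3)$ and $(2,t'_2,t'_3)$ are determined by the length and by $\kappa=m+1$. In all remaining cases at least one derived code is nonlinear and reduced; then so is the other, since linearity is an equivalence invariant. Induction then gives $s-1=s'-1$ and $t_i=t'_i$ for $i\leq s-1$, and $m=m'$ gives $t_s=t'_s$.

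\emph{Main obstacle.} The hard part is the bookkeeping in the binary case: making sure every pairing in which the descent stops at a linear code is covered by either the elementary parameter count or Proposition \ref{prop:exceptional}. One must check, for instance, that a code with $s\geq4$ never has a linear derived code, since its derived code is a reduced code with $s-1\geq3$. One must also check that the $s=1$ derived codes $H^{t_1}$, which the induction hypothesis does not cover because they are not reduced, only arise when $s=2$. The induction is well founded because $t-\kappa+1<t$.
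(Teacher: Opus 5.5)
Your proposal is correct and takes essentially the same route as the paper. It runs induction on $t$ through three invariants (linearity, $\kappa$ and the derived code), and Proposition~\ref{prop:exceptional} closes the single binary pairing $s=2$ against $s'=3$, $t'_1=2$. The remaining differences are bookkeeping: you settle $s=s'=3$, $t_1=t'_1=2$ directly from the length and kernel equations, whereas the paper absorbs it into the induction hypothesis, which already covers linear reduced codes; and you recover $t_s$ from $m=m'$ rather than from \eqref{eq:length}.
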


\begin{proof}
If the parameters are equal, then the two codes are equal.  We prove the
converse by induction on $t$, where $p^t$ is the common length.  Note that a
reduced code of length $p^t$ satisfies $t+1\geq st_1\geq4$ by
\eqref{eq:length}, and hence $t\geq3$.  Let $t\geq3$, and assume that the
converse holds for all reduced codes of smaller length.  Let $H$ and $H'$ be
equivalent reduced codes of length $p^t$.

If $H$ is linear, then $H'$ is linear by Lemma \ref{lem:invariance}.  By
Theorem \ref{thm:linearity}, this is only possible if $p=2$, $s=s'=2$ and
$t_1=t'_1=2$.  Then, \eqref{eq:length} gives $t_2=t'_2=t-3$, and the
parameters are equal.  Hence, we may assume that $H$ and $H'$ are nonlinear.
By Lemma \ref{lem:invariance}, the two codes have the same kernel dimension
$\kappa$.  By Corollary \ref{cor:derived}, both codes are block-homogeneous,
and their derived codes have length $p^{\tilde t}$ with
$\tilde t=t-\kappa+1$.  Moreover, $\partial H$ and $\partial H'$ are
equivalent by Corollary \ref{cor:derived-invariant}.  Since
$\kappa\geq m\geq t_1\geq2$ by Theorem \ref{thm:kernel}, we have
$\tilde t<t$.

Assume first that $s\geq3$ and $s'\geq3$.  By \eqref{eq:derived},
$\partial H$ is a $\Z_{p^{s-1}}$-linear GH code and $\partial H'$ is a
$\Z_{p^{s'-1}}$-linear GH code, with $s-1\geq2$ and $s'-1\geq2$.  Their first
parameters are $t_1\geq2$ and $t'_1\geq2$.  Hence, they are reduced codes of
length $p^{\tilde t}$.  By the induction hypothesis, their parameters are
equal.  In particular, $s-1=s'-1$.  If $p$ is odd, we also obtain
$t_i=t'_i$ for $1\leq i\leq s-2$ and $t_{s-1}+1=t'_{s-1}+1$.  If $p=2$, we
obtain $t_i=t'_i$ for $1\leq i\leq s-1$.  In both cases, $s=s'$ and $t_i=t'_i$
for $1\leq i\leq s-1$.  Then, \eqref{eq:length} gives $t_s=t'_s$, since all the
other terms of $\sum_{i=1}^{s}(s-i+1)t_i$ and $\sum_{i=1}^{s}(s-i+1)t'_i$
coincide.

Assume next that $s=s'=2$.  By \eqref{eq:derived}, the codes $\partial H$ and
$\partial H'$ have lengths $p^{t_1}$ and $p^{t'_1}$ if $p$ is odd, and
$2^{t_1-1}$ and $2^{t'_1-1}$ if $p=2$.  Since both lengths are equal to
$p^{\tilde t}$, we obtain $t_1=t'_1$, and \eqref{eq:length} gives
$t_2=t'_2$.

Finally, assume that exactly one of $s$ and $s'$ equals $2$, say $s=2$ and
$s'\geq3$.  Then, $\partial H$ is linear, and so is $\partial H'$ by Lemma
\ref{lem:invariance}.  On the other hand, $\partial H'$ is a reduced code, as
above.  If $p$ is odd, this is a contradiction, since no reduced code is
linear.  If $p=2$, then Theorem \ref{thm:linearity} gives $s'-1=2$ and
$t'_1=2$.  Hence, $H=H^{t_1,t_2}$ is a nonlinear $\Z_4$-linear Hadamard code,
and $H'=H^{2,t'_2,t'_3}$ is a $\Z_8$-linear Hadamard code of the same length
and with the same dimension of the kernel.  By Proposition
\ref{prop:exceptional}, these two codes are nonequivalent, which is a
contradiction.  Hence, this situation does not occur, and the proof is
complete.
\end{proof}

\begin{example}\label{ex:t7}
For $t\leq6$, the reduced codes of length $p^t$ are $H^{2,0}$ for $t=3$,
$H^{2,1}$ for $t=4$, the codes $H^{2,2}$, $H^{3,0}$ and $H^{2,0,0}$ for
$t=5$, and the codes $H^{2,3}$, $H^{3,1}$ and $H^{2,0,1}$ for $t=6$.  By
Theorems \ref{thm:linearity} and \ref{thm:kernel}, the nonlinear reduced
codes of the same length have different dimensions of the kernel in all
these cases, for every prime $p$.  The first coincidence of kernel dimensions
occurs for $t=7$, between the codes $H^{4,0}$ and $H^{2,0,2}$.  Both codes
have length $p^7$, since $2\cdot4+0=3\cdot2+2\cdot0+2=8$.  Their kernels have
dimension $4$ if $p$ is odd, and $5$ if $p=2$.  By Corollary
\ref{cor:derived}, they also have the same number of kernel blocks, all of
the same size.

For odd $p$, their derived codes are the linear code $H^{5}$ of length $p^4$
and the nonlinear code $H^{2,1}$.  Hence, the two codes are nonequivalent for
every odd prime $p$.  For $p=3$ and $p=5$, this also follows from their
ranks, which are $21$ and $15$ for $p=3$, and $57$ and $31$ for $p=5$
\cite{HadamardZps,ZpsEquivalence}.

For $p=2$, their derived codes are $H^{4}$ and $H^{2,0}$, which are binary
linear Hadamard codes of length $2^3$.  Hence, the descent gives no
information, and this is the smallest instance of Proposition
\ref{prop:exceptional}, with $a_1=4$.  By \eqref{eq:rank4} and
\eqref{eq:rank8}, the ranks of $H^{4,0}$ and $H^{2,0,2}$ are $11$ and $10$,
and the two codes are nonequivalent.
\end{example}

For odd $p$, Theorem \ref{thm:main} also follows from explicit invariants.
For a code $C$ over $\Z_p$ containing $\zero$, we put $\partial^0C=C$ and
$\partial^kC=\partial(\partial^{k-1}C)$ for $k\geq1$, as long as these
derived codes are defined.  For parameters $(t_1,\dots,t_s)$, we also put
$$
\kappa_j=t_1+\cdots+t_j,
$$
where $1\leq j\leq s$.  Note that $\kappa_s=m$.

\begin{proposition}\label{prop:iterated}
Let $p$ be odd, and let $H=H^{t_1,\dots,t_s}$ be a reduced code.  Then, for
$1\leq k\leq s-1$, the code $\partial^kH$ is well defined up to equivalence,
and it is the $\Z_{p^{s-k}}$-linear GH code
\begin{equation}\label{eq:iterated}
\partial^kH=H^{t_1,\dots,t_{s-k-1},\,t_{s-k}+1}.
\end{equation}
For $1\leq k\leq s-2$, this code is reduced, and the dimension of its kernel
is $\kappa_{s-k}+1$.  For $k=s-1$, it is the linear code $H^{t_1+1}$, whose
dimension is $\kappa_1+1$.
\end{proposition}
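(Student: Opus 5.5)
We argue by induction on $k$, applying Corollary \ref{cor:derived} at each step. For $k=1$, the code $H$ is reduced and, since $p$ is odd, nonlinear by Theorem \ref{thm:linearity}. Corollary \ref{cor:derived} therefore shows that $H$ is block-homogeneous and that $\partial H$ is, up to equivalence, $H^{t_1,\dots,t_{s-2},t_{s-1}+1}$. This is \eqref{eq:iterated} for $k=1$.

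For the inductive step, let $1\leq k\leq s-2$ and assume that $\partial^kH$ is well defined up to equivalence and equivalent to $H_k:=H^{t_1,\dots,t_{s-k-1},t_{s-k}+1}$. This code has $s-k\geq2$ parameters and first parameter $t_1\geq2$. When $s-k=2$ the first parameter is $t_1+1\geq2$ instead. In either case $H_k$ is reduced, hence nonlinear for odd $p$.

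To define $\partial^{k+1}H$ we must apply $\partial$ to a code only known up to equivalence. Corollary \ref{cor:derived-invariant} handles this: any code equivalent to the block-homogeneous code $H_k$ is block-homogeneous, and its derived code is equivalent to $\partial H_k$. So $\partial^{k+1}H$ is well defined up to equivalence and equivalent to $\partial H_k$.

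Applying Corollary \ref{cor:derived} to $H_k$ deletes the last parameter $t_{s-k}+1$ and adds $1$ to the new last parameter $t_{s-k-1}$. This gives $H^{t_1,\dots,t_{s-k-2},t_{s-k-1}+1}$, which is \eqref{eq:iterated} for $k+1$. When $k+1=s-1$, the same rule applied to $H^{t_1,t_2+1}$ gives $H^{t_1+1}$.

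For the kernel dimensions, take $1\leq k\leq s-2$. The code $\partial^kH$ is reduced and nonlinear with $s-k$ parameters. By Theorem \ref{thm:kernel}, its kernel dimension equals its number of rows:
$$
t_1+\cdots+t_{s-k-1}+(t_{s-k}+1)=\kappa_{s-k}+1.
$$
By Lemma \ref{lem:invariance}, this number does not depend on the representative of the equivalence class.

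For $k=s-1$, the code $H^{t_1+1}$ is linear. Its kernel is the whole code, of dimension $t_1+1=\kappa_1+1$ by the discussion of the case $s=1$ in Section \ref{sec:prelim}.

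The only point needing care is the well-definedness step, namely passing from $\partial^kH$ known up to equivalence to its derived code. Corollary \ref{cor:derived-invariant} resolves it directly. We should also check the boundary case $s-k=2$, where the first parameter becomes $t_1+1$; reducedness still holds there.
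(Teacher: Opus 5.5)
Your proof is correct and follows the paper's argument essentially step for step. It uses induction on $k$ via Corollary \ref{cor:derived}, with Corollary \ref{cor:derived-invariant} handling well-definedness and Theorem \ref{thm:kernel} giving the kernel dimensions. There is one harmless slip: for $s-k=2$ the code is $H^{t_1,t_2+1}$, whose first parameter is still $t_1\geq2$, as you yourself use later; the entry $t_1+1$ only appears at $k=s-1$, where the code $H^{t_1+1}$ is linear rather than reduced.
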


\begin{proof}
We argue by induction on $k$.  For $k=1$, \eqref{eq:iterated} is Corollary
\ref{cor:derived}.  Let $1\leq k\leq s-2$, and assume that $\partial^kH$ is
well defined and equivalent to $H'=H^{t_1,\dots,t_{s-k-1},t_{s-k}+1}$.  The
code $H'$ has $s-k\geq2$ parameters, and its first parameter is $t_1\geq2$,
since $s-k-1\geq1$.  Hence, $H'$ is reduced.  By Corollary \ref{cor:derived},
it is block-homogeneous, and its derived code is obtained by deleting its
last parameter $t_{s-k}+1$ and adding $1$ to the new last parameter
$t_{s-k-1}$.  That is, $\partial H'=H^{t_1,\dots,t_{s-k-2},t_{s-k-1}+1}$.  By
Corollary \ref{cor:derived-invariant}, every code equivalent to $H'$ is
block-homogeneous, and its derived code is equivalent to $\partial H'$.
Hence, $\partial^{k+1}H$ is well defined, and \eqref{eq:iterated} holds for
$k+1$.

For $1\leq k\leq s-2$, the kernel of the reduced code in \eqref{eq:iterated}
has dimension $t_1+\cdots+t_{s-k}+1=\kappa_{s-k}+1$ by Theorem
\ref{thm:kernel}.  For $k=s-1$, the right-hand side of \eqref{eq:iterated} is
the linear code $H^{t_1+1}$, of dimension $t_1+1=\kappa_1+1$.
\end{proof}

\begin{corollary}\label{cor:odd-invariant}
Let $p$ be odd, and let $H=H^{t_1,\dots,t_s}$ be a reduced code.  Then,
$s-1$ is the smallest integer $k\geq1$ for which $\partial^kH$ is linear, and
the parameters of $H$ are determined by the dimensions of the kernels of
$\partial^kH$ for $0\leq k\leq s-1$.  Moreover, if $H'$ is equivalent to $H$,
then $\partial^kH'$ is well defined and equivalent to $\partial^kH$ for
$0\leq k\leq s-1$.
\end{corollary}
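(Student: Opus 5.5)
The plan is to read everything off Proposition \ref{prop:iterated}, using Corollary \ref{cor:derived-invariant} for the invariance part.

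\emph{Smallest linear derivative.} For $1\leq k\leq s-2$, Proposition \ref{prop:iterated} shows that $\partial^kH$ is a reduced code with $s-k\geq2$ parameters. Since $p$ is odd, Theorem \ref{thm:linearity} makes it nonlinear: its first parameter is $t_1\geq2$, so its parameters are not of the form $(1,0,\dots,0,*)$. For $k=s-1$, the code $\partial^{s-1}H=H^{t_1+1}$ is linear. Hence $s-1$ is the smallest $k\geq1$ with $\partial^kH$ linear. Since linearity is an equivalence invariant (Lemma \ref{lem:invariance}), this detects $s$.

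\emph{Recovering the parameters.} Put $d_k=\kernel(\partial^kH)$. Then $d_0=\kappa_s=m$ by Theorem \ref{thm:kernel}, $d_k=\kappa_{s-k}+1$ for $1\leq k\leq s-2$, and $d_{s-1}=\kappa_1+1$, the dimension of a linear code being its kernel dimension. So the partial sums $\kappa_1,\dots,\kappa_{s-1}$ are determined by $d_1,\dots,d_{s-1}$, and $\kappa_s$ is determined by $d_0$. Taking differences gives $t_1=\kappa_1$ and $t_j=\kappa_j-\kappa_{j-1}$. Alternatively, $t_s$ can be recovered from the length via \eqref{eq:length}.

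\emph{Invariance.} Let $H'$ be equivalent to $H$. I will argue by induction on $k$ that $\partial^kH'$ is well defined and equivalent to $\partial^kH$. For $k=0$ this is trivial. For the inductive step, suppose it holds for some $k\leq s-2$. By Proposition \ref{prop:iterated}, $\partial^kH$ is equivalent to a reduced nonlinear code, which is block-homogeneous by Corollary \ref{cor:derived} (for $k=0$ this is $H$ itself). Corollary \ref{cor:derived-invariant} then says that every code equivalent to it is block-homogeneous with an equivalent derived code. A representative of $\partial^kH'$ is such a code, since it contains $\zero$ as a punctured code of a code containing $\zero$. Applying Corollary \ref{cor:derived-invariant} to it gives the step to $k+1$.

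The only delicate point is this well-definedness bookkeeping: $\partial^kH$ is defined only up to equivalence, so each step must apply Corollary \ref{cor:derived-invariant} to an arbitrary representative containing $\zero$. Everything else is immediate from Proposition \ref{prop:iterated}.
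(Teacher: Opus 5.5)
Your proposal is correct and follows the paper's proof essentially step for step. The three parts match it exactly: nonlinearity of the reduced codes $\partial^kH$ for $1\leq k\leq s-2$ against linearity of $H^{t_1+1}$; the kernel dimensions $\kappa_s,\kappa_{s-1}+1,\dots,\kappa_1+1$ read off from Theorem \ref{thm:kernel} and Proposition \ref{prop:iterated}; and the same induction on $k$ through Corollaries \ref{cor:derived} and \ref{cor:derived-invariant}, applied to representatives containing $\zero$.
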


\begin{proof}
By Proposition \ref{prop:iterated}, the code $\partial^kH$ is reduced, and
hence nonlinear, for $1\leq k\leq s-2$, while $\partial^{s-1}H$ is linear.
This gives the first assertion.  By Theorem \ref{thm:kernel} and Proposition
\ref{prop:iterated}, the kernels of $H,\partial H,\dots,\partial^{s-1}H$ have
dimensions $\kappa_s,\kappa_{s-1}+1,\dots,\kappa_1+1$.  Here, we use that the
kernel of a linear code is the whole code.  Since $t_1=\kappa_1$ and
$t_j=\kappa_j-\kappa_{j-1}$ for $2\leq j\leq s$, these dimensions determine
the parameters.  Finally, let $H'$ be equivalent to $H$.  Assume that
$\partial^kH'$ is well defined and equivalent to $\partial^kH$ for some $k$
with $0\leq k\leq s-2$.  The code $\partial^kH$ is equivalent to a reduced
code $C_k$, by hypothesis if $k=0$, and by Proposition \ref{prop:iterated}
otherwise.  By Corollary \ref{cor:derived}, the code $C_k$ is
block-homogeneous.  By Corollary \ref{cor:derived-invariant}, the codes
$\partial^{k+1}H'$ and $\partial^{k+1}H$ are well defined and equivalent to
$\partial C_k$.  This proves the last assertion by induction on $k$.
\end{proof}

For odd $p$, Corollary \ref{cor:odd-invariant} and Lemma \ref{lem:invariance}
show that two kinds of data are equivalence invariants of a reduced code.
The first one is the number of derivations needed to reach a linear code,
and the second one is the sequence of the dimensions of the kernels of the
successive derived codes.  These data determine the parameters of the code.
This gives a second proof of Theorem \ref{thm:main} for odd $p$, which does
not use induction on the length.  For $p=2$, the corresponding data do not
determine the parameters.  In Example \ref{ex:t7}, the codes $H^{4,0}$ and
$H^{2,0,2}$ both reach a linear code after one derivation, and the kernels of
$H^{4,0}$, $H^{2,0,2}$, $H^{4}$ and $H^{2,0}$ have dimensions $5$, $5$, $4$
and $4$.

\begin{example}\label{ex:t11}
The first derivation does not always separate two reduced codes with the
same dimension of the kernel.  Consider the codes $H^{4,0,0}$ and
$H^{2,0,2,0}$.  Both codes have length $p^{11}$, and their kernels have
dimension $4$ if $p$ is odd, and $5$ if $p=2$.  If $p$ is odd, their derived
codes are $H^{4,1}$ and $H^{2,0,3}$.  Both of them are reduced, and both have
kernel dimension $5$.  Deriving once more, we obtain $H^{5}$, which is
linear, and $H^{2,1}$, which is not.  Hence, $H^{4,0,0}$ and $H^{2,0,2,0}$
are nonequivalent.  If $p=2$, their derived codes are $H^{4,0}$ and
$H^{2,0,2}$, which are the two codes of Example \ref{ex:t7}.  Hence, the
proof of Theorem \ref{thm:main} reduces the comparison of $H^{4,0,0}$ and
$H^{2,0,2,0}$ to that of $H^{4,0}$ and $H^{2,0,2}$, which is settled by
Proposition \ref{prop:exceptional}.
\end{example}

\begin{example}\label{ex:family}
For odd $p$, the descent may have to be iterated an arbitrary number of
times before the dimensions of the kernels separate two codes.  For
$k\geq2$, let $H$ be the $\Z_{p^{k+1}}$-linear GH code with parameters
$(3,0,\dots,0)$, and let $H'$ be the $\Z_{p^{k+2}}$-linear GH code with
parameters $(2,0,0,1,0,\dots,0)$, where the last $k-2$ parameters of $H'$ are
zero.  Both codes have length $p^{3k+2}$ and kernel dimension $3$.  By
Proposition \ref{prop:iterated}, the codes $\partial^iH$ and $\partial^iH'$
are reduced for $1\leq i\leq k-1$.  Their kernels both have dimension $4$ for
$1\leq i\leq k-2$, while those of $\partial^{k-1}H$ and $\partial^{k-1}H'$
have dimensions $4$ and $3$, respectively.  Hence, the dimensions of the
kernels of the successive derived codes of $H$ and $H'$ agree up to the
$(k-2)$th step, and they first differ at the $(k-1)$th one.
\end{example}

\begin{corollary}\label{cor:representatives}
Let $t\geq3$.  Then, the linear code $H^{t+1}$ and the nonlinear reduced
codes of length $p^t$ form a complete system of representatives for the
equivalence classes of the $\Z_{p^s}$-linear GH codes of length $p^t$, with
$s\geq1$.  If $p=2$, then the only linear reduced code of length $2^t$ is
$H^{2,t-3}$, and it is equivalent to $H^{t+1}$.
\end{corollary}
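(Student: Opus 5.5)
We prove the corollary in two parts: every code in the family is equivalent to one of the listed representatives, and no two distinct representatives are equivalent. Throughout, the $\Z_{p^s}$-linear GH codes of length $p^t$ are the codes $H^{t_1,\dots,t_s}$ with $s\geq1$, $t_1\geq1$, and $t$ given by \eqref{eq:length}.

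\emph{Covering.} Let $H=H^{t_1,\dots,t_s}$ have length $p^t$.
\begin{itemize}
\item If $s=1$, then $t_1=t+1$ by \eqref{eq:length}, so $H$ is $H^{t+1}$ itself.
\item If $s\geq2$ and $H$ is linear, Proposition \ref{prop:reduction} shows that $H$ is permutation equivalent to $H^{t+1}$.
\item If $s\geq2$ and $H$ is nonlinear, Proposition \ref{prop:reduction} shows that $H$ is permutation equivalent to a nonlinear reduced code of length $p^t$.
\end{itemize}
So every code lies in the class of one of the listed representatives.

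\emph{Distinctness.} First, $H^{t+1}$ is linear and every nonlinear reduced code is nonlinear. By Lemma \ref{lem:invariance}, linearity is preserved under equivalence, so $H^{t+1}$ is not equivalent to any of them. Second, two nonlinear reduced codes of the same length are equivalent only if their parameters coincide, by Theorem \ref{thm:main}. Hence they are pairwise nonequivalent, and the system of representatives is complete. The hypothesis $t\geq3$ is only needed so that reduced codes can exist (since $t+1\geq st_1\geq 4$); the argument does not otherwise use it.

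\emph{The binary statement.} Let $p=2$ and let $H^{t_1,\dots,t_s}$ be a linear reduced code. By Theorem \ref{thm:linearity} together with $t_1\geq2$, the only possibility is $s=2$ and $t_1=2$. Then \eqref{eq:length} gives $4+t_2=t+1$, so $t_2=t-3$, which is a nonnegative integer since $t\geq3$. The code $H^{2,t-3}$ is therefore the unique linear reduced code of length $2^t$. Since it is linear, Proposition \ref{prop:reduction} shows that it is permutation equivalent to $H^{t+1}$.

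There is no genuine obstacle here, because all the work is done by Theorem \ref{thm:main} and Proposition \ref{prop:reduction}. The one point needing care is the linear binary reduced code $H^{2,t-3}$. It is a reduced code but must be excluded from the representatives, because it lies in the class of $H^{t+1}$. This is exactly why the statement lists only the \emph{nonlinear} reduced codes together with $H^{t+1}$.
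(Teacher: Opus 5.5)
Your proposal is correct and follows the paper's proof essentially step for step. Covering comes from the $s=1$ case and Proposition \ref{prop:reduction}, distinctness from Lemma \ref{lem:invariance} and Theorem \ref{thm:main}, and the binary clause from Theorem \ref{thm:linearity}, \eqref{eq:length} and Proposition \ref{prop:reduction}.
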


\begin{proof}
By \eqref{eq:length}, the only code of length $p^t$ with $s=1$ is $H^{t+1}$.
By Proposition \ref{prop:reduction}, every $\Z_{p^s}$-linear GH code of
length $p^t$ with $s\geq2$ is permutation equivalent to a nonlinear reduced
code of the same length or to $H^{t+1}$.  By Theorem \ref{thm:main}, two
different reduced codes of length $p^t$ are nonequivalent.  Moreover, the code
$H^{t+1}$ is linear.  Hence, by Lemma \ref{lem:invariance}, it is not
equivalent to any nonlinear reduced code.  If $p=2$, then the only linear
reduced code of length $2^t$ is $H^{2,t-3}$, by Theorem \ref{thm:linearity}
and \eqref{eq:length}, and it is permutation equivalent to $H^{t+1}$ by
Proposition \ref{prop:reduction}.
\end{proof}

\begin{corollary}\label{cor:perm-equiv}
Let $H=H^{t_1,\dots,t_s}$ and $H'=H^{t'_1,\dots,t'_{s'}}$ be codes of the same
length $p^t$, with $s,s'\geq2$.  Then, $H$ and $H'$ are equivalent if and only
if they are permutation equivalent.
\end{corollary}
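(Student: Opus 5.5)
The nontrivial direction is that equivalence implies permutation equivalence, since permutation equivalence is a special case of equivalence. The plan is to replace each code by a canonical representative using permutation equivalences only, and then let Theorem \ref{thm:main} force the two representatives to coincide.

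First I will reduce each code separately. By Proposition \ref{prop:reduction}, every code $H^{t_1,\dots,t_s}$ with $s\geq2$ and length $p^t$ falls into one of two cases.
\begin{itemize}
\item If it is linear, it is permutation equivalent to $H^{t+1}$.
\item If it is nonlinear, it is permutation equivalent to a nonlinear reduced code $R$ of length $p^t$.
\end{itemize}
Let $R$ and $R'$ be the representatives obtained in this way for $H$ and $H'$ (in the linear case, $R=H^{t+1}$). Permutation equivalence is an equivalence relation, so it suffices to show that $R$ and $R'$ are permutation equivalent.

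Now assume $H$ and $H'$ are equivalent. Then $R$ and $R'$ are equivalent, by transitivity of equivalence; equivalence is a group action by the affine monomial maps, hence transitive. By Lemma \ref{lem:invariance}, $R$ is linear if and only if $R'$ is linear.
\begin{itemize}
\item If both are linear, then $R=R'=H^{t+1}$, and $H$ and $H'$ are both permutation equivalent to $H^{t+1}$, hence to each other.
\item If both are nonlinear reduced codes of the same length, Theorem \ref{thm:main} gives that their parameters coincide. So $R=R'$ as codes, and again $H$ and $H'$ are permutation equivalent.
\end{itemize}

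The only point needing care is the linear reduced binary code $H^{2,t-3}$. It does not cause trouble here, because Proposition \ref{prop:reduction} already sends every linear code, including this one, to $H^{t+1}$ by a permutation. There is no real obstacle in this argument: all the work is contained in Theorem \ref{thm:main} and Proposition \ref{prop:reduction}.
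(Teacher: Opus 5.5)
Your proposal is correct and follows essentially the same route as the paper. You split into the linear and nonlinear cases via Lemma \ref{lem:invariance}, send linear codes to a common linear representative (you use $H^{t+1}$, the paper uses $H^{1,t-1}$, both supplied by Proposition \ref{prop:reduction}), and in the nonlinear case let Theorem \ref{thm:main} force the reduced representatives from Proposition \ref{prop:reduction} to coincide.
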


\begin{proof}
One implication is trivial.  Conversely, assume that $H$ and $H'$ are
equivalent.  By Lemma \ref{lem:invariance}, they are both linear or both
nonlinear.  In the first case, both of them are permutation equivalent to
$H^{1,t-1}$ by Proposition \ref{prop:reduction}, and hence to each other.
In the second case, Proposition \ref{prop:reduction} gives reduced codes $C$
and $C'$ of length $p^t$ which are permutation equivalent to $H$ and to
$H'$, respectively.  The codes $C$ and $C'$ are equivalent, so they are equal
by Theorem \ref{thm:main}.  Therefore, $H$ and $H'$ are permutation
equivalent.
\end{proof}

By Corollary \ref{cor:perm-equiv}, the notions of equivalence used in
\cite{HadamardZps}, in \cite{EquivZ2s,ZpsEquivalence} and in this paper give
the same classification of the $\Z_{p^s}$-linear GH codes.

\begin{corollary}\label{cor:conjecture}
Let $H=H^{t_1,\dots,t_s}$ be a nonlinear reduced code with $t_s=0$.  Then,
$H$ is not equivalent to any other code $H^{t'_1,\dots,t'_{s'}}$ of the same
length with $s'\geq2$.
\end{corollary}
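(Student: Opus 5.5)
Suppose $H=H^{t_1,\dots,t_s}$ is nonlinear and reduced with $t_s=0$, and suppose $H$ is equivalent to some code $H'=H^{t'_1,\dots,t'_{s'}}$ of the same length $p^t$ with $s'\geq2$. The goal is to show that the parameters of $H'$ coincide with those of $H$. Since $H$ is nonlinear, $H'$ is nonlinear by Lemma \ref{lem:invariance}. Hence, by Lemma \ref{lem:R} (equivalently, the first assertion of Proposition \ref{prop:reduction}), $H'$ is permutation equivalent to a nonlinear reduced code $H^{R^k(t'_1,\dots,t'_{s'})}$ of length $p^t$. Here $k\geq0$ is the smallest integer for which the first entry is at least $2$, and the last parameter of this reduced code is $t'_{s'}+k$.

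The reduced code $H^{R^k(t'_1,\dots,t'_{s'})}$ is then equivalent to $H$. By Theorem \ref{thm:main}, its parameters equal $(t_1,\dots,t_s)$. Comparing last parameters gives $t'_{s'}+k=t_s=0$. Since $t'_{s'}\geq0$ and $k\geq0$, this forces $k=0$ and $t'_{s'}=0$. With $k=0$ we have $R^0(t'_1,\dots,t'_{s'})=(t'_1,\dots,t'_{s'})$, so $s'=s$ and $t'_i=t_i$ for every $i$. Thus $H'$ is the code $H$ itself, which proves the statement.

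The only delicate point is the bookkeeping in Lemma \ref{lem:R}. That lemma guarantees that each application of $R$ increases the last parameter by exactly one, so the last parameter of the reduced representative is $t'_{s'}+k$. This identity is exactly what turns the condition $t_s=0$ into $k=0$. I also need to check that nonlinearity of $H'$ is legitimately inherited from $H$, which Lemma \ref{lem:invariance} provides, so that Lemma \ref{lem:R} applies to $H'$.

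For $p=2$, the hypothesis ``nonlinear'' is what excludes the single linear reduced code $H^{2,t-3}$ when $t=3$. In that case $t_s=0$, yet the code is equivalent to the other linear codes, for example $H^{1,t-1}$ by Proposition \ref{prop:reduction}.
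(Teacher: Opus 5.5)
Your proposal is correct and follows essentially the same route as the paper's proof. Like the paper, you use Lemma \ref{lem:invariance} to get nonlinearity of $H'$, Lemma \ref{lem:R} to pass to a reduced representative with last parameter $t'_{s'}+k$, and Theorem \ref{thm:main} to force $t'_{s'}+k=t_s=0$, hence $k=0$ and $H'=H$.
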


\begin{proof}
Let $H'=H^{t'_1,\dots,t'_{s'}}$ be a code of the same length, with $s'\geq2$,
which is equivalent to $H$.  Then, $H'$ is nonlinear by Lemma
\ref{lem:invariance}.  Let $k$ be the integer of Lemma \ref{lem:R} for $H'$.
By Lemma \ref{lem:R}, the code $H'$ is permutation equivalent to a reduced code
$H''$ whose last parameter is $t'_{s'}+k$.  The codes $H''$ and $H$ are
equivalent, and both of them are reduced.  Hence, their parameters are equal by
Theorem \ref{thm:main}.  In particular, $t'_{s'}+k=t_s=0$, and therefore $k=0$.
This means that $H''=H'$, and so $(t'_1,\dots,t'_{s'})=(t_1,\dots,t_s)$.
\end{proof}

Corollary \ref{cor:conjecture} proves the conjecture of
\cite[Section 3]{ZpsEquivalence} on the codes with $t_1\geq2$ and $t_s=0$, for
every prime $p$ and every $t\geq3$, up to one binary exception described below.
In \cite{ZpsEquivalence}, the conjecture was checked for the nonlinear codes of
length $p^t$ with $t\leq11$ when $p=2$, with $t\leq10$ when $p=3$, and with
$t\leq8$ when $p=5$.  For odd $p$, every reduced code is nonlinear, so the
hypothesis of nonlinearity always holds.  For $p=2$, a reduced code with
$t_s=0$ is linear only if $s=2$, $t_1=2$ and $t=3$, by Theorem
\ref{thm:linearity} and \eqref{eq:length}.  The linear code $H^{2,0}$ of length
$2^3$ is equivalent to $H^{1,2}$ by Proposition \ref{prop:reduction}.  Hence,
it is the only exception to the conjecture.

\section{The number of nonequivalent codes}\label{sec:counting}

Corollary \ref{cor:representatives} reduces the computation of $\cA_{t,p}$ to
a counting problem, which turns out to be a classical one.  A partition of a
positive integer $n$ into $k$ parts is a sequence
$(\lambda_1,\dots,\lambda_k)$ of positive integers with
$\lambda_1\geq\cdots\geq\lambda_k$ and $\lambda_1+\cdots+\lambda_k=n$.  We
denote by $\cP(n)$ the number of partitions of $n$, and by $\cP_s(n)$ the
number of partitions of $n$ into parts of size at most $s$, with the
conventions $\cP(0)=\cP_s(0)=1$ and $\cP_s(n)=0$ for $n<0$.

The next lemma attaches a partition to the parameters of a code.  It only
involves \eqref{eq:length}, which does not depend on $p$.  Recall that
$\kappa_j=t_1+\cdots+t_j$ for $1\leq j\leq s$.

\begin{lemma}\label{lem:partition}
Let $t\geq0$ and $s\geq1$.  Then, the map
$(t_1,\dots,t_s)\mapsto(\kappa_s,\kappa_{s-1},\dots,\kappa_1)$ is a bijection
from the solutions of \eqref{eq:length} in nonnegative integers with
$t_1\geq1$ onto the partitions of $t+1$ into exactly $s$ parts.  Moreover,
$t_1\geq2$ if and only if all the parts are at least $2$.
\end{lemma}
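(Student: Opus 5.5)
The plan is to verify the map directly and then write down its inverse. Let $(t_1,\dots,t_s)$ be a solution of \eqref{eq:length} in nonnegative integers with $t_1\geq1$, and put $\lambda_k=\kappa_{s-k+1}$ for $1\leq k\leq s$. Since $t_j\geq0$, the partial sums satisfy $\kappa_1\leq\kappa_2\leq\cdots\leq\kappa_s$, so $\lambda_1\geq\cdots\geq\lambda_s$. Moreover, $\lambda_s=\kappa_1=t_1\geq1$, so all the parts are positive. The sum of the parts is obtained by exchanging the order of summation:
$$
\sum_{j=1}^{s}\kappa_j=\sum_{j=1}^{s}\sum_{i=1}^{j}t_i=\sum_{i=1}^{s}(s-i+1)t_i=t+1,
$$
by \eqref{eq:length}. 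Hence the image is a partition of $t+1$ into exactly $s$ parts.

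For the inverse, start from a partition $(\lambda_1,\dots,\lambda_s)$ of $t+1$ into exactly $s$ parts. Put $\kappa_j=\lambda_{s-j+1}$, then $t_1=\kappa_1$ and $t_j=\kappa_j-\kappa_{j-1}$ for $2\leq j\leq s$. These differences are nonnegative because the parts are nonincreasing, and $t_1=\lambda_s\geq1$. The same computation, read backwards, shows that \eqref{eq:length} holds. The two constructions are mutually inverse, since the partial sums and the successive differences determine each other. This is the same recovery of the $t_j$ from the $\kappa_j$ already used in Corollary \ref{cor:odd-invariant}. Hence the map is a bijection.

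For the last assertion, the smallest part is $\lambda_s=\kappa_1=t_1$. Since the parts are nonincreasing, all of them are at least $2$ if and only if the smallest one is, that is, if and only if $t_1\geq2$.

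There is no real obstacle. The only steps needing care are the double-sum identity and the boundary case $s=1$, where the partition is the single part $(t_1)=(t+1)$ and everything is immediate. The argument uses only \eqref{eq:length}, so it is independent of $p$, as the statement remarks.
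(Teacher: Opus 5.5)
Your proposal is correct and follows essentially the same route as the paper: monotonicity of the partial sums with $\kappa_1=t_1\geq1$, the exchange of summation giving $\sum_{j}\kappa_j=t+1$, the explicit inverse $t_1=\lambda_s$, $t_j=\lambda_{s-j+1}-\lambda_{s-j+2}$, and the observation that the smallest part is $t_1$. Your explicit checks that the inverse satisfies \eqref{eq:length} by telescoping and that the case $s=1$ is trivial are slightly more detailed than the paper, but the argument is the same.
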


\begin{proof}
Since $t_1\geq1$ and $t_2,\dots,t_s\geq0$, we have
$\kappa_s\geq\cdots\geq\kappa_1=t_1\geq1$.  Moreover,
$$
\sum_{j=1}^{s}\kappa_j=\sum_{j=1}^{s}\sum_{i=1}^{j}t_i
=\sum_{i=1}^{s}(s-i+1)t_i=t+1
$$
by \eqref{eq:length}.  Hence, $(\kappa_s,\dots,\kappa_1)$ is a partition of
$t+1$ into exactly $s$ parts.  Conversely, a partition
$(\lambda_1,\dots,\lambda_s)$ of $t+1$ into $s$ parts comes from the unique
solution given by $t_1=\lambda_s$ and $t_j=\lambda_{s-j+1}-\lambda_{s-j+2}$
for $2\leq j\leq s$.  The smallest part is $\kappa_1=t_1$, which proves the
last assertion.
\end{proof}

For instance, the parameters $(2,0,1,0)$ of a code of length $p^9$ give
$(\kappa_1,\kappa_2,\kappa_3,\kappa_4)=(2,2,3,3)$, and hence the partition
$(3,3,2,2)$ of $10$.  Conversely, the partition $(5,3,2)$ of $10$ comes from
the parameters $(2,1,2)$.

\begin{proposition}\label{prop:X}
Let $t\geq1$ and $s\geq1$.  Then, $X_{t,s,p}=\cP_s(t+1-s)$, and the number
of solutions of \eqref{eq:length} in nonnegative integers with $t_1\geq2$ is
$\cP_s(t+1-2s)$.  Consequently, for $s\geq2$, we have
$\widetilde X_{t,s,p}=\cP_s(t+1-2s)$, except for $p=2$ and $s=2$, where
$\widetilde X_{t,2,2}=\cP_2(t-3)-1$ if $t\geq3$.
\end{proposition}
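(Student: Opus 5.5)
The plan is to count partitions through Lemma \ref{lem:partition}, and then to remove the nonlinear exceptions using Theorem \ref{thm:linearity}.

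\emph{Counting the solutions.} By Lemma \ref{lem:partition}, $X_{t,s,p}$ equals the number of partitions of $t+1$ into exactly $s$ parts. The number of solutions with $t_1\geq2$ equals the number of such partitions with all parts at least $2$. Both counts reduce to $\cP_s$ by two standard bijections. First, subtracting $1$ from each of the $s$ parts maps partitions of $t+1$ into exactly $s$ parts onto partitions of $t+1-s$ into at most $s$ parts, where zero parts are dropped. Conjugation then maps partitions into at most $s$ parts onto partitions into parts of size at most $s$. This gives $X_{t,s,p}=\cP_s(t+1-s)$. Second, subtracting $2$ from each part maps partitions of $t+1$ into exactly $s$ parts, all at least $2$, onto partitions of $t+1-2s$ into at most $s$ parts. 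These are counted by $\cP_s(t+1-2s)$, again by conjugation. The conventions $\cP_s(0)=1$ and $\cP_s(n)=0$ for $n<0$ cover the degenerate cases.

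\emph{Removing the linear codes.} Next I would use Theorem \ref{thm:linearity} to decide which solutions with $t_1\geq2$ give linear codes. If $p$ is odd and $s\geq2$, linearity forces $t_1=1$. So every solution with $t_1\geq2$ gives a nonlinear code, and $\widetilde X_{t,s,p}=\cP_s(t+1-2s)$. If $p=2$ and $s\geq3$, the linear codes again have $t_1=1$, so the same formula holds. If $p=2$ and $s=2$, the code is linear exactly when $t_1\leq2$. Among the solutions with $t_1\geq2$, this excludes only $t_1=2$, which forces $t_2=t-3$ by \eqref{eq:length}. That solution exists precisely when $t\geq3$, so exactly one solution is subtracted and $\widetilde X_{t,2,2}=\cP_2(t-3)-1$.

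\emph{Main obstacle.} The delicate step is bookkeeping rather than any real obstacle. One must check that the bijection $\lambda\mapsto\lambda-\one$ really lands in partitions with at most $s$ parts, since zero parts are dropped, and that it is invertible by padding with zeros. One must also check that the single excluded binary solution $(2,t-3)$ is genuinely counted in $\cP_2(t-3)$ when $t\geq3$, which holds because $t_2=t-3\geq0$.
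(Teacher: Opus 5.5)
Your proposal is correct and follows the paper's proof essentially step by step. It uses Lemma \ref{lem:partition}, then the subtract-$1$ (respectively subtract-$2$) bijection onto partitions into at most $s$ parts followed by conjugation, and finally Theorem \ref{thm:linearity} to see that the only linear code with $s\geq2$ and $t_1\geq2$ is $H^{2,t-3}$ for $p=2$. One small slip: in your opening line, ``nonlinear exceptions'' should read ``linear exceptions''.
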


\begin{proof}
By Lemma \ref{lem:partition}, $X_{t,s,p}$ is the number of partitions of $t+1$
into exactly $s$ parts.  Subtracting $1$ from every part, and deleting the
parts which become zero, is a bijection from these partitions onto the
partitions of $t+1-s$ into at most $s$ parts.  By conjugation
\cite{Andrews}, the latter are as many as the partitions of $t+1-s$ into
parts of size at most $s$.  Hence, $X_{t,s,p}=\cP_s(t+1-s)$.  Similarly, by
Lemma \ref{lem:partition}, the solutions with $t_1\geq2$ correspond to the
partitions of $t+1$ into exactly $s$ parts, all of them at least $2$.
Subtracting $2$ from every part and arguing in the same way, we obtain
$\cP_s(t+1-2s)$ such solutions.  By Theorem \ref{thm:linearity}, all the
corresponding codes are nonlinear, except $H^{2,t-3}$ when $p=2$ and $s=2$.
This proves the last assertion.
\end{proof}

For $t\geq1$ and a partition $\lambda$ of $t+1$ with all parts at least $2$,
we denote by $H_\lambda$ the code $H^{t_1,\dots,t_s}$ of length $p^t$ which
corresponds to $\lambda$ by Lemma \ref{lem:partition}.  The dimension of the
kernel and the derived code of $H_\lambda$ have a simple description in terms
of $\lambda$.  The Young diagram of a partition
$\lambda=(\lambda_1,\dots,\lambda_s)$ is formed by $s$ rows of boxes, aligned
on the left, with $\lambda_i$ boxes in the $i$th row.

\begin{remark}\label{rem:young}
Let $\lambda=(\lambda_1,\dots,\lambda_s)$ be a partition of $t+1$ into
$s\geq2$ parts, all of them at least $2$, and assume that $H_\lambda$ is
nonlinear.  By Lemma \ref{lem:partition}, we have $\lambda_1=\kappa_s=m$.
Hence, by Theorem \ref{thm:kernel}, the kernel of $H_\lambda$ has dimension
$\lambda_1$ if $p$ is odd, and $\lambda_1+1$ if $p=2$.  Moreover, by Corollary
\ref{cor:derived} and Lemma \ref{lem:partition}, the derived code of
$H_\lambda$ is, up to equivalence, the code $H_{\lambda'}$, where
$$
\lambda'=\begin{cases}
(\lambda_2+1,\lambda_3,\dots,\lambda_s) & \text{if $p$ is odd},\\
(\lambda_2,\lambda_3,\dots,\lambda_s) & \text{if } p=2.
\end{cases}
$$
That is, the Young diagram of $\lambda'$ is obtained from that of $\lambda$ by
deleting the first row and, if $p$ is odd, by adding one box to the new first
row.  Figure \ref{fig:young} shows the successive derived codes of the two
codes of Example \ref{ex:t11}.  For odd $p$, the first rows of the successive
diagrams have $\lambda_1,\lambda_2+1,\dots,\lambda_s+1$ boxes.  By Theorem
\ref{thm:kernel} and Proposition \ref{prop:iterated}, these numbers are the
dimensions of the kernels of
$H_\lambda,\partial H_\lambda,\dots,\partial^{s-1}H_\lambda$, and they
determine $\lambda$, as in Corollary \ref{cor:odd-invariant}.  For $p=2$ and
$n\geq4$, the partitions $(n)$ and $(n-2,2)$ give the linear codes $H^{n}$ and
$H^{2,n-4}$, which are equivalent by Proposition \ref{prop:reduction}.  In
particular, the codes $H^{a_1,a_2}$ and $H^{2,b_2,b_3}$ of Proposition
\ref{prop:exceptional} have the equivalent derived codes $H^{a_1}$ and
$H^{2,a_1-4}$, and the descent does not separate them.
\end{remark}

\begin{figure}[!ht]
\centering
\newcommand{\yd}[4]{%
  \begin{scope}[shift={#1}]
  \foreach \len [count=\i from 0] in {#2}{
    \foreach \j in {1,...,\len}{
      \ifnum\i=0
        \ifnum#3=1 \fill[gray!35] (\j*0.3-0.3,-\i*0.3) rectangle ++(0.3,-0.3);\fi
        \ifnum#4=\j \fill[red!35] (\j*0.3-0.3,-\i*0.3) rectangle ++(0.3,-0.3);\fi
      \fi
      \draw (\j*0.3-0.3,-\i*0.3) rectangle ++(0.3,-0.3);
    }
  }
  \end{scope}}
\begin{tikzpicture}[font=\small]
\yd{(0,0)}{4,4,4}{1}{0}
\draw[->] (1.45,-0.45) -- node[above] {$\partial$} (2.15,-0.45);
\yd{(2.3,0)}{5,4}{1}{5}
\draw[->] (4.05,-0.45) -- node[above] {$\partial$} (4.75,-0.45);
\yd{(4.9,0)}{5}{0}{5}
\node at (0.6,-1.3) {$H^{4,0,0}$};
\node at (3.05,-1.3) {$H^{4,1}$};
\node at (5.65,-1.3) {$H^{5}$};
\begin{scope}[yshift=-2.3cm]
\yd{(0,0)}{4,4,2,2}{1}{0}
\draw[->] (1.45,-0.45) -- node[above] {$\partial$} (2.15,-0.45);
\yd{(2.3,0)}{5,2,2}{1}{5}
\draw[->] (4.05,-0.45) -- node[above] {$\partial$} (4.75,-0.45);
\yd{(4.9,0)}{3,2}{0}{3}
\node at (0.6,-1.55) {$H^{2,0,2,0}$};
\node at (3.05,-1.55) {$H^{2,0,3}$};
\node at (5.35,-1.55) {$H^{2,1}$};
\end{scope}
\node at (3.0,-4.45) {(a) $p$ odd};
\begin{scope}[xshift=8.2cm]
\yd{(0,0)}{4,4,4}{1}{0}
\draw[->] (1.45,-0.45) -- node[above] {$\partial$} (2.15,-0.45);
\yd{(2.3,0)}{4,4}{1}{0}
\draw[->] (3.75,-0.45) -- node[above] {$\partial$} (4.45,-0.45);
\yd{(4.6,0)}{4}{0}{0}
\node at (0.6,-1.3) {$H^{4,0,0}$};
\node at (2.9,-1.3) {$H^{4,0}$};
\node at (5.2,-1.3) {$H^{4}$};
\begin{scope}[yshift=-2.3cm]
\yd{(0,0)}{4,4,2,2}{1}{0}
\draw[->] (1.45,-0.45) -- node[above] {$\partial$} (2.15,-0.45);
\yd{(2.3,0)}{4,2,2}{1}{0}
\draw[->] (3.75,-0.45) -- node[above] {$\partial$} (4.45,-0.45);
\yd{(4.6,0)}{2,2}{0}{0}
\node at (0.6,-1.55) {$H^{2,0,2,0}$};
\node at (2.9,-1.55) {$H^{2,0,2}$};
\node at (4.9,-1.55) {$H^{2,0}$};
\end{scope}
\node[rotate=90,font=\large] at (5.05,-1.85) {$\sim$};
\node at (2.9,-4.45) {(b) $p=2$};
\end{scope}
\end{tikzpicture}
\caption{The successive derived codes of $H^{4,0,0}$ and $H^{2,0,2,0}$ on
Young diagrams.  At each step, the gray row is deleted and, for odd $p$, the
red box is added.  Codes joined by $\sim$ are equivalent.}
\label{fig:young}
\end{figure}

\begin{theorem}\label{thm:count}
Let $t\geq3$.  Then, every $\Z_{p^s}$-linear GH code of length $p^t$ with
$s\geq2$ is equivalent to some $H_\lambda$.  If $p$ is odd, the codes
$H_\lambda$ are pairwise nonequivalent.  If $p=2$, two different codes
$H_\lambda$ and $H_\nu$ are equivalent if and only if
$\{\lambda,\nu\}=\{(t+1),(t-1,2)\}$.  Consequently,
\begin{equation}\label{eq:count}
\cA_{t,p}=1+\sum_{s=2}^{\lfloor(t+1)/2\rfloor}\widetilde X_{t,s,p}
=\begin{cases}
\cP(t+1)-\cP(t) & \text{if $p$ is odd},\\
\cP(t+1)-\cP(t)-1 & \text{if } p=2,
\end{cases}
\end{equation}
and the upper bound \eqref{eq:old-bound} is tight.
\end{theorem}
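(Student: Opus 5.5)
The plan is to combine Corollary \ref{cor:representatives} with the partition bijection of Lemma \ref{lem:partition}, and then to evaluate the sum by a standard partition identity.

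\emph{Existence.} Let $H$ be a $\Z_{p^s}$-linear GH code of length $p^t$ with $s\geq2$. By Proposition \ref{prop:reduction}, $H$ is permutation equivalent either to a nonlinear reduced code or to $H^{t+1}$. A reduced code $H^{t_1,\dots,t_s}$ has $t_1\geq2$, so by Lemma \ref{lem:partition} it equals $H_\lambda$ for a partition $\lambda$ of $t+1$ with $s$ parts, all at least $2$. For $s=1$ the parameters $(t+1)$ give the partition $(t+1)$, whose single part is at least $2$ since $t\geq3$. Hence $H^{t+1}=H_{(t+1)}$, and every code is equivalent to some $H_\lambda$.

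\emph{Nonequivalence.} Take $\lambda\neq\nu$. If both have at least two parts, then $H_\lambda$ and $H_\nu$ are distinct reduced codes with distinct parameters, since the map of Lemma \ref{lem:partition} is a bijection for each $s$ and distinct $s$ give distinct numbers of parameters. By Theorem \ref{thm:main} they are nonequivalent. Otherwise one of them, say $\lambda$, is $(t+1)$, so $H_\lambda=H^{t+1}$ is linear. By Corollary \ref{cor:representatives}, $H^{t+1}$ is inequivalent to every nonlinear reduced code. So $H_\lambda\sim H_\nu$ forces $H_\nu$ to be a linear reduced code. For odd $p$ no such code exists. For $p=2$ the only one is $H^{2,t-3}$, whose partition is $(t-1,2)$ by Lemma \ref{lem:partition}, and it is equivalent to $H^{t+1}$. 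This proves the characterization.

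\emph{Counting.} The number of classes is the number $Q(t+1)$ of partitions of $t+1$ with all parts at least $2$, minus $1$ when $p=2$. The classical identity $Q(n)=\cP(n)-\cP(n-1)$ holds because deleting one part equal to $1$ is a bijection from the partitions of $n$ containing a $1$ onto all partitions of $n-1$. For the first equality in \eqref{eq:count}, group the partitions by their number $s$ of parts. The partition $s=1$ contributes the term $1$. For $s\geq2$, partitions with $s$ parts all at least $2$ need $2s\leq t+1$, so $s\leq\lfloor(t+1)/2\rfloor$, and by Proposition \ref{prop:X} each such $s$ contributes $\cP_s(t+1-2s)$. This count equals $\widetilde X_{t,s,p}$ except for $p=2$ and $s=2$, where $\widetilde X_{t,2,2}$ is one smaller. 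That single missing unit is exactly the extra $-1$ for $p=2$, so both sides of \eqref{eq:count} agree in every case. Since the result equals the right-hand side of \eqref{eq:old-bound}, the bound is tight.

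The only delicate point is bookkeeping in the binary case: the subtraction of $1$ must be counted once, through $\widetilde X_{t,2,2}$, and not also through $(t+1)$. One must also confirm that $(t-1,2)$ is a genuine partition with parts at least $2$, which holds because $t\geq3$.
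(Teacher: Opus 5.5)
Your proposal follows the paper's proof step by step: representatives from Proposition \ref{prop:reduction}, separation by Theorem \ref{thm:main} plus invariance of linearity, and deletion of a part equal to $1$. The only real difference is bookkeeping. You get the first equality in \eqref{eq:count} by comparing the formulas of Proposition \ref{prop:X} with the partition count. The paper reads it off directly from Corollary \ref{cor:representatives} and the definition of $\widetilde X_{t,s,p}$ as the number of nonlinear reduced codes with $s$ parameters. Both routes work.

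There is, however, one step you leave out.

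\begin{itemize}
\item The number $\cA_{t,p}$ counts equivalence classes of codes $H^{t_1,\dots,t_s}$ with $s\geq2$. What you actually count are the classes of the codes $H_\lambda$, and one of these, $H_{(t+1)}=H^{t+1}$, has $s=1$.
\item Your existence argument only gives an injection from the former classes into the latter. To conclude that they are equally many, you also need every $H_\lambda$ to be equivalent to some code with $s\geq2$.
\item For reduced $H_\lambda$ this is trivial. For $p=2$ it is implicit in your argument, since the class of $H^{t+1}$ contains $H^{2,t-3}$.
\item For odd $p$, no reduced code is linear, so you must add that $H^{t+1}$ is permutation equivalent to $H^{1,t-1}$. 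This follows from Lemma \ref{lem:linear} or Proposition \ref{prop:reduction}.
\end{itemize}

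Without that sentence, the term $1$ in \eqref{eq:count} is unjustified for odd $p$, and your count could be off by one. The paper states this fact explicitly right after invoking Corollary \ref{cor:representatives}. With it added, your proof is complete.
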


\begin{proof}
By Lemma \ref{lem:partition}, the map $\lambda\mapsto H_\lambda$ is a
bijection from the set of partitions of $t+1$ with all parts at least $2$
onto the set of codes $H^{t_1,\dots,t_s}$ of length $p^t$ with $s\geq1$ and
$t_1\geq2$.  The partition $(t+1)$ corresponds to $H^{t+1}$, and the
partitions into $s\geq2$ parts correspond to the reduced $\Z_{p^s}$-linear GH
codes.

By Corollary \ref{cor:representatives}, every $\Z_{p^s}$-linear GH code of
length $p^t$ with $s\geq2$ is equivalent to $H^{t+1}$ or to a reduced code,
that is, to some $H_\lambda$.  Moreover, $H^{t+1}$ is permutation equivalent to
$H^{1,t-1}$ by Proposition \ref{prop:reduction}.  Hence, every code $H_\lambda$
is equivalent to a code with $s\geq2$.  Now, let $\lambda\neq\nu$ be partitions
such that $H_\lambda$ and $H_\nu$ are equivalent.  If both of them have at
least two parts, then $H_\lambda$ and $H_\nu$ are reduced codes with different
parameters, which contradicts Theorem \ref{thm:main}.  Hence, we may assume
that $\nu=(t+1)$.  Then, $H_\nu=H^{t+1}$ is linear, and so is $H_\lambda$ by
Lemma \ref{lem:invariance}.  Since $\lambda$ has at least two parts,
$H_\lambda$ is a linear reduced code.  If $p$ is odd, this is impossible.  If
$p=2$, then $H_\lambda=H^{2,t-3}$ by Theorem \ref{thm:linearity}, and
$\lambda=(t-1,2)$ by Lemma \ref{lem:partition}.  Conversely, if $p=2$, the
codes $H^{2,t-3}$ and $H^{t+1}$ are linear, and they are equivalent by
Proposition \ref{prop:reduction}.  This proves the first three assertions.

Deleting one part equal to $1$ is a bijection from the partitions of $t+1$
having a part equal to $1$ onto the partitions of $t$.  Hence, $t+1$ has
$\cP(t)$ partitions with a part equal to $1$, and $\cP(t+1)-\cP(t)$ partitions
with all parts at least $2$.  By the first three assertions, $\cA_{t,p}$ is
equal to this number if $p$ is odd, and to this number minus $1$ if $p=2$.
Finally, by Corollary \ref{cor:representatives}, the equivalence classes are
represented by $H^{t+1}$ and by the nonlinear reduced codes of length $p^t$,
one code for each class.  A reduced $\Z_{p^s}$-linear GH code of length $p^t$
corresponds to a partition of $t+1$ into $s$ parts, all of them at least $2$,
and such a partition exists only if $2s\leq t+1$.  Hence,
$\cA_{t,p}=1+\sum_{s=2}^{\lfloor(t+1)/2\rfloor}\widetilde X_{t,s,p}$, which is
the right-hand side of \eqref{eq:old-bound}.
\end{proof}

Figure \ref{fig:classes} shows the equivalence classes of the codes of length
$p^6$.  For odd $p$, there are $\cP(7)-\cP(6)=4$ classes.  For $p=2$, the
classes of $H^{7}$ and $H^{2,3}$ coincide, and there are $3$ classes.

\begin{figure}[!ht]
\centering
\begin{tikzpicture}[font=\footnotesize,
  rep/.style={draw,rounded corners=2pt,inner sep=2pt,fill=white}]
\def\dx{1.95}
\foreach \s in {1,...,7}{\node at ({(\s-1)*\dx},0.55) {$s=\s$};}
\fill[gray!18,rounded corners=6pt] (-0.85,-0.3) rectangle ({6*\dx+0.95},0.3);
\fill[blue!12,rounded corners=6pt] ({\dx-0.85},-1.0) rectangle
  ({4*\dx+0.95},-0.4);
\fill[red!12,rounded corners=6pt] ({\dx-0.85},-1.7) rectangle
  ({2*\dx+0.85},-1.1);
\fill[green!15,rounded corners=6pt] ({2*\dx-0.85},-2.4) rectangle
  ({3*\dx+0.9},-1.8);
\node[rep] at (0,0) {$H^{7}$};
\node at ({1*\dx},0) {$H^{1,5}$};
\node at ({2*\dx},0) {$H^{1,0,4}$};
\node at ({3*\dx},0) {$H^{1,0,0,3}$};
\node at ({4*\dx},0) {$H^{1,0,0,0,2}$};
\node at ({5*\dx},0) {$H^{1,0,0,0,0,1}$};
\node at ({6*\dx},0) {$H^{1,0,0,0,0,0,0}$};
\node[rep] at ({1*\dx},-0.7) {$H^{2,3}$};
\node at ({2*\dx},-0.7) {$H^{1,1,2}$};
\node at ({3*\dx},-0.7) {$H^{1,0,1,1}$};
\node at ({4*\dx},-0.7) {$H^{1,0,0,1,0}$};
\node[rep] at ({1*\dx},-1.4) {$H^{3,1}$};
\node at ({2*\dx},-1.4) {$H^{1,2,0}$};
\node[rep] at ({2*\dx},-2.1) {$H^{2,0,1}$};
\node at ({3*\dx},-2.1) {$H^{1,1,0,0}$};
\node[anchor=west] at ({6*\dx+1.05},0) {$(7)$};
\node[anchor=west] at ({4*\dx+1.05},-0.7) {$(5,2)$};
\node[anchor=west] at ({2*\dx+0.95},-1.4) {$(4,3)$};
\node[anchor=west] at ({3*\dx+1.0},-2.1) {$(3,2,2)$};
\node[font=\small] at ({3*\dx},-2.85) {(a) $p$ odd};
\begin{scope}[yshift=-4.3cm]
\foreach \s in {1,...,7}{\node at ({(\s-1)*\dx},0.55) {$s=\s$};}
\fill[gray!18,rounded corners=6pt] (-0.85,-1.0) rectangle ({6*\dx+0.95},0.3);
\fill[red!12,rounded corners=6pt] ({\dx-0.85},-1.7) rectangle
  ({2*\dx+0.85},-1.1);
\fill[green!15,rounded corners=6pt] ({2*\dx-0.85},-2.4) rectangle
  ({3*\dx+0.9},-1.8);
\node[rep] at (0,0) {$H^{7}$};
\node at ({1*\dx},0) {$H^{1,5}$};
\node at ({2*\dx},0) {$H^{1,0,4}$};
\node at ({3*\dx},0) {$H^{1,0,0,3}$};
\node at ({4*\dx},0) {$H^{1,0,0,0,2}$};
\node at ({5*\dx},0) {$H^{1,0,0,0,0,1}$};
\node at ({6*\dx},0) {$H^{1,0,0,0,0,0,0}$};
\node at ({1*\dx},-0.7) {$H^{2,3}$};
\node at ({2*\dx},-0.7) {$H^{1,1,2}$};
\node at ({3*\dx},-0.7) {$H^{1,0,1,1}$};
\node at ({4*\dx},-0.7) {$H^{1,0,0,1,0}$};
\node[rep] at ({1*\dx},-1.4) {$H^{3,1}$};
\node at ({2*\dx},-1.4) {$H^{1,2,0}$};
\node[rep] at ({2*\dx},-2.1) {$H^{2,0,1}$};
\node at ({3*\dx},-2.1) {$H^{1,1,0,0}$};
\node[anchor=west] at ({6*\dx+1.05},0) {$(7)$};
\node[anchor=west] at ({6*\dx+1.05},-0.7) {$(5,2)$};
\node[anchor=west] at ({2*\dx+0.95},-1.4) {$(4,3)$};
\node[anchor=west] at ({3*\dx+1.0},-2.1) {$(3,2,2)$};
\node[font=\small] at ({3*\dx},-2.85) {(b) $p=2$};
\end{scope}
\end{tikzpicture}
\caption{The equivalence classes of the codes $H^{t_1,\dots,t_s}$ of length
$p^6$, placed according to $s$.  Each shaded region is a class, and the
framed codes are the representatives of Corollary \ref{cor:representatives}.
The partition $\lambda$ of the code $H_\lambda$ of each row is given on the
right.}
\label{fig:classes}
\end{figure}

As recalled in Section \ref{sec:intro}, every binary Hadamard code which is
the Gray map image of a $\Z_{2^s}$-additive code is equivalent to a code
$H^{t_1,\dots,t_s}$.  Hence, for $p=2$, Theorem \ref{thm:count} counts all
these Hadamard codes of length $2^t$ up to equivalence, and not only the
codes $H^{t_1,\dots,t_s}$.  We now fix $s$.  Recall that $X_{t,s,p}$ is the
number of codes $H^{t_1,\dots,t_s}$ of length $p^t$.

\begin{theorem}\label{thm:count-fixed-s}
Let $t\geq3$ and $s\geq2$.  Then, two codes $H^{t_1,\dots,t_s}$ and
$H^{t'_1,\dots,t'_s}$ of length $p^t$ are equivalent if and only if
$t_i=t'_i$ for $1\leq i\leq s$, or $p=2$ and both codes are linear.
Consequently, if $p$ is odd, then
\begin{equation}\label{eq:Ats}
\cA_{t,s,p}=X_{t,s,p}=\cP_s(t+1-s),
\end{equation}
and, if $p=2$, then
\begin{equation}\label{eq:Ats2}
\cA_{t,s,2}=\begin{cases}
X_{t,s,2}-1=\cP_s(t+1-s)-1 & \text{if } 2\leq s\leq t-1,\\
1 & \text{if } s\in\{t,t+1\},\\
0 & \text{if } s\geq t+2.
\end{cases}
\end{equation}
\end{theorem}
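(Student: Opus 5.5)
We reduce the equivalence statement to Theorem \ref{thm:main} through Lemma \ref{lem:R}, and then count with Proposition \ref{prop:X}. Let $H=H^{t_1,\dots,t_s}$ and $H'=H^{t'_1,\dots,t'_s}$ be equivalent codes of length $p^t$ with the same $s\geq2$. By Lemma \ref{lem:invariance}, they are both linear or both nonlinear. If $p$ is odd and both are linear, Theorem \ref{thm:linearity} gives parameters $(1,0,\dots,0,t_s)$ and $(1,0,\dots,0,t'_s)$, and \eqref{eq:length} forces $t_s=t'_s$. If $p=2$ and both are linear, this is the allowed exception, and conversely any two linear codes of the same length are equivalent by Proposition \ref{prop:reduction}.

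Now assume both codes are nonlinear. Lemma \ref{lem:R} gives integers $k,k'\geq0$ such that $H$ is permutation equivalent to the reduced code $H^{R^k(t_1,\dots,t_s)}$ with $s-k$ parameters, and similarly for $H'$. These reduced codes have the same length and are equivalent, so Theorem \ref{thm:main} gives $R^k(t_1,\dots,t_s)=R^{k'}(t'_1,\dots,t'_s)$. Comparing numbers of entries gives $s-k=s-k'$, so $k=k'$. Since $R$ is injective, $R^k$ is injective on the tuples where it is defined, and therefore $(t_1,\dots,t_s)=(t'_1,\dots,t'_s)$. The only delicate point is checking that $R^k$ is applied to tuples of the same length $s$ on both sides, which holds because $s$ is fixed.

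For the count, $X_{t,s,p}=\cP_s(t+1-s)$ by Proposition \ref{prop:X}.
\begin{itemize}
\item If $p$ is odd, each class contains exactly one parameter tuple, so $\cA_{t,s,p}=X_{t,s,p}$.
\item If $p=2$, the classes are singletons except one class containing all linear codes with $s$ parameters. Hence $\cA_{t,s,2}=X_{t,s,2}-L+1$, where $L\geq1$ is the number of linear codes with $s$ parameters, and $\cA_{t,s,2}=X_{t,s,2}$ if $L=0$.
\end{itemize}

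The main work is computing $L$ from Theorem \ref{thm:linearity} and \eqref{eq:length}.
\begin{itemize}
\item For $s=2$, the linear codes are $H^{1,t-1}$ and, if $t\geq3$, $H^{2,t-3}$. So $L=2$ for $t\geq3$, and $\cA_{t,2,2}=X_{t,2,2}-1$.
\item For $s\geq3$, the candidates are $(1,0,\dots,0,t_s)$ with $t_s=t+1-s\geq0$, and $(1,0,\dots,0,1,t_s)$ with $s+2+t_s=t+1$, that is, $t_s=t-1-s\geq0$.
\item For $3\leq s\leq t-1$, both exist, so $L=2$ and $\cA_{t,s,2}=X_{t,s,2}-1$.
\item For $s\in\{t,t+1\}$, only the first exists and $X_{t,s,2}=\cP_s(t+1-s)=1$, so $\cA_{t,s,2}=1$.
\item For $s\geq t+2$, we get $t+1-s<0$, so there are no solutions and $\cA_{t,s,2}=0$.
\end{itemize}
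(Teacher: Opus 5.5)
Your proposal is correct and follows the paper's proof essentially step by step: the linear cases via Theorem \ref{thm:linearity} and Proposition \ref{prop:reduction}, the nonlinear case via Lemma \ref{lem:R}, Theorem \ref{thm:main} and the injectivity of $R^k$, and the count via Proposition \ref{prop:X} together with the number of linear codes for each $s$. The only thing you leave implicit is the trivial direction that equal parameters give the same code.
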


\begin{proof}
If $t_i=t'_i$ for $1\leq i\leq s$, then the two codes coincide.  If $p=2$
and both codes are linear, then they are permutation equivalent to $H^{t+1}$
by Proposition \ref{prop:reduction}.  Conversely, assume that the two codes
are equivalent.  If one of them is linear, then so is the other one by Lemma
\ref{lem:invariance}.  If, moreover, $p$ is odd, then both codes have the
parameters $(1,0,\dots,0,t+1-s)$ by Theorem \ref{thm:linearity} and
\eqref{eq:length}.

Assume now that both codes are nonlinear.  Let $k$ and $k'$ be the integers
of Lemma \ref{lem:R} for the two codes.  By Lemma \ref{lem:R}, the two codes
are permutation equivalent to the reduced codes with parameters
$R^k(t_1,\dots,t_s)$ and $R^{k'}(t'_1,\dots,t'_s)$, which have $s-k$ and
$s-k'$ parameters, respectively.  These two reduced codes are equivalent,
since each of them is permutation equivalent to one of the original codes.  By
Theorem \ref{thm:main}, their parameters are equal.  In particular, $s-k=s-k'$,
that is, $k=k'$.  Since $R$ is injective, so is $R^k$.  Therefore,
$(t_1,\dots,t_s)=(t'_1,\dots,t'_s)$.

We now count.  If $p$ is odd, the first part shows that the $X_{t,s,p}$
codes $H^{t_1,\dots,t_s}$ of length $p^t$ are pairwise nonequivalent.
Hence, $\cA_{t,s,p}=X_{t,s,p}$, and $X_{t,s,p}=\cP_s(t+1-s)$ by Proposition
\ref{prop:X}.  Let $p=2$, and let $\ell$ be the number of linear codes
$H^{t_1,\dots,t_s}$ of length $2^t$.  By the first part,
$\cA_{t,s,2}=X_{t,s,2}-\ell+1$ if $\ell\geq1$, and $\cA_{t,s,2}=X_{t,s,2}$ if
$\ell=0$.  For $s=2$, the linear codes are $H^{1,t-1}$ and $H^{2,t-3}$ by
Theorem \ref{thm:linearity}, so $\ell=2$.  For $s\geq3$, their parameters are
$(1,0,\dots,0,t_s)$ or $(1,0,\dots,0,1,t_s)$, which are different.  By
\eqref{eq:length}, the first one requires $t_s=t+1-s$, and the second one
requires $t_s=t-1-s$.  Hence, $\ell=2$ if $s\leq t-1$, $\ell=1$ if
$s\in\{t,t+1\}$, and $\ell=0$ if $s\geq t+2$.  Finally,
$X_{t,s,2}=\cP_s(t+1-s)$ is equal to $1$ if $s\in\{t,t+1\}$, and to $0$ if
$s\geq t+2$.  This gives \eqref{eq:Ats2}.
\end{proof}

By Theorem \ref{thm:count-fixed-s}, the upper bounds
$\cA_{t,s,p}\leq X_{t,s,p}$ for odd $p$ \cite[Theorem 6]{HadamardZps} and
$\cA_{t,s,2}\leq X_{t,s,2}-1$ \cite{KernelZ2s} are tight for every $t\geq3$
and every $s$ with $2\leq s\leq t-1$.  This proves, for every prime $p$, the
equality conjectured in \cite[Section 7]{HadamardZps}.  For $s=2$, we have
$\cP_2(n)=\lfloor n/2\rfloor+1$ for $n\geq0$.  Hence, \eqref{eq:Ats} and
\eqref{eq:Ats2} give $\cA_{t,2,p}=\lfloor(t+1)/2\rfloor$ for odd $p$ and
$\cA_{t,2,2}=\lfloor(t-1)/2\rfloor$, which are the values obtained in
\cite{HadamardZps} and \cite{Kro:2001:Z4_Had_Perf,PRV06}.  For odd $p$ and
$s\geq t$, \eqref{eq:Ats} gives $\cA_{t,s,p}=1$ for $s\in\{t,t+1\}$ and
$\cA_{t,s,p}=0$ for $s\geq t+2$, in accordance with
\cite[Theorem 5]{HadamardZps}.

Table \ref{tab:Ats} gives the values of $\cA_{t,s,p}$ for odd $p$, obtained
from \eqref{eq:Ats}, and the values of $\cA_{t,p}$ and $\cA_{t,2}$, obtained
from \eqref{eq:count}, for $3\leq t\leq14$.  Empty entries are zero.  For
$p=2$, the values of $\cA_{t,s,2}$ are obtained by subtracting $1$ from the
entries with $2\leq s\leq t-1$, by \eqref{eq:Ats2}.  The last two rows are
not the column sums of the entries above them, since codes with different
values of $s$ may be equivalent.  For $3\leq t\leq11$ if $p=2$, for
$3\leq t\leq10$ if $p=3$, and for $3\leq t\leq8$ if $p=5$, the values in
Table \ref{tab:Ats} agree with the ones obtained in
\cite{KernelZ2s,EquivZ2s,HadamardZps,ZpsEquivalence} by computing the rank
with \textsc{Magma}.  For $p\geq7$, the results of
\cite{HadamardZps,ZpsEquivalence} determine $\cA_{t,p}$ only for $t\leq6$,
where the dimension of the kernel suffices by Example \ref{ex:t7}.

Besides \eqref{eq:old-bound}, two other upper bounds on $\cA_{t,p}$ are
known, namely
$$
\cA_{t,p}\leq1+\sum_{s=2}^{\lfloor(t+1)/2\rfloor}(\cA_{t,s,p}-1)
\quad\text{and}\quad
\cA_{t,p}\leq1+\sum_{s=2}^{t-1}(\cA_{t,s,p}-1),
$$
which are given in \cite{EquivZ2s,ZpsEquivalence} and in
\cite{KernelZ2s,HadamardZps}, respectively.  For $p=2$, the second sum stops
at $s=t-2$ in \cite{KernelZ2s}, which gives the same value, since
$\cA_{t,t-1,2}=1$ by \eqref{eq:Ats2}.  These bounds are far from the exact
value.  For instance, evaluated with \eqref{eq:Ats} and \eqref{eq:Ats2} for
$t=20$, they give $644$ and $772$ for odd $p$, and $635$ and $754$ for
$p=2$, while $\cA_{20,p}=165$ and $\cA_{20,2}=164$.

\begin{table}[!ht]
\small
\setlength{\tabcolsep}{4.5pt}
\renewcommand{\arraystretch}{1.15}
\centering
\caption{The numbers $\cA_{t,s,p}$ for odd $p$, and the numbers $\cA_{t,p}$
and $\cA_{t,2}$, for $3\leq t\leq14$.}
\label{tab:Ats}
\vspace{1.5mm}
\begin{tabular}{|c||c|c|c|c|c|c|c|c|c|c|c|c|}
\hline
$t$ & $3$ & $4$ & $5$ & $6$ & $7$ & $8$ & $9$ & $10$ & $11$ & $12$ & $13$
& $14$\\
\hline\hline
$s=2$  & 2 & 2 & 3 & 3 & 4 & 4 & 5 & 5 & 6 & 6 & 7 & 7\\ \hline
$s=3$  & 1 & 2 & 3 & 4 & 5 & 7 & 8 & 10 & 12 & 14 & 16 & 19\\ \hline
$s=4$  & 1 & 1 & 2 & 3 & 5 & 6 & 9 & 11 & 15 & 18 & 23 & 27\\ \hline
$s=5$  &   & 1 & 1 & 2 & 3 & 5 & 7 & 10 & 13 & 18 & 23 & 30\\ \hline
$s=6$  &   &   & 1 & 1 & 2 & 3 & 5 & 7 & 11 & 14 & 20 & 26\\ \hline
$s=7$  &   &   &   & 1 & 1 & 2 & 3 & 5 & 7 & 11 & 15 & 21\\ \hline
$s=8$  &   &   &   &   & 1 & 1 & 2 & 3 & 5 & 7 & 11 & 15\\ \hline
$s=9$  &   &   &   &   &   & 1 & 1 & 2 & 3 & 5 & 7 & 11\\ \hline
$s=10$ &   &   &   &   &   &   & 1 & 1 & 2 & 3 & 5 & 7\\ \hline
$s=11$ &   &   &   &   &   &   &   & 1 & 1 & 2 & 3 & 5\\ \hline
$s=12$ &   &   &   &   &   &   &   &   & 1 & 1 & 2 & 3\\ \hline
$s=13$ &   &   &   &   &   &   &   &   &   & 1 & 1 & 2\\ \hline
$s=14$ &   &   &   &   &   &   &   &   &   &   & 1 & 1\\ \hline
$s=15$ &   &   &   &   &   &   &   &   &   &   &   & 1\\ \hline\hline
$\cA_{t,p}$, $p$ odd & 2 & 2 & 4 & 4 & 7 & 8 & 12 & 14 & 21 & 24 & 34 & 41\\
\hline
$\cA_{t,2}$ & 1 & 1 & 3 & 3 & 6 & 7 & 11 & 13 & 20 & 23 & 33 & 40\\
\hline
\end{tabular}
\end{table}

Table \ref{tab:t9} illustrates Theorems \ref{thm:main} and \ref{thm:count}
for $t=9$.  There are $\cP(10)-\cP(9)=42-30=12$ partitions of $10$ with all
parts at least $2$.  Hence, $\cA_{9,p}=12$ for odd $p$, and $\cA_{9,2}=11$.
For each of these partitions $\lambda$, the table gives the code
$H=H_\lambda$, the dimension $\kappa$ of its kernel and its derived code
$\partial H$, for odd $p$ and for $p=2$.  A dash means that $H$ is linear.
In brackets, we give the dimension of the kernel of $\partial H$ when
$\partial H$ is nonlinear, and the derived codes without brackets are linear.
For $p=2$, the first two codes are linear, and they are equivalent.  For
odd $p$, the three codes $H^{2,2,0}$, $H^{3,0,1}$ and $H^{2,0,0,2}$ have
kernel dimension $4$.  Their derived codes have kernel dimensions $5$, $4$
and $3$, so the first derivation separates them.  The same happens for the
two codes with kernel dimension $6$ and for the two codes with kernel
dimension $5$, since exactly one of the two derived codes is linear in each
case.  For $p=2$, the same three codes have kernel dimension $5$, and their
derived codes are $H^{2,2}$, which is linear, and $H^{3,0}$ and $H^{2,0,0}$,
whose kernels have dimensions $4$ and $3$.  The pairs formed by $H^{4,2}$ and
$H^{2,0,4}$, and by $H^{5,0}$ and $H^{2,1,2}$, have kernel dimensions $7$ and
$6$, respectively, and their derived codes are linear.  These two pairs are
instances of Proposition \ref{prop:exceptional}.  By \eqref{eq:rank4} and
\eqref{eq:rank8}, the ranks are $13$ and $12$ for the first pair, and $16$
and $14$ for the second one.

\begin{table}[!ht]
\small
\setlength{\tabcolsep}{6pt}
\renewcommand{\arraystretch}{1.15}
\centering
\caption{The codes $H=H_\lambda$ of length $p^{9}$, the dimensions of their
kernels and their derived codes.}
\label{tab:t9}
\vspace{1.5mm}
\begin{tabular}{|c|c|c||c|c||c|c|}
\hline
\multirow{2}{*}{partition} & \multirow{2}{*}{code} & \multirow{2}{*}{$s$}
& \multicolumn{2}{c||}{$p$ odd} & \multicolumn{2}{c|}{$p=2$}\\
\cline{4-7}
& & & $\kappa$ & $\partial H$ & $\kappa$ & $\partial H$\\
\hline\hline
$(10)$ & $H^{10}$ & $1$ & $10$ & -- & $10$ & --\\ \hline
$(8,2)$ & $H^{2,6}$ & $2$ & $8$ & $H^{3}$ & $10$ & --\\ \hline
$(7,3)$ & $H^{3,4}$ & $2$ & $7$ & $H^{4}$ & $8$ & $H^{3}$\\ \hline
$(6,4)$ & $H^{4,2}$ & $2$ & $6$ & $H^{5}$ & $7$ & $H^{4}$\\ \hline
$(5,5)$ & $H^{5,0}$ & $2$ & $5$ & $H^{6}$ & $6$ & $H^{5}$\\ \hline
$(6,2,2)$ & $H^{2,0,4}$ & $3$ & $6$ & $H^{2,1}\,[3]$ & $7$ & $H^{2,0}$\\
\hline
$(5,3,2)$ & $H^{2,1,2}$ & $3$ & $5$ & $H^{2,2}\,[4]$ & $6$ & $H^{2,1}$\\
\hline
$(4,4,2)$ & $H^{2,2,0}$ & $3$ & $4$ & $H^{2,3}\,[5]$ & $5$ & $H^{2,2}$\\
\hline
$(4,3,3)$ & $H^{3,0,1}$ & $3$ & $4$ & $H^{3,1}\,[4]$ & $5$
& $H^{3,0}\,[4]$\\ \hline
$(4,2,2,2)$ & $H^{2,0,0,2}$ & $4$ & $4$ & $H^{2,0,1}\,[3]$ & $5$
& $H^{2,0,0}\,[3]$\\ \hline
$(3,3,2,2)$ & $H^{2,0,1,0}$ & $4$ & $3$ & $H^{2,0,2}\,[4]$ & $4$
& $H^{2,0,1}\,[4]$\\ \hline
$(2,2,2,2,2)$ & $H^{2,0,0,0,0}$ & $5$ & $2$ & $H^{2,0,0,1}\,[3]$ & $3$
& $H^{2,0,0,0}\,[3]$\\ \hline
\end{tabular}
\end{table}

\FloatBarrier

\section{Conclusions and further research}\label{sec:conclusions}

We have obtained the full classification of the $\Z_{p^s}$-linear GH codes for
every prime $p$.  By Theorem \ref{thm:main}, two reduced codes of the same
length are equivalent if and only if their parameters are equal.  By Theorem
\ref{thm:count}, for odd $p$ and $t\geq3$, the equivalence classes of the codes
of length $p^t$ are in bijection with the partitions of $t+1$ with all parts at
least $2$.  For $p=2$, the same holds after identifying the partitions $(t+1)$
and $(t-1,2)$.  Hence, $\cA_{t,p}=\cP(t+1)-\cP(t)$ for odd $p$, and
$\cA_{t,2}=\cP(t+1)-\cP(t)-1$.  For a fixed $s$, Theorem
\ref{thm:count-fixed-s} gives $\cA_{t,s,p}=X_{t,s,p}$ for odd $p$, and
$\cA_{t,s,2}=X_{t,s,2}-1$ for $2\leq s\leq t-1$.

The method is the same for every prime, namely the derived code obtained by
puncturing outside a kernel block, but it behaves differently for $p=2$ and for
odd $p$.  Table \ref{tab:compare} summarizes the differences.  In this table,
$H=H^{t_1,\dots,t_s}=H_\lambda$ is a nonlinear reduced code of length $p^t$,
where $\lambda=(\lambda_1,\dots,\lambda_s)$, and $m=t_1+\cdots+t_s$.  For odd
$p$, the descent can always be iterated until $s$ becomes $1$.  The parameters
are then recovered from the dimensions of the kernels of the successive derived
codes, and no rank computation is needed.  For $p=2$, the known formulas for
the rank of the $\Z_4$-linear and the $\Z_8$-linear Hadamard codes are still
needed in one comparison.

\begin{table}[!ht]
\small
\setlength{\tabcolsep}{6pt}
\renewcommand{\arraystretch}{1.2}
\centering
\caption{The main differences between the cases $p=2$ and $p$ odd.}
\label{tab:compare}
\vspace{1.5mm}
\begin{tabular}{|l|l|l|}
\hline
& $p=2$ & $p$ odd\\
\hline\hline
linear reduced codes & $H^{2,t_2}$ & none\\ \hline
dimension of the kernel of $H$ & $m+1$ & $m$\\ \hline
label of the coordinate $(\zz,h)$ & $(\chi(\zz),\wt_H(h)\bmod 2)$
& $\chi(\zz)$\\ \hline
kernel blocks & each fiber meets two blocks & unions of fibers\\ \hline
permutations used & $\pi_\delta$ and $\theta$ & $\pi_\delta$\\ \hline
derived code $\partial H$ & $H^{t_1,\dots,t_{s-1}}$
& $H^{t_1,\dots,t_{s-2},t_{s-1}+1}$\\ \hline
partition of $\partial H$ & $(\lambda_2,\dots,\lambda_s)$
& $(\lambda_2+1,\lambda_3,\dots,\lambda_s)$\\ \hline
rank needed & for $H^{a_1,a_2}$ and $H^{2,b_2,b_3}$ & never\\ \hline
number $\cA_{t,p}$ of classes, $t\geq3$ & $\cP(t+1)-\cP(t)-1$
& $\cP(t+1)-\cP(t)$\\
\hline
\end{tabular}
\end{table}

We end with three questions for further research.  First, formulas for the
rank of the codes $H^{t_1,\dots,t_s}$ are only known for $p=2$ and
$s\in\{2,3\}$ \cite{PRV06,fernandez2019mathbb}.  General formulas would decide
the conjectures of \cite[Section 7]{HadamardZps} and
\cite[Section 3]{ZpsEquivalence}.  They state that the rank classifies the
codes $H^{t_1,\dots,t_s}$ of a given length for a fixed $s$, and that the rank
and the dimension of the kernel classify all the codes of a given length.
Second, for $p=2$, every Hadamard code which is the Gray map image of a
$\Z_{2^s}$-additive code is equivalent to a code $H^{t_1,\dots,t_s}$
\cite{Krotov:2007,KernelZ2s}.  It is not known whether the same holds for odd
$p$.  A positive answer would extend our classification to every GH code which
is the Gray map image of a $\Z_{p^s}$-additive code.
Finally, the kernel blocks are defined for every code over $\Z_p$ containing
the zero vector, and the derived code for every block-homogeneous one.  It
would be interesting to use them for other families of GH codes obtained from
additive codes, such as the $\Z_p\Z_{p^2}$-linear GH codes
\cite{ZpZp2Construction,ZpZp2Classification}, and to decide which GH codes of
different families are equivalent.

{\small
\bibliographystyle{abbrv}
\bibliography{ref}
}
\end{document}